\documentclass[english]{amsart}
\usepackage{babel}
\usepackage[]{geometry}
\usepackage{caption}

\usepackage{enumerate}

\usepackage{caption}
\usepackage{subcaption}
\usepackage{dsfont}
\usepackage{amsmath,amssymb,amsfonts,latexsym,cancel}
\usepackage{pictexwd}
\usepackage{tikz}
\usepackage[colorlinks, linkcolor=blue, citecolor=blue]{hyperref}           

\usepackage{tikz-cd}
\usepackage{pgfplots}
\pgfplotsset{compat=newest}
\pgfplotsset{plot coordinates/math parser=false}
\newlength\figureheight
\newlength\figurewidth

\usepackage{color}
\usepackage{epsfig}
\usepackage{comment}
\usepackage{booktabs}

\parskip 8pt

\newcommand{\R}{\mathbb{R}}
\newcommand{\N}{\mathbb{N}}
\newcommand{\Z}{\mathbb{Z}}

\DeclareMathOperator{\Cov}{Cov}
\DeclareMathOperator{\Var}{Var}
\DeclareMathOperator{\mse}{{MSE}}
\DeclareMathOperator{\Ge}{Geo}

\def\1{\raisebox{2pt}{\rm{$\chi$}}}

\newcommand{\p}[1]{\left(#1\right)}

\newcommand{\norm}[1]{\left\|#1\right\|}

\newcommand{\abs}[1]{\left| #1 \right|}

\newcommand{\dist}{\operatorname{dist}}

\renewcommand{\P}{\mathbb{P}}

\newcommand{\supp}{\operatorname{supp}}


\theoremstyle{plain}
\newtheorem{definition}{Definition}[section]

\newtheorem{theorem}[definition]{Theorem}
\newtheorem{corollary}[definition]{Corollary}
\newtheorem{lemma}[definition]{Lemma}
\newtheorem{remark}[definition]{Remark}
\newtheorem{assumption}[definition]{Assumption}

\newtheorem{conjecture}[definition]{Conjecture}

\theoremstyle{definition}

\theoremstyle{remark}





\RequirePackage{mathtools}
\DeclarePairedDelimiter{\braces}{\{}{\}}
\DeclarePairedDelimiter{\brackets}{(}{)}
\DeclarePairedDelimiter{\floor}{\lfloor}{\rfloor}
\DeclareMathOperator{\bias}{bias}

\begin{document}

\title[Sample Complexity Analysis of MTD via Markovian and Hard-Core MRA]{Sample Complexity Analysis of Multi-Target Detection via Markovian and Hard-Core Multi-Reference Alignment}

\author{Kweku Abraham}
\address{Kweku Abraham
\newline \indent
{Department of Applied Mathematics and Theoretical Physics \newline \indent
University of Cambridge}
\newline \indent
{Wilberforce Road, CB3 0WA}
\newline\indent
Cambridge, {United Kingdom}}
\email{lkwa2@cam.ac.uk}

\author{Amnon Balanov}
\author{Tamir Bendory}
\address{Amnon Balanov and Tamir Bendory
\newline\indent
School of Electrical and Computer Engineering \newline\indent Tel Aviv University \newline\indent Tel Aviv, Israel}
\email{amnonba15@gmail.com}
\email{bendory@tauex.tau.ac.il}

\author{Carlos Esteve-Yag\"{u}e}
\address{Carlos Esteve-Yag\"{u}e
\newline \indent
{Departamento de Matem\'{a}ticas} 
\newline \indent
{Universidad de Alicante}
\newline \indent
{03690 San Vicente del Raspeig}
\newline \indent
{Alicante, Spain}}
\email{c.esteve@ua.es}

\begin{abstract}

Motivated by single-particle cryo-electron microscopy, we study the sample complexity of the multi-target detection (MTD) problem, in which an unknown signal appears multiple times at unknown locations within a long, noisy observation. We propose a patching scheme that reduces MTD to a non-i.i.d. multi-reference alignment (MRA) model. In the one-dimensional setting, the latent group elements form a Markov chain, and we show that the convergence rate of any estimator matches that of the corresponding i.i.d. MRA model, up to a logarithmic factor in the number of patches. Moreover, for estimators based on empirical averaging, such as the method of moments, the convergence rates are identical in both settings. We further establish an analogous result in two dimensions, where the latent structure arises from an exponentially mixing random field generated by a hard-core placement model. As a consequence, if the signal in the corresponding i.i.d. MRA model is determined by moments up to order $n_{\min}$, then in the low-SNR regime the number of patches required to estimate the signal in the MTD model scales as  $\sigma^{2n_{\min}}$, where $\sigma^2$ denotes the noise variance.
\end{abstract}

\date{\today}

\maketitle

\section{Introduction}

\subsection{Problem formulation.}
\label{sec: 1d MTD formulation intro}
In this work, we analyze the sample complexity of the \emph{multi-target detection} (MTD) problem.
The goal is to recover a signal $X \in \mathbb{R}^L$ that appears multiple times at unknown locations within a noisy observation $Z \in \mathbb{R}^{LM}$~\cite{bendory2019multi}. The observation model is  
\begin{align}
    Z = \sum_{i=1}^{q} S(t_i) * X + \varepsilon, \label{eqn:MTDmodelHomogenous}
\end{align}
where $*$ denotes linear convolution, $q$ is the number of signal occurrences, and $\varepsilon \sim \mathcal{N}(0, \sigma^2 I_{LM})$ is additive Gaussian noise.  
Each unknown location $t_i \in \{0,1,\ldots, L(M-1)\}$ is represented by a one-hot vector $S(t_i) \in \{0,1\}^{L(M-1)+1}$, which shifts $X$ so that its first entry is aligned at position $t_i$ in $Z$. While both the signal $X$ and the set of positions $\{t_i\}_{i=1}^q$ are unknown, 
we treat the positions as nuisance parameters and focus on estimating $X$. 
Throughout the introduction, we describe the one-dimensional MTD model with $X \in \mathbb{R}^L$ and measurement $Z \in \mathbb{R}^{LM}$ for clarity of exposition. However, our main results also apply in a two-dimensional setting, where the signal is $X \in \mathbb{R}^{L \times L}$ and the measurement is $Z \in \mathbb{R}^{ML \times ML}$. This extension is introduced in Section~\ref{sec:2DhomogeneousMTD}. 

At low noise levels, signal occurrences can be reliably detected and aligned, allowing for averaging to suppress noise and improve recovery~\cite{dadon2024detection}. However, as the noise level increases, localizing signal instances becomes increasingly unreliable. In such high-noise regimes, one must estimate the underlying signal without access to positional information (see Figure~\ref{fig:1}(a) for illustration).
This work aims to characterize the sample complexity of the MTD model, namely the observation length ($LM$) needed to recover the underlying signal reliably, in the presence of high noise.

\begin{figure}[t!]
    \centering
    \includegraphics[width=1.0\linewidth]{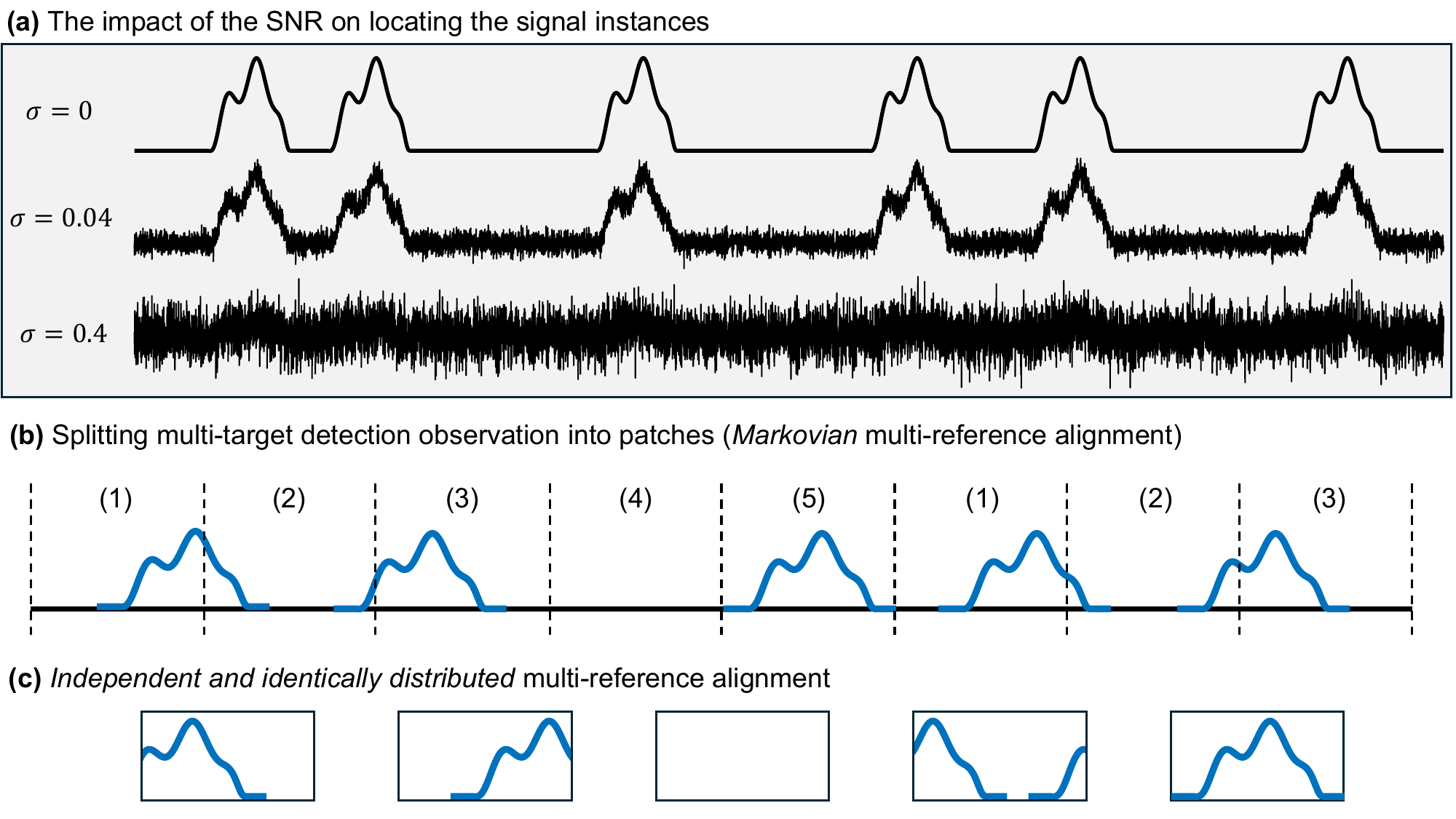}
    \caption{\textbf{The multi-target detection (MTD) model, and its relation to the Markovian and i.i.d.\ multi-reference alignment (MRA) models in one dimension.}
    \textbf{(a)} The MTD model contains signal instances that are embedded at random locations within a noisy observation. In the low-noise regime, it is possible to identify the locations of these instances. However, in the high-noise regime, which is our motivation in this work, the signal locations cannot be reliably recovered.
    \textbf{(b)} To address this challenge, we divide the MTD observation into non-overlapping patches. These patches form what we call the \emph{Markovian MRA} model, where adjacent patches exhibit statistical Markovian dependencies. Each patch can take one of five possible forms: (1) a cropped prefix of a signal instance; (2) two cropped segments, consisting of the suffix of one signal instance followed by the prefix of another; (3) a cropped suffix of a signal instance; (4) an empty patch; or (5) a complete signal instance.
    \textbf{(c)} In contrast, the \emph{i.i.d.\ MRA} model assumes the patches are independent and identically distributed, with no statistical dependency between them. A central contribution of this work is establishing a connection between the Markovian MRA model and the i.i.d.\ MRA model, enabling analysis via known tools and results for the i.i.d.\ case.}
    \label{fig:1}
\end{figure}

\subsection{Motivating application: Cryo-electron microscopy} 
The primary motivation for the MTD model arises in cryo-electron microscopy (cryo-EM)~\cite{nogales2016development, bendory2020single,singer2020computational} and cryo-electron tomography (cryo-ET)~\cite{chen2019complete, schaffer2019cryo, zhang2019advances}. 
Over the past decade, single-particle cryo-EM has become a leading method for determining the spatial structures of biological macromolecules, particularly proteins~\cite{bai2015cryo,glaeser2016good,cheng2018single}. It enables reconstruction of molecular structures in their native state and has achieved atomic-level resolution for numerous proteins. Beyond static structure determination, cryo-EM also offers the potential to capture conformational dynamics, which are critical for understanding protein function~\cite{toader2023methods}.

In a cryo-EM experiment, dozens of micrographs are recorded, each containing multiple 2D projections of an unknown 3D molecular structure at unknown orientations and positions. To avoid radiation damage, imaging is performed under extremely low electron doses, yielding a very low signal-to-noise ratio (SNR), often around $1/100$~\cite{bendory2020single}. The central challenge is to reconstruct the 3D structure from these noisy projections without knowledge of their locations or viewing directions. As the molecular size decreases, the SNR drops further, making particle detection increasingly difficult and, in extreme cases, infeasible~\cite{henderson1995potential}. Under such low-SNR conditions, downstream analyses can be strongly affected, sometimes producing misleading structural inferences~\cite{henderson2013avoiding,balanov2025structure}.

The MTD framework was recently proposed as a computational approach for recovering small structures that lie beyond the capabilities of current methodologies. Although heuristic algorithms have been developed, most notably those based on the method of moments and approximate expectation–maximization~\cite{bendory2023toward,kreymer2022approximate, kreymer2023stochastic,bendory2019multi}, the theoretical and statistical foundations of signal recovery in this regime remain largely unexplored~\cite{balanov2025note,balanov2025orbit}. To be more precise, some statistical properties of cryo-EM have been analyzed in~\cite{bandeira2023estimation,bendory2024sample,bendory2024transversality}, under the assumption that particle locations are known. However, the full problem with \emph{unknown locations} has not been addressed, and the present work constitutes a first step toward this more general and realistic formulation.

The MTD model analyzed in this work~\eqref{eqn:MTDmodelHomogenous} serves as a simplified abstraction of the signal-recovery process underlying the cryo-EM computational problem. Specifically, it isolates the challenge of unknown translations while omitting other sources of uncertainty that arise in more realistic settings, such as rotational variability and tomographic projection effects. This simplification allows us to focus on the core statistical and computational difficulties of recovery in the low-SNR regime under unknown locations of the signal instances, while serving as a first step toward more advanced models that more closely reflect cryo-EM. 
In Section~\ref{sec:outlook}, we discuss how the insights developed here could extend to more general models, including those directly relevant to cryo-EM.

\subsection{Signal recovery using Markovian modeling}
In this work, we develop an estimation framework based on patch extraction and Markovian modeling, illustrated in Figure~\ref{fig:1}(b). The central idea is to recover $X$ by dividing the long MTD measurement $Z$ (defined in~\eqref{eqn:MTDmodelHomogenous}) into $M$ non-overlapping patches, each of length $L$. These patches, denoted by $\{Z_k\}_{k=1}^M \subset \mathbb{R}^L$, are defined by 
\begin{equation}
    \label{patches-1d}
    Z_k[\xi] := Z[L(k-1) + \xi], \qquad \text{for } \xi \in \{0, \ldots, L-1\}, \quad k = 1, \ldots, M.
\end{equation}

\subsubsection{Description of each patch} \label{sec:descriptionOfEachPatch}
Under the non-overlapping assumption, where no two signal instances share overlapping coordinates, each patch has five possible forms: (i) a cropped prefix of a signal instance; (ii) two cropped segments, consisting of the suffix of one signal instance followed by the prefix of another; (iii) a cropped suffix of a signal instance; (iv) an empty patch; or (v) a complete signal instance.
Let $i_k^{(0)}, i_k^{(1)} \in \{0,\dots,L\}$ denote the starting indices of the possible occurrence of $X$ within the patches $Z_{k-1}$ and $Z_k$ respectively, considering $i_k^{(0)} = 0$ in case the patch $Z_{k-1}$ does not contain the starting pixel of $X$, and $i_k^{(1)}=L$ in case the patch $Z_k$ does not contain the starting pixel of $X$.
The non-overlapping condition is equivalent to $i_k^{(1)}\geq i_k^{(0)}$.
We define the padded version of $X$ given by $\tilde{X} = [X, 0_L]$, where $0_L \in \mathbb{R}^L$ denotes the vector of $L$ zeros. Then, each patch $Z_k \in \mathbb{R}^L$ can be expressed as
\begin{align}
    Z_k[\xi] = \tilde{X}\left[\xi - i_k^{(1)} \bmod 2L\right] + \tilde{X}\left[\xi + L - i_k^{(0)} \right] + \varepsilon_k, \qquad \text{for} \quad  \xi \in \{0, \dots, L-1\},\label{eqn:patch}
\end{align}
where $\varepsilon_k\stackrel{i.i.d.}{\sim}\mathcal{N}(0,\sigma^2 I_{L})$, $\xi \in \{0, \dots, L-1\}$. Here, $\tilde{X}\left[\xi - i_k^{(1)} \bmod 2L\right]$ corresponds to the prefix entries of the first (possible) instance of $X$ starting in the patch $Z_k$, while $\tilde{X}\left[\xi + L - i_k^{(0)}\right]$ corresponds to the suffix entries of the (possible) cropped instance of $X$ that started in patch $Z_{k-1}$. The indices $\{(i_k^{(0)}, i_k^{(1)})\}_{k=1}^M$ are random variables determined by the unknown placements of $X$ in $Z$, and they introduce statistical dependencies between patches. See Figure~\ref{fig:1_2} for illustration of the patch construction.

\begin{figure}[t!]
    \centering
    \includegraphics[width=0.9\linewidth]{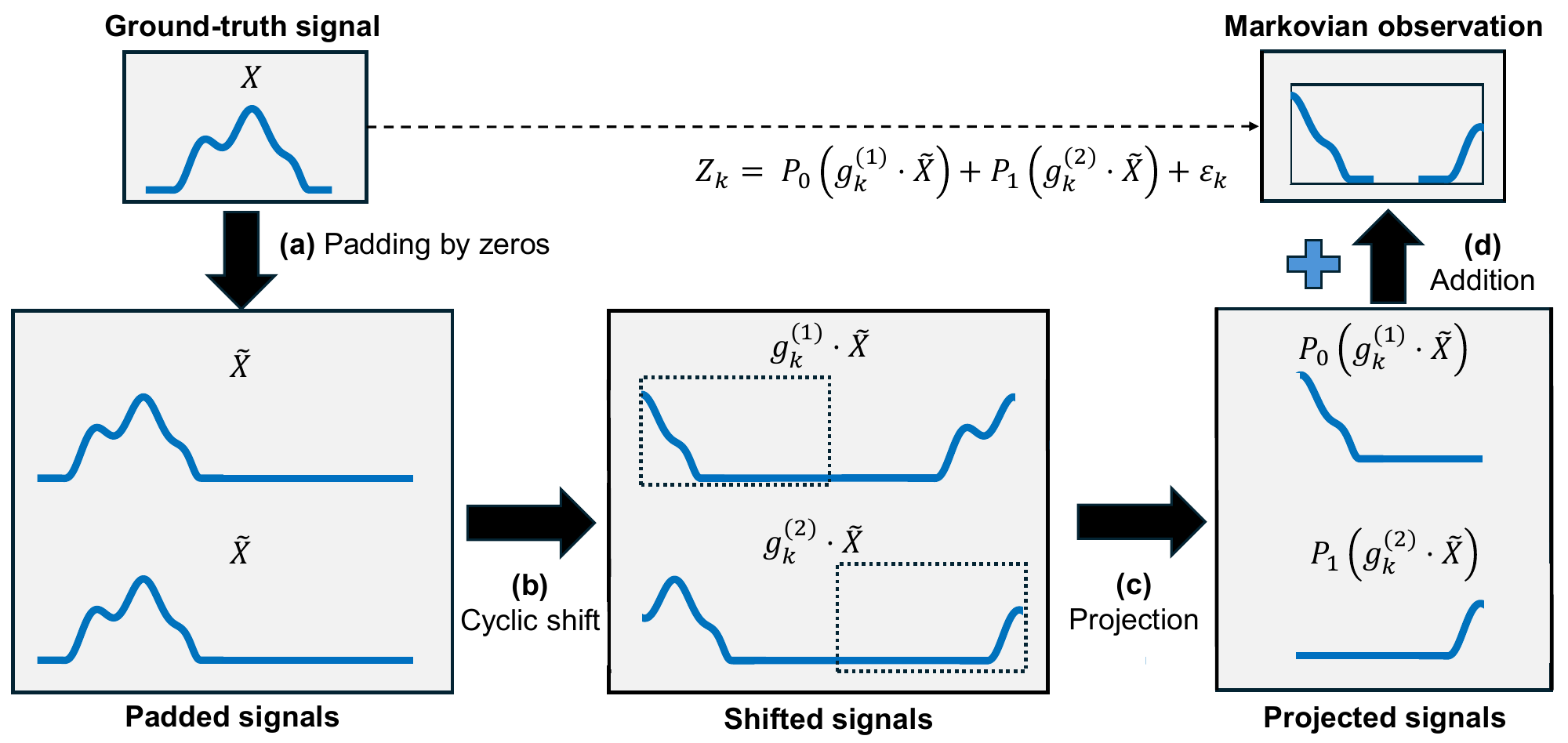}    \caption{\textbf{Illustration of the patch construction in \eqref{eqn:patch}.} 
    Starting from the ground-truth signal $X$, we form its zero-padded version $\tilde X = [X,0_L] \in \mathbb{R}^{2L}$ (panel (a)). Two copies of $\tilde X$ are then cyclically shifted (panel (b)), corresponding to the two possible signal contributions to a patch: a suffix of a signal instance that started in the previous patch, and a prefix of a signal instance that starts in the current patch. 
    Applying the projection operators $P_0$ and $P_1$ extracts the relevant $L$-dimensional portions of these shifted padded signals (panel (c)). Their sum yields the \emph{noiseless} patch of the observed Markovian patch $Z_k$ (panel (d)).}
    \label{fig:1_2}
\end{figure}

\subsubsection{From i.i.d.\ MRA to Markovian MRA via MTD patches}

The structure of each MTD patch $Z_k$ defined in~\eqref{eqn:patch} admits a natural group-theoretic description: a cyclic shift of a zero-padded signal, followed by a linear projection operator. This places the model in direct analogy with the classical i.i.d.\ \emph{multi-reference alignment (MRA)} problem~\cite{bandeira2014multireference,bandeira2023estimation,bendory2017bispectrum,balanov2025expectation}. In the classical \emph{i.i.d.\ MRA model}, the task is to estimate $X \in \mathbb{R}^L$ from $M$ independent noisy observations of the form
\begin{equation}
\label{eqn:classical-MRA}
    Y_k \;=\; P(g_k \cdot X) + \varepsilon_k, 
    \qquad g_k \sim \pi, \ \ \varepsilon_k \sim \mathcal{N}(0,\sigma^2 I),
\end{equation}
where $g_k$ is drawn i.i.d.\ from a distribution $\pi$ over a compact group $G$ and $P$ is a linear projection operator.
More broadly, this model belongs to the general framework of \emph{orbit recovery}, in which one seeks to estimate an unknown signal from noisy observations of transformed copies of that signal under a group action. Classical examples include MRA, where the group consists of cyclic shifts, and cryo-EM, where the group is the rotation group. In this language, the patching construction used here reduces the original MTD problem to an orbit-recovery problem associated with a suitable group action on a zero-padded signal. Indeed, each patch is expressed as a combination of projected transformed copies of the padded signal, placing the estimation problem within the broader orbit-recovery paradigm.
 
 A central insight from the orbit recovery and MRA literature is that if the orbit of $X$ is uniquely determined by its moments up to order $n_{\min}$—the \emph{moment order cutoff}—then reliable recovery in the low-SNR regime requires
\[
    M = \omega(\sigma^{2n_{\min}}),
\]
that is, the number of observations must grow faster than $\sigma^{2n_{\min}}$ as $\sigma^2 \to \infty$~\cite{abbe2018estimation}.

In contrast, \emph{MTD patches are not i.i.d.} . Although each $Z_k$ can be written in the MRA form~\eqref{eqn:classical-MRA}, the latent group elements $\{g_k\}_{k=1}^M$ inherit dependencies from the random placement of signals in $Z$ (i.e., the group elements $g_k$ are associated with the indices $(i_k^{(0)}, i_k^{(1)})$), forming a hidden Markov model~\cite{cappe2005inference}. This motivates the following generalization.

\begin{definition}[Markovian and i.i.d.\ MRA models] \label{def:markovianAndMRAmodels} 
We present two versions of the MRA model:  
    \begin{enumerate} 
        \item \textbf{Markovian MRA model} (MTD model with dependent patches, Figure~\ref{fig:1}(b)): Let $\{ Z_k \}_{k=1}^{M}$ denote the sequence of patches extracted from the MTD observation $Z$, as defined in~\eqref{patches-1d}. These patches are statistically dependent, governed by a hidden Markov model~\cite{cappe2005inference}, induced by the random placement of signals in $Z$. More precisely, $\{ Z_k \}_{k=1}^{M}$ have the form of \eqref{eqn:classical-MRA}, with $\{ g_k \}_{k=1}^M$ being the first $M$ states of a Markov chain.
        \item \textbf{IID MRA model} (model with independent observations, Figure~\ref{fig:1}(c)): $\{ Y_k \}_{k=1}^{M}$ generated according to~\eqref{eqn:classical-MRA}, with the parameters $\{ g_k\}_{k = 1}^{M}$ sampled i.i.d.\ from a fixed distribution. When this distribution is the stationary distribution of the Markovian MRA model in~(i), we refer to it as the \emph{induced MRA model}.  
    \end{enumerate} 
\end{definition}

Markov chains are among the most fundamental dependent-data models. Under mild conditions, they exhibit \emph{exponential mixing}, meaning that samples taken sufficiently far apart become nearly independent and approximately follow the stationary distribution~\cite{levin2017markov,meyn2012markov}. The rate of this convergence is governed by the \emph{absolute spectral gap}, $\Delta = 1 - |\lambda_2|$, where $\lambda_2$ is the second-largest eigenvalue of the transition matrix in absolute value. This mixing property will allow us to transfer sample-complexity results from the well-studied i.i.d.\ MRA to the MTD setting.

Our main results also extend to the two-dimensional setting, motivated by applications in cryo-EM, where the target signal is $X \in \mathbb{R}^{L \times L}$ and the measurement is $Z \in \mathbb{R}^{ML \times ML}$. In this case, the simple Markov description no longer applies. Instead, we adopt a \emph{hard-core model} (see Section~\ref{sec:hard-core-model}) to describe the spatial distribution of signal occurrences, which retains the essential features of the one-dimensional formulation, while capturing the geometry of the 2D setting.

\subsubsection{Main assumptions and relation to prior work} 
\label{sec:mainAssmp}

All prior analyses of the MTD model have focused on the \emph{well-separated} regime, in which signal occurrences are assumed to be separated by at least $L$ coordinates~\cite{bendory2019multi, balanov2025note,balanov2025orbit}. This assumption greatly simplifies the problem, and under this setting, it has been shown that the sample complexity of the MTD model can be upper bounded by $\omega(\sigma^6)$, via explicit recovery from third-order moments~\cite{bendory2019multi}. However, it remains unclear whether such guarantees continue to hold in more general settings that relax the separation condition, namely, when signal occurrences do not overlap but are otherwise arbitrarily positioned.

In this work, we focus on the more challenging \emph{non-overlapping} regime, where signal instances may appear arbitrarily close to each other, subject only to the condition that they do not overlap. For the one-dimensional MTD problem, we model the distances between consecutive signal occurrences as independent draws from a Geometric distribution with parameter $\lambda \in (0,1)$. This assumption is natural for modeling sparsely distributed particles and endows the resulting process with a Markov structure. We emphasize that key parts of our analysis, most notably Theorems~\ref{thm:MTD-1D} and \ref{thm:sampleComplexity1DMTD}, remain valid under different assumptions on the particle location distribution, requiring only that patches sufficiently far apart exert limited statistical influence on each other (see Definition~\ref{def:exponential-mixing}). 
A natural extension to two dimensions is obtained by assuming a \emph{hard-core} spatial model, in which the same principle of weak long-range correlations ensures that distant patches exert only limited statistical influence on each other (Sections~\ref{sec:2DhomogeneousMTD}--\ref{sec:mainResults2D}).

\subsection{Main contributions}

Our main results establish a direct connection between the sample complexity of the MTD model (Definition~\ref{def:markovianAndMRAmodels}(i)) and that of the i.i.d.\ MRA model (Definition~\ref{def:markovianAndMRAmodels}(ii)).
We show that under natural mixing conditions, the sample complexity of MTD is no greater than that of its corresponding i.i.d.\ MRA, up to a possible logarithmic factor.
In particular, under the assumptions in Section \ref{sec:mainAssmp} for the 1D and 2D MTD models, we are able to prove the following:
\begin{enumerate}
    \item \emph{Structure of extracted patches.} 
    In Theorem~\ref{thm:MTD-1D} (for the 1D case) and Theorem~\ref{thm:2d-main-result} (for the 2D case), we show that the patches $\{Z_k\}_{k=1}^{M}$ extracted from the MTD measurement admit an MRA-type representation. 
    In the 1D case, this representation takes the exact form of~\eqref{eqn:classical-MRA} with $X \in \mathbb{R}^L$, and the group elements $\{g_k\}_{k=1}^{M}$ being the first $M$ states of a Markov chain with positive absolute spectral gap. 
    In the 2D case, where the signal is $X \in \mathbb{R}^{L \times L}$, the representation is a natural two-dimensional analogue of~\eqref{eqn:classical-MRA}; here the group elements follow a stochastic process that satisfies a strong mixing property.    
    \item \emph{Sample complexity equivalence.} 
    In Theorem~\ref{thm:sampleComplexity1DMTD} (for the 1D case) and Theorem~\ref{thm:2d-general-result} (for the 2D case), we show that the convergence rate of any estimator under the Markovian MRA model matches the rate under the i.i.d.\ MRA model, up to a logarithmic factor in the number of patches $M$. 
    Moreover, when the estimator is based on empirical averaging, the convergence rates in the two settings match up to a constant factor.
\end{enumerate}

These results imply that, under the stated assumptions, the sample complexity of the MTD model is no greater than that of its associated i.i.d.\ MRA model. In particular, if orbit recovery for the associated i.i.d.\ MRA model is possible from $n_{\min}$-th order moments with sample complexity $\omega(\sigma^{2n_{\min}})$, then the same bound holds for the MTD model. In Section~\ref{sec:empirical_simulation} we provide empirical evidence that moments up to order $3$ suffice for recovery in the induced i.i.d.\ MRA model, supporting $n_{\min}=3$. Hence, when particle location is infeasible (e.g. in the low-SNR regime), recovery from MTD measurements requires $\omega(\sigma^{6})$ patches.

\subsection{Work structure}
The paper is organized as follows. 
Section~\ref{sec:pre} introduces the method of moments together with the feature-map framework and estimator definitions that we use throughout. 
Section~\ref{sec:mainResults1D} develops the one-dimensional MTD model: we formalize the measurement model, split it into patches, and show that the induced sequence of patches forms a Markovian MRA model (Theorem~\ref{thm:MTD-1D}). 
Section~\ref{sec:convergenceToMRA1D} proves that the Markovian MRA model converges to the classical i.i.d.\ MRA model under exponential mixing time and derives the resulting sample-complexity guarantees; we also state the implications of this result for the method of moments (Corollary~\ref{cor:MoM-MTD}). 
Section~\ref{sec:2DhomogeneousMTD} turns to two dimensions, introducing the hard-core placement model, admissibility on the grid graph, and the exponential mixing property for the induced random field. 
Section~\ref{sec:mainResults2D} establishes the corresponding 2D results: an induced (non-i.i.d.) MRA representation with exponential mixing (Theorem~\ref{thm:2d-main-result}) and convergence to the i.i.d.\ MRA setting with matching sample complexity up to logarithmic factors (Theorem~\ref{thm:2d-general-result}). 
Section~\ref{sec:empirical_simulation} presents empirical simulations that (i) demonstrate recovery in the induced i.i.d.\ MRA model from second- and third-order moments and (ii) illustrate the convergence of Markovian patch moments to their i.i.d.\ counterparts, supporting the theoretical scaling laws.
Finally, Section~\ref{sec:outlook} discusses implications, open questions, and extensions, including algebraic-structure variants and connections to cryo-EM and cryo-ET. 
All proofs are deferred to the appendices.

\section{Preliminaries} \label{sec:pre}

The goal of this work is to characterize the sample complexity of the MTD model, defined as the number of patches $M$ needed to reliably recover the signal $X$ in the low-SNR regime. Although the measurement contains $q$ occurrences of $X$, we express complexity in terms of $M$. Under our asymptotic framework, the distribution of inter-signal spacings---and hence the stationary law of the induced Markov chain---remains fixed. Equivalently, the density parameter $\lambda$, which specifies the expected number of signal occurrences per patch, is treated as constant as $M \to \infty$. Thus, $q \asymp \lambda M$, so asymptotics in $M$ translate directly to asymptotics in $q$. The stationary distribution determined by $\lambda$ governs the typical patch structure and thereby  shapes the sample complexity.

\subsection{Method of moments}\label{sec:method-of-moments}
A key statistical property governing the sample complexity in the i.i.d.\ MRA model presented in~\eqref{eqn:classical-MRA} is the moment order cut-off, i.e., the minimal moment order of the observations that uniquely determines the orbit of the signal. We distinguish between two related notions of moments:  

\begin{enumerate}
    \item \emph{Noiseless group-transformed moments.}  
    For a compact group $G$ acting on $\mathbb{R}^L$, let $g \sim \pi$ be a random element drawn from a distribution $\pi$ over $G$, and let $P$ be a fixed projection operator. The $n$-order moment of the noiseless group-transformed signal is  
    \begin{align}
        T_{X,\pi}^{(n)} 
        \;\triangleq\; \mathbb{E}_{g \sim \pi}\!\left[\big(P(g \cdot X)\big)^{\otimes n}\right], 
        \label{eqn:MoMdefinition}
    \end{align}
    which captures the $n$-th order correlations of the projected transformations of $X$, averaged over the group distribution. Here $v^{\otimes n}$ denotes the $n$-fold tensor power of a vector $v \in \mathbb{R}^p$, i.e., the rank-one tensor with entries
    \[
      (v^{\otimes n})_{i_1,\ldots,i_n} \;=\; v_{i_1}\cdots v_{i_n},
    \]
    so that, componentwise,
    \[
      \big(T_{X,\pi}^{(n)}\big)_{i_1,\ldots,i_n}
      \;=\; \mathbb{E}_{g \sim \pi}\!\left[ \big(P(g \cdot X)\big)_{i_1}\cdots \big(P(g \cdot X)\big)_{i_n} \right].
    \]

    \item \emph{Empirical moments.}  
    In practice, we only observe noisy samples $Y_k = P(g_k \cdot X) + \varepsilon_k$, as defined in \eqref{eqn:classical-MRA},
    and thus the relevant population moment is
    \begin{align}
        \label{eqn:def-empirical-moments}
        T_{Y}^{(n)} 
        \;\triangleq\; \mathbb{E}[Y^{\otimes n}]
        \;=\; \mathbb{E}_{g \sim \pi,\, \varepsilon}\!\left[\big(P(g \cdot X)+\varepsilon\big)^{\otimes n}\right],
    \end{align}
    which depends explicitly on the noise level $\sigma$.  
    Given $M$ independent observations $\{Y_k\}_{k=1}^M$, the \emph{empirical $n$-th moment} is defined as
    \[
        \widehat{T}^{(n)} 
        \;\triangleq\; \frac{1}{M} \sum_{k=1}^M (Y_k)^{\otimes n},
    \]
    serving as an estimator of $T_{Y}^{(n)}$.
\end{enumerate}

A fundamental result in the MRA literature is that if the smallest integer $n$ such that $T_{X,\pi}^{(n)}$ uniquely determines the orbit of $X$ is $n_{\min}$, then the sample complexity in the low-SNR regime scales as
\[
    M = \omega(\sigma^{2 n_{\min}}).
\]  
Thus, $n_{\min}$ serves as the key statistical parameter controlling the asymptotic difficulty of orbit recovery.

The minimal moment order $n_{\min}$ (and thus the sample complexity) depends on the group action $G$, the group distribution $\pi$, the projection operator $P$, and the availability of prior information about $X$. For example, when $G = \mathbb{Z}_L$ with a uniform distribution $\pi$ and no prior, recovery requires third-order moments, leading to the scaling $\omega(\sigma^{6})$~\cite{bendory2017bispectrum,bandeira2023estimation}. In contrast, for generic non-uniform $\pi$~\cite{abbe2018multireference,bendory2022dihedral} or when suitable prior information is available~\cite{bendory2024transversality,amir2025stability}, second-order moments suffice, reducing the scaling to $\omega(\sigma^{4})$. In higher-dimensional settings, such as $G=\mathrm{SO}(3)$ with tomographic projections, the uniform case is also believed to require $n_{\min}=3$~\cite{bandeira2023estimation}, though the precise characterization remains an open problem.

\subsection{Feature maps} \label{sec:MoMandTargetMaps}
To state our main results in a general way, it is useful to introduce the notion of a \emph{feature map}. Formally, a feature map is a function $\mathcal{F} : \mathbb{R}^L \to \mathbb{R}^d$ that associates to each signal $X \in \mathbb{R}^L$ a vector $\mathcal{F}(X) \in \mathbb{R}^d$ representing the quantities of interest for the estimation task. For instance, if the goal is to characterize the $n$-th order moment of the observation $Y$ in~\eqref{eqn:classical-MRA} associated with the underlying signal $X$, then the feature map is given by
\begin{align}
\mathcal{F}(X) := \mathbb{E}[Y^{\otimes n}]
        \;=\; \mathbb{E}_{g \sim \pi,\, \varepsilon}\!\left[\big(P(g \cdot X)+\varepsilon\big)^{\otimes n}\right],
\end{align}
which allows recovery of the noiseless moment $T_{X,\pi}^{(n)}$. On the other hand, if the objective is to directly reconstruct $X$ itself, then the natural choice of feature map is the identity transformation $\mathcal{F}(X) \triangleq X$. This general framework allows us to express our results in a way that applies both to moment-based recovery and to direct signal estimation, thereby unifying the treatment of these problems under a common formalism.

\subsubsection{Estimators in Markovian and i.i.d.\ MRA models}
With this formulation of the feature maps, we now extend the definition of the estimators of the Markovian and i.i.d.\ MRA model introduced in Definition \ref{def:markovianAndMRAmodels}. 
Let us denote by
\begin{align}
    \widehat{F}: \bigcup_{M \geq 1} (\mathbb{R}^L)^M \longrightarrow \mathbb{R}^d, \label{eqn:estimatorDef}
\end{align}
any estimator that associates, to any collection of $M$ patches of length $L$, a prediction of the target features $\mathcal{F}(X)$ of the underlying signal $X$. 
Depending on the nature of the patches, we define two estimation models associated to the same map $\widehat{F}(\cdot)$. 

\begin{definition}[Feature estimation in the Markovian and i.i.d.\ MRA models]  
\label{def:markovianAndMRAmodelsFeatures}  
Let $M \in \mathbb{N}$ and let $\widehat{F}(\cdot)$ be the estimator defined in~\eqref{eqn:estimatorDef}, which maps a collection of patches or observations to an approximation of the target features $\mathcal{F}(X) \in \mathbb{R}^d$ of the signal $X$. We define two associated estimators:  
\begin{enumerate}  
    \item \textbf{Markovian MRA model estimation.}  
    Given the Markovian model in Definition~\ref{def:markovianAndMRAmodels}(i) and the sequence of patches $\{ Z_k \}_{k=1}^M$ extracted from the MTD observation $Z$, we define, for any $m \in \mathbb{N}$,  
    \[
        \widehat{F}_{\mathsf{MTD}}(M, m) := \widehat{F}\big(\{ Z_{km} \}_{k=1}^{M'}\big),  
        \qquad M' = \lfloor M/m \rfloor,
    \]  
    which applies $\widehat{F}(\cdot)$ to a subsampled set of patches, retaining only one out of every $m$ patches to mitigate dependence.
    \item \textbf{i.i.d.\ MRA model estimation.}  
    Given the i.i.d.\ MRA model in Definition~\ref{def:markovianAndMRAmodels}(ii) and the sequence of independent observations $\{ Y_k \}_{k=1}^M$, we define  
    \[
        \widehat{F}_{\mathsf{MRA}}(M) := \widehat{F}\big(\{ Y_k \}_{k=1}^M\big),
    \]  
    which applies $\widehat{F}(\cdot)$ directly to the i.i.d.\ observations.  
\end{enumerate}  
\end{definition}

\section{The one-dimensional MTD model: Forming a Markovian MRA model} \label{sec:mainResults1D}

This section introduces the preliminaries, definitions, and assumptions underlying the analysis of the one-dimensional MTD model. Section~\ref{sec:1DhomogenounsMTD} presents a detailed formulation of the model, along with the assumption governing the stochastic placement of the signal $X$ within the MTD observation. In Section~\ref{sec:splitingIntoPatches}, we describe the process of splitting the observation into patches using group actions and projection operators, and in Section~\ref{sec:1dimentionalMTDresults} we establish that the sequence of extracted patches forms a Markovian MRA model.

\subsection{The 1-D MTD model} \label{sec:1DhomogenounsMTD}
In this subsection, we introduce the one-dimensional MTD model and the stochastic assumptions governing the placement of signal occurrences within the measurement.

\subsubsection{Model formulation} 
Let $X\in \R^{L}$ be the signal to be recovered and $Z\in \R^{LM}$ be the measurement introduced in \eqref{eqn:MTDmodelHomogenous}. 
We recall from  \eqref{eqn:MTDmodelHomogenous} that the measurement $Z\in \R^{LM}$ is given by 
\begin{equation*}
Z = \sum_{i=1}^q S(t_i) \ast  X  + \varepsilon,
\end{equation*}
where, for each $i\in \{1, \ldots , q\}$, the location of the $i$-th occurrence of the signal $X$ in the measurement is determined by the parameter $t_i\in \{0, \ldots, L(M-1)\}$. The linear convolution of the one-hot vector $S(t_i)$ with $X$ is given by the vector $S(t_i) \ast X \in \R^{LM}$ defined, for any $\xi \in \{ 0, \ldots , LM - 1 \}$, as
\begin{equation}
    \label{shift measurement}
S(t_i) \ast X [\xi] =
\begin{cases}
    X [\xi - t_i], & \xi\in \{t_i, \ldots , L - 1 + t_i\}, \\
    0, & \text{else}.
\end{cases}
\end{equation}
The parameter $t_i$ denotes the location of the first pixel of the $i$-th occurrence of the signal $X$, specifically the position of $X[0]$. An example of such a measurement is shown in Figure~\ref{fig:1}.

\subsubsection{The distribution of signal appearances}
In the 1D MTD model introduced in \eqref{eqn:MTDmodelHomogenous}, a natural way to model the distribution of signal appearances is to assume that the number of empty pixels between two consecutive signals follows a geometric distribution with parameter $\lambda \in (0,1)$, which controls the density of signal occurrences, as we state in the following assumption. 

\begin{assumption}[Distribution of occurrences in 1D MTD model]
    \label{assumption MTD}
    The number of empty pixels between two signal appearances follows a \emph{zero-indexed} Geometric distribution with parameter $\lambda \in (0,1)$, denoted by $\Ge(\lambda)$, that is, for $k \in \{0,1,2,\ldots\}$, $Pr[\Ge(\lambda) = k] = (1-\lambda)^k \lambda$. The finite set $\{ t_i\}_{i=1}^q$ in \eqref{eqn:MTDmodelHomogenous} is generated as the first $q$ terms of the stochastic process
    \[
        t_1 \sim \Ge(\lambda), 
        \qquad
        t_i - t_{i-1} - L \sim \Ge(\lambda) 
        \quad \forall i = 2,3,\ldots,
    \]
    where 
    \[
        q := \max \left\{ i \in \N \ : \ t_i \leq L(M-1) \right\}.
    \]
\end{assumption}

It is important to note that this assumption ensures that signal occurrences do not overlap within the measurement. The parameter $\lambda$ quantifies the density of signal appearances in the MTD measurement: values of $\lambda$ close to 1 correspond to a high density of $X$ within $Z$, whereas smaller values of $\lambda$ indicate sparser occurrences.

We emphasize that our convergence results are not specific to Assumption \ref{assumption MTD}. Indeed, any assumption on the distribution of occurrences satisfying the exponential mixing property (see  Definition~\ref{def:exponential-mixing}) would provide the same convergence guarantees. For instance, we could have chosen a 1D analogue of the hard-core model presented in Section~\ref{sec:hard-core-model}. However, we chose to consider Assumption~\ref{assumption MTD} for clarity and presentation purposes, since it induces a simple Markov process, and explicit computations of the convergence rate and the associated stationary distribution are possible. In addition, we note that the patching construction does not require exact knowledge of the signal length: it is enough to choose $L$ as an upper bound on the signal support. In that case, however, the non-overlapping requirement is enforced relative to this upper bound, and is therefore potentially more restrictive than if the exact support size were known.

\subsection{Splitting the measurement into patches} \label{sec:splitingIntoPatches}
In order to reduce the MTD model to an MRA model, the first step is to divide the measurement into $M$ patches of length $L$, denoted by $\{Z_k\}_{k = 1}^{M}\subset \R^{L}$, and given by
\begin{equation}
Z_k [\xi] := Z[L(k-1) + \xi], \qquad \text{for all} \ \xi\in \{ 0, \ldots, L-1\}
\ \text{and} \ k = 1, \ldots , M. \label{eqn:sequenceOneDomention}
\end{equation}
Under the geometric distribution assumption, Assumption~\ref{assumption MTD}, each patch may contain up to two shifted, non-overlapping instances of the signal $X \in \mathbb{R}^L$. To represent each patch within the MRA framework (as in \eqref{eqn:patch}), next we describe each patch using cyclic shifts and projection operators.

We embed the signal into a zero-padded vector $\tilde{X} := (X, \, 0_L) \in \R^{2L}$ and consider the action of a finite cyclic group. Consider the cyclic group $ \mathbb{Z}_{2L} $, acting on $ \mathbb{R}^{2L} $ by cyclically shifting the coordinates of any vector in $ \mathbb{R}^{2L} $. Specifically, for any $ \tilde{g} \in \mathbb{Z}_{2L} $, the group action is defined as:
\begin{equation}
    \label{group action 1d simple}
    \tilde{g} \cdot \tilde{X}[\xi] = \tilde{X} [\xi - \tilde{g} \bmod 2L],
    \qquad \forall \xi\in \{0, \ldots , 2L -1\}.
\end{equation}
Since each patch can include up to two instances of the signal, we must account for this by defining the relative shift for each instance and applying the appropriate projection. Let $ \overline{X} $ represent the concatenation of the two instances: 
\begin{align}
    \overline{X} = \p{\tilde{X}, \tilde{X}} \in \mathbb{R}^{2L} \times \mathbb{R}^{2L},
    \quad \text{with} \ \tilde{X} = [X, 0_L]\in \R^{2L}.
    \label{eqn:barXdef}
\end{align}
Now, consider the direct product group $ \mathbb{Z}_{2L} \times \mathbb{Z}_{2L} $, which acts on $ \overline{X} \in \mathbb{R}^{2L} \times \mathbb{R}^{2L} $ by applying the group action of $ \mathbb{Z}_{2L} $ on each component of $ \overline{X} $. Specifically, for any $ g = (g^{(1)}, g^{(2)}) \in \mathbb{Z}_{2L} \times \mathbb{Z}_{2L} $ and any $  (\tilde{X}_0, \tilde{X}_1) \in \mathbb{R}^{2L} \times \mathbb{R}^{2L} $, the action is defined as:
\begin{equation}
    g \cdot (\tilde{X}_0, \tilde{X}_1) = (g^{(1)}, g^{(2)})\cdot (\tilde{X}_0, \tilde{X}_1) = (g^{(1)}\cdot \tilde{X}_0, \, g^{(2)} \cdot \tilde{X}_1). \label{eqn:cyclicGroupActionDef}
\end{equation}

It remains to define the projection operator. We introduce the linear operator $P: \mathbb{R}^{2L} \times \mathbb{R}^{2L} \to \mathbb{R}^L$ defined as
\begin{align}
    P\left(\tilde{X}_0, \tilde{X}_1\right) = P_0 \tilde{X}_0 + P_1 \tilde{X}_1, \qquad \text{for any } (\tilde{X}_0, \tilde{X}_1) \in \mathbb{R}^{2L}\times \mathbb{R}^{2L}, \label{eqn:projecionOperatorDef}
\end{align}
where $P_0, P_1: \mathbb{R}^{2L} \to \mathbb{R}^L$, are linear operators represented by the matrices
\begin{align*}
    P_0 = [0_{L\times L} \ \  I_L]
    \quad \text{and} \quad
    P_1 = [I_L \ \  0_{L\times L}],
\end{align*}
with $0_{L \times L}$ denoting the $L \times L$ zero matrix and $I_L$ the identity matrix in $\mathbb{R}^L$.
In words, $P_0 \tilde{X}_0$ extracts the second half of the vector $\tilde{X}_0$, while $P_1 \tilde{X}_1$ extracts the first half of the vector $\tilde{X}_1$.

We point out that, in view of this construction and the non-overlapping assumption, not all the group elements $g_k = (g_k^{(1)}, g_k^{(2)})\in \Z^{2L} \times \Z^{2L}$ are admissible. In our construction, ignoring the noise variable $\varepsilon_k$, the patches $Z_k$ are modeled by 
$$
Z_k = P (g_k\cdot \overline{X}) = P_0 (g_k^{(1)} \cdot \tilde{X}) + P_1 (g_k^{(2)} \cdot \tilde{X}).
$$
Since, in this construction, the term $ P_0 (g_k^{(1)} \cdot \tilde{X})$ represents the suffix of the possible signal appearance starting in patch $Z_{k-1}$, the variable $g_{k}^{(1)}$ can only take values in $\{0, \ldots, L-1\}$.
On the other hand, since  $P_1 (g_k^{(2)} \cdot \tilde{X})$ represents the prefix of the possible signal starting in the current patch, the variable $g_k^{(2)}$ can only take values in $\{0, \ldots, L\}$. Moreover, the non-overlapping assumption also imposes the condition $g_k^{(1)} \leq g_k^{(2)}$. 
These admissibility conditions on the group elements $g_k$ are inherited from our construction; see Figure~\ref{fig:1_2} for an illustration.
A different choice of the zero-padded signal $\tilde{X}$ and the projection operators $P_0$ and $P_1$ would result in similar but different admissibility conditions on the group elements $g_k^{(1)}$ and $g_k^{(2)}$.
We made this choice for clarity and notational simplicity.

\subsection{From MTD to a Markovian MRA model} \label{sec:1dimentionalMTDresults}
We are now ready to define the MRA model associated with the sequence of patches presented in \eqref{eqn:sequenceOneDomention}.

\begin{definition}[The induced MRA model]
\label{def:inducedMRAmodel}
    Let us consider the finite abelian group $G = \mathbb{Z}_{2L} \times \mathbb{Z}_{2L}$ acting on $\overline{X}$ as described in \eqref{eqn:cyclicGroupActionDef} and \eqref{group action 1d simple}. Let $P: \mathbb{R}^{2L} \times \mathbb{R}^{2L} \to \mathbb{R}^L$ be the linear projection operator defined in \eqref{eqn:projecionOperatorDef}. For a signal $X\in \R^{L}$, let $\overline{X} \in \mathbb{R}^{2L} \times \mathbb{R}^{2L}$ be defined as in \eqref{eqn:barXdef}. We consider the problem of recovering $\overline{X}$ from noisy group-transformed and projected observations of $\overline{X}$ with the form:
    \begin{equation}
    \label{eqn:generalMRA1D}
        Z_k = P(g_k \cdot \overline{X}) + \varepsilon_k, \qquad k \in \{1, 2, \ldots, M\},
    \end{equation}
    where $\{g_k\}_{k = 1}^{M} \subset G$ is a random variable in the product space $G^M = G\times \cdots \times G$, and $\{\varepsilon_k\}_{k = 1}^{M}$ is an i.i.d.\ noise sequence, $\varepsilon_k \sim \mathcal{N}(0, \sigma^2 I)$.
\end{definition}

\begin{remark}[Recovering $X$ vs.\ recovering $\overline{X}$]
\label{rem:recoveringX}
Although Definition~\ref{def:inducedMRAmodel} is stated in terms of the lifted object $\overline{X}$ from~\eqref{eqn:barXdef}, our ultimate goal is to recover the original signal $X \in \mathbb{R}^L$ (or its features $\mathcal{F}(X)$). 
By construction, the measurements determine $\overline{X}$ only up to a cyclic shift, reflecting the group action $g_k \cdot \overline{X}$. However, because $\overline{X}$ has the special form of the signal $X$ padded with zeros, this ambiguity can be resolved in a canonical way: the zero-padding breaks the symmetry and uniquely identifies $X$. 
Hence, recovery of $X$ from the observations in~\eqref{eqn:generalMRA1D} is well-defined and does not require addressing group invariance. This distinction will be important in the analysis of the mean square error (MSE) for estimators based on the induced MRA model of MTD patches.
\end{remark}

In the following theorem (proved in Appendix~\ref{sec:proof-of-thm-MTD-1D}), we show that the extracted patches $\{Z_k\}_{k=1}^M$ can be expressed in the form~\eqref{eqn:generalMRA1D}, where the sequence $\{g_k\}_{k = 1}^{M}$ forms a Markov chain over the group $G$. 

\begin{theorem}[Markov structure of the induced model]
\label{thm:MTD-1D}
Let $L, M \in \mathbb{N}$, and let $X \in \mathbb{R}^L$ be the signal to be recovered. Consider the measurement $Z \in \mathbb{R}^{LM}$ defined by the MTD model in~\eqref{eqn:MTDmodelHomogenous}, where $q \in \mathbb{N}$ denotes the number of signal occurrences and the placement parameters $\{t_i\}_{i=1}^q$ satisfy Assumption~\ref{assumption MTD}. Suppose $Z$ is partitioned into $M$ non-overlapping patches $\{Z_k\}_{k=1}^{M}$ as described in~\eqref{eqn:sequenceOneDomention}.
Then, the following holds,
\begin{enumerate}
    \item The sequence $\{ Z_k \}_{k=1}^{M}$ can be expressed in the form of Definition~\ref{def:inducedMRAmodel}, with associated group elements $\{ g_k \}_{k=1}^{M} = \{ (g_k^{(1)}, g_k^{(2)})\}_{k=1}^{M}$ belonging to the group $G = \mathbb{Z}_{2L} \times \mathbb{Z}_{2L}$.
    \item The sequence $\{g_k\}_{k=1}^{M}$ forms a time-homogeneous Markov chain on $G$.
    \item There exists a unique stationary distribution $\pi$ over $G$ such that for any $1 \leq k \leq m \leq M$ and any $g' \in G$, the total variation distance between the conditional distribution of $g_m$ given $g_k = g'$ and the stationary distribution satisfies
    \begin{equation}
    \label{exponential mixing MC thm}
        \sum_{g \in G} \left| \mathbb{P}(g_m = g \mid g_k = g') - \pi(g) \right| \leq C \cdot \left(1 - (1 - \lambda)^{L-1} \right)^{m - k},
    \end{equation}
for some constant $C > 0$ depending only on $\lambda$ and $L$.
\end{enumerate}

\end{theorem}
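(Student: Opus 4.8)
The plan is to realize the patch sequence $\{g_k\}$ as a deterministic read-out of a single pixel-scale Markov chain and then read all three assertions off that chain. \emph{Step 1 (the MRA form).} First I unfold the definition of the patches. Restricting $Z=\sum_i S(t_i)\ast X+\varepsilon$ to the $k$-th block and using that consecutive signal starts $t_i$ differ by at least $L$ (so at most one signal can start inside a given block), one obtains the two-term shape \eqref{eqn:patch}: a prefix term $\tilde X[\,\xi-i_k^{(1)}\bmod 2L\,]$ from the signal (if any) that starts inside $Z_k$, and a suffix term $\tilde X[\,\xi+L-i_k^{(0)}\,]$ from the signal that started in $Z_{k-1}$ and spills over. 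Comparing with $P_0(g_k^{(1)}\cdot\tilde X)+P_1(g_k^{(2)}\cdot\tilde X)$ from Definition~\ref{def:inducedMRAmodel} identifies $g_k=(g_k^{(1)},g_k^{(2)})=(i_k^{(0)},i_k^{(1)})\in\mathbb{Z}_{2L}\times\mathbb{Z}_{2L}$, proving (i). (The boundary block $Z_M$ is made generic by treating $\{g_k\}_{k=1}^M$ as the first $M$ states of the chain constructed below, i.e.\ by allowing the last signal to be truncated by the measurement window; this changes only the single patch $Z_M$ and hence none of the conditional-law statements.)

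\emph{Step 2 (the Markov structure).} Introduce the pixel-level offset process $\{s_p\}_{p\ge 0}$ on the $(L+1)$-point space $\{0,1,\dots,L-1,\star\}$, where $s_p=o$ records that pixel $p$ carries offset $o$ of a signal and $s_p=\star$ that pixel $p$ is empty. Under Assumption~\ref{assumption MTD} the inter-signal gaps are i.i.d.\ $\Ge(\lambda)$, so $\{s_p\}$ is a time-homogeneous Markov chain: $o\mapsto o+1$ for $o<L-1$, while from $L-1$ and from $\star$ one moves to $0$ with probability $\lambda$ (a new signal, since $\Ge(\lambda)=0$ with probability $\lambda$) and to $\star$ with probability $1-\lambda$; the prescribed law of $t_1$ fixes its initial distribution. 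Two observations then give (ii): (a) $g_k$ is a deterministic function of the block $(s_{L(k-1)},\dots,s_{Lk-1})$ — the first coordinate is read off $s_{L(k-1)}$, the second off the position of the unique $0$ in the block (or $L$ if none); and (b) the crucial point, the block's terminal state is itself a deterministic function of $g_k$, namely $e_k:=s_{Lk-1}=\psi(i_k^{(1)})$ with $\psi(j)=L-1-j$ for $j\le L-1$ and $\psi(L)=\star$. Since the Markov property of $\{s_p\}$ makes the next block depend on the past only through $s_{Lk-1}$, and (a) and (b) turn this dependence into a function of $g_k$, the sequence $\{g_k\}$ is Markov with a $k$-independent kernel.

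\emph{Step 3 (stationarity and the sharp rate).} The terminal-state process $\{e_k=s_{Lk-1}\}$ is a sub-sampling of $\{s_p\}$, hence a time-homogeneous Markov chain on $\{0,\dots,L-1,\star\}$, and tracing $\{s_p\}$ across one block gives its kernel explicitly: from a numeric state $e$, $e_{k+1}=e-G$ if $G\le e$ and $e_{k+1}=\star$ otherwise; from $\star$, $e_{k+1}=L-1-G$ if $G\le L-1$ and $e_{k+1}=\star$ otherwise, with $G\sim\Ge(\lambda)$ fresh. This chain is irreducible and aperiodic ($\star$ reaches every state and has a self-loop), so it has a unique stationary law $\pi_e$. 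The rate is obtained from a one-step Doeblin bound: summing $\min_e P_e(e,e')$ over $e'$, the only surviving contributions are $e'=\star$ (minimum $(1-\lambda)^L$, attained at $e\in\{L-1,\star\}$) and $e'=0$ (minimum $(1-\lambda)^{L-1}\lambda$), so the Doeblin constant is exactly $(1-\lambda)^L+(1-\lambda)^{L-1}\lambda=(1-\lambda)^{L-1}$, and the standard coupling yields $\|\mathbb{P}(e_m\in\cdot\mid e_k)-\pi_e\|_{\mathrm{TV}}\le(1-(1-\lambda)^{L-1})^{m-k}$. To pass this to $\{g_k\}$, note $g_m=(c(e_{m-1}),\psi^{-1}(e_m))$ is a deterministic function of the pair $(e_{m-1},e_m)$ (where $c$ encodes the carried-in suffix), that the conditional law of that pair given $g_k$ depends only on $e_k=\psi(i_k^{(1)})$, and that appending one transition leaves total-variation distance unchanged; the data-processing inequality then gives $\|\mathbb{P}(g_m\in\cdot\mid g_k)-\pi\|_{\mathrm{TV}}\le(1-(1-\lambda)^{L-1})^{m-k-1}$, with $\pi$ the pushforward of the stationary pair law of $\{e_k\}$. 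Rearranging, and doubling to match the $\ell^1$ normalization in \eqref{exponential mixing MC thm}, gives the bound with $C=2/(1-(1-\lambda)^{L-1})$; uniqueness of $\pi$ for $\{g_k\}$ follows because stationarity and irreducibility push forward from $\{e_k\}$.

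The main obstacle, I expect, is Step 3's requirement that the rate be \emph{exactly} $1-(1-\lambda)^{L-1}$. This forces one to run the auxiliary chain on $L+1$ states — keeping $L-1$ and $\star$ distinct even though their transition rows coincide, for otherwise $g_m$ is no longer recoverable from $(e_{m-1},e_m)$ — to evaluate the Doeblin minimum exactly, and to absorb into the constant $C$ the unavoidable one-step lag incurred when translating the $e$-chain bound back to $\{g_k\}$; this lag is precisely why the statement carries a general constant $C$ rather than $C=1$. A secondary technical check is that $\{s_p\}$, with its initial law and its boundary block, faithfully encodes the placement law of Assumption~\ref{assumption MTD}.
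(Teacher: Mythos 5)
Your proof is correct and follows essentially the same route as the paper: your terminal-offset chain $\{e_k\}$ is exactly the paper's auxiliary chain $\{\omega_k\}$ on $\{0,\ldots,L\}$ under the relabeling $e = L-1-\omega$ (with $\star \leftrightarrow L$), with the same deterministic read-out of $g_k$ from the consecutive pair, the same Doeblin constant $(1-\lambda)^{L-1}\lambda + (1-\lambda)^{L} = (1-\lambda)^{L-1}$, and the same transfer to $\{g_k\}$ via the pair law and its pushforward $\pi$. The only cosmetic differences are your pixel-level derivation of the patch kernel and your explicit remark about the boundary patch $Z_M$, which the paper handles implicitly by defining $\{g_k\}_{k=1}^M$ as the first $M$ states of the constructed chain.
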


\subsubsection{The stationary distribution of the Markov chain }

In Lemma \ref{lem: stationary distribution MC}, we prove that the stationary distribution $\pi$ associated to the Markov chain $\{ g_k \}_{k=1}^{M}$ constructed in the proof of Theorem~\ref{thm:MTD-1D} is given explicitly by
\begin{align} \label{eqn:stationaryDistrbution}
\pi (x,y) :=
\begin{cases}
    \dfrac{(1-\lambda)^L}{1 + (L-1)\lambda} & \text{if} \ (x,y)
 = (0,L) \\
 \dfrac{\lambda (1-\lambda)^{L - x}}{1+(L-1)\lambda} & \text{if} \ 0 < x <L \ \text{and} \ y = L \\
 \dfrac{\lambda (1-\lambda)^{y}}{1+ (L-1)\lambda} & \text{if} \ x = 0 \ \text{and} \ 0 \leq  y < L \\
 \dfrac{\lambda^2 (1 - \lambda)^{y - x}}{1 + (L-1)\lambda} & \text{if} \  0 < x \leq y < L \\
 0 & \text{else,}
 \end{cases}
\end{align}
for any $g=(x,y)\in G = \Z_{2L\times 2L}$.

Note that, in the definition of $\pi$, the five cases correspond, respectively, to (see Figure~\ref{fig:1}(b) for illustration):
\begin{enumerate}
    \item the probability of a patch not containing any part of the signal $X$;
    \item the case in which the patch contains only a part of the signal $X$ remaining from the previous patch;
    \item the case in which the patch contains only a part of the signal starting in the current patch;
    \item the case in which the patch contains two parts of the signal;
    \item in the last case, if $x>y$, then two instances of the signal $X$ overlap, and thus it has probability $0$ due to Assumption \ref{assumption MTD}. The cases $x\geq L$ and $y>L$ have also probability $0$ due to the construction of the Markov chain $g_k = (g_k^{(1)}, g_k^{(2)})$ in the proof of Theorem \ref{thm:MTD-1D} (see the choice of $(g_k^{(1)},g_k^{(2)})$ in \eqref{g1 def} and \eqref{gk def}), which only takes values in $\{ 0,1, \ldots, L-1 \} \times \{ 0, 1, \ldots, L \}$.
\end{enumerate}

Theorem~\ref{thm:MTD-1D} together with the stationary distribution~\eqref{eqn:stationaryDistrbution} reveals a trade-off governed by the density parameter $\lambda$. Our reduction of the MTD problem to a Markovian MRA model relies on two key ingredients: first, the convergence of the dependent patch process to its associated stationary i.i.d.\ model; and second, the ability to recover the signal from that stationary model. On the one hand, as follows from the mixing estimate~\eqref{exponential mixing MC thm}, smaller values of $\lambda$ lead to faster convergence to stationarity; see Figure~\ref{fig:3} for an empirical illustration. On the other hand, $\lambda$ also affects the sample complexity of the stationary problem itself. Indeed, by \cite[Theorem II.2]{abbe2018estimation} and the form of the stationary distribution $\pi$ in~\eqref{eqn:stationaryDistrbution}, when $\lambda$ is small, patches contain signal less frequently, so the induced random variable $Y=P(g\cdot\overline{X})$, with $g\sim\pi$, has weaker low-order moments. As a result, more patches are needed to achieve a given estimation accuracy. Thus, although smaller $\lambda$ improves the mixing properties of the underlying Markov chain, larger $\lambda$ is generally more favorable for signal recovery because it increases the signal density in the stationary model. In other words, the main information-theoretic bottleneck is not estimating the stationary distribution $\pi$, but recovering $X$ from the induced stationary observations. We emphasize, however, that in practical applications $\lambda$ is typically determined by the data-generation mechanism and is not under direct control.

\section{The one-dimensional MTD model: Convergence to i.i.d.\ MRA and sample complexity} \label{sec:convergenceToMRA1D}

In this section, we show that the Markovian MRA model introduced in Theorem~\ref{thm:MTD-1D} converges to the classical i.i.d.\ MRA model in the limit as the number of patches tends to infinity. Building on this connection, we derive the sample complexity of the Markovian MRA model and compare it to that of its i.i.d.\ counterpart. Specifically, we use Theorem~\ref{thm:MTD-1D} to formally relate the distribution of patches extracted from the MTD measurement to the stationary distribution of the i.i.d.\ MRA model, in which observations are sampled independently.

As discussed in Section~\ref{sec:MoMandTargetMaps}, a common objective is to estimate specific features of the signal $X$ (e.g., its moments), denoted by $\mathcal{F}(X) \in \mathbb{R}^d$. Let $\widehat{F}$ be the estimator defined in~\eqref{eqn:estimatorDef}, which maps any collection of $M$ patches $\{Z_k\}_{k=1}^{M}$, each of the form~\eqref{eqn:generalMRA1D}, to an estimate of $\mathcal{F}(X)$. Recalling Definition~\ref{def:markovianAndMRAmodelsFeatures}, we define $\widehat{F}_{\mathsf{MTD}}(M, m)$ as the estimator obtained by applying $\widehat{F}$ to the subsampled sequence $\{Z_{km}\}_{k=1}^{M'}$, where $m \in \mathbb{N}$ denotes the spacing between selected patches and $M' = \lfloor M/m \rfloor$ ensures the correct index range. Likewise, recall the definition from Definition~\ref{def:markovianAndMRAmodelsFeatures} of $\widehat{F}_{\mathsf{MRA}}(M)$ as the same function applied to $M$ i.i.d.\ observations $\{Y_k\}_{k=1}^{M}$ drawn from the stationary distribution associated with the Markovian model.

Our objective is to compare the MSE of the Markovian estimator $\widehat{F}_{\mathsf{MTD}}(M, m)$ to that of the i.i.d.\ estimator $\widehat{F}_{\mathsf{MRA}}(M)$ in the asymptotic regime where $M \to \infty$.
To make this comparison precise, we now formalize the notion of a convergence rate for the estimator $\widehat{F}(\cdot)$ in the i.i.d.\ MRA setting. This provides a rigorous characterization of how the estimation error decays as a function of the sample size and noise level.

    \begin{definition}[Convergence rate] \label{def:convergenceRate1D}
    Let $M \in \mathbb{N}$, and let $\{Y_k\}_{k=1}^{M}$ be i.i.d.\ observations of the form~\eqref{eqn:generalMRA1D}, corresponding to i.i.d.\ samples $\{\tilde{g}_k\}_{k=1}^{M}$ drawn from a distribution $\pi$ over a compact group $G$, corrupted by i.i.d.\ Gaussian noise $\{\varepsilon_k\}_{k=1}^{M}$ with variance $\sigma^2$ and zero mean. Let $\widehat{F}$ be an estimator as defined in~\eqref{eqn:estimatorDef}, mapping observations to an estimate of the features $\mathcal{F}(X) \in \mathbb{R}^d$. Define the i.i.d.\ estimator as $\widehat{F}_{\mathsf{MRA}}(M) := \widehat{F}(\{Y_k\}_{k=1}^{M})$. Given a function $a:(0, \infty) \to (0,\infty)$, we say that $\widehat{F}_{\mathsf{MRA}}(M)$ has convergence rate $a(\sigma)$ if
        \[
        \frac{ \mathbb{E} \left[ \left\| \widehat{F}_{\mathsf{MRA}} (M) - \mathcal{F} (X) \right\|^2 \right] }{a(\sigma)/M} \xrightarrow[M \to \infty]{} 1.
    \]
\end{definition}

Recall that, following~\cite{abbe2018estimation}, the convergence rate $a(\sigma)$ is polynomial in the noise level $\sigma$. In standard orbit-recovery models, the optimal estimation rate is of the form $a(\sigma)=\sigma^{2n_{\min}}$, where $n_{\min}$ is the order of the lowest moment that determines the orbit of the signal under the group action.

\subsection{Convergence to the IID MRA model.}
We are now in a position to present the main theorem of the 1D MTD model (proved in Appendix~\ref{sec:proof-of-sample-complexity}), which establishes the sample complexity relationship between MTD patches and the associated MRA model with i.i.d.\ observations.

\begin{theorem}[Convergence to MRA model]
    \label{thm:sampleComplexity1DMTD}
    Let the assumptions of Theorem~\ref{thm:MTD-1D} hold. Consider two observation models:
    \begin{enumerate}
        \item Let $\{Z_k\}_{k = 1}^{M}$ denote the sequence of patches extracted from the MTD measurement $Z$ as in~\eqref{eqn:sequenceOneDomention}, which, by Theorem~\ref{thm:MTD-1D}, satisfy the Markovian MRA model~\eqref{eqn:generalMRA1D} with group elements $\{g_k\}_{k = 1}^{M}$ forming a Markov chain over a compact group $G$.
        \item Let $\{Y_k\}_{k = 1}^{M}$ be i.i.d.\ samples of the form~\eqref{eqn:generalMRA1D}, with group elements $\{\tilde{g}_k\}_{k = 1}^{M}$ drawn i.i.d.\ from the stationary distribution $\pi$ associated to the Markov chain in Theorem~\ref{thm:MTD-1D}.
    \end{enumerate}

    Let $\mathcal{F}: \R^L \to \R^d$ be a bounded continuous feature map, and let $\widehat{F}: \bigcup_{M\geq 1} (\R^{L})^M \to \R^d$ be a bounded continuous estimator map such that the estimation associated to i.i.d.\ measurements, denoted by $\widehat{F}_{\mathsf{MRA}} (M) := \widehat{F} \left[ \{ Y_k \}_{k=1}^{M}\right]$ has convergence rate $a(\sigma)$ as per Definition \ref{def:convergenceRate1D}, for some function $a:(0,\infty)\to (0,\infty)$. 
    Then the following statements hold:
    \begin{enumerate}
        \item There exists $c_0>0$ such that, for any $m=\lfloor c\log(M) \rfloor$ with $c>c_0$, and taking $M' = \lfloor M/m\rfloor$, we have
        \[
            \frac{\mathbb{E} \left[ \left\| \widehat{F} \left[ \{{Z}_{km}\}_{k=1}^{M'} \right] - \mathcal{F}(X) \right\|^2 \right]}{ a(\sigma)/M'} \longrightarrow 1, \quad \text{as } M' \to \infty.
        \]
        In particular, since $M'\sim \frac{M}{c\log (M)},$ the sample complexity in the MTD model matches that of the i.i.d.\ MRA model up to a logarithmic factor.

        \item Suppose the estimator is of the form 
        $$
        \widehat{F} \left[ \{{Y}_k\}_{k=1}^{M} \right] = \frac{1}{M} \sum_{k=1}^{M} F ({Y}_k),
        $$ 
        for some bounded continuous function $F: \R^L\to \R^d$, and assume $a(\sigma) \geq \tau$ for some constant $\tau > 0$. Then,
        \[
            \limsup_{M \to \infty} \frac{\mathbb{E} \left[\left\| \widehat{F}\left[ \{Z_{k}\}_{k=1}^{M} \right] - \mathcal{F}(X) \right\|^2 \right]}{a(\sigma)/M} \leq K,
        \]
        where $K$ is a constant depending on the mixing rate from Theorem \ref{thm:MTD-1D} (i.e., on $\lambda$ and $L$), the norm $\|F\|_\infty$, the lower bound $\tau$, and the dimension of $X$.
    \end{enumerate}
\end{theorem}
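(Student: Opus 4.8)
The plan is to reduce both claims to the behaviour of the i.i.d.\ MRA estimator, using the geometric mixing estimate~\eqref{exponential mixing MC thm} to compare the law of the MTD patches with a product law over the stationary distribution $\pi$. Write $\rho:=1-(1-\lambda)^{L-1}\in(0,1)$ and, for a tuple of patches $w$, set $h(w):=\norm{\widehat F(w)-\mathcal F(X)}^2$; since $\widehat F$ and $\mathcal F$ are bounded, $h$ is bounded by some $H<\infty$. Throughout, $\sigma$ is fixed, so $a(\sigma)>0$ is a constant.

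For part~(i), I would first note that the thinned sequence $\{g_{km}\}_{k=1}^{M'}$ is itself a Markov chain whose one-step kernel is the $m$-step kernel of the chain of Theorem~\ref{thm:MTD-1D}. A chain-rule bound for total variation, combined with~\eqref{exponential mixing MC thm} applied at index gap $m$ (and with the marginal $\mathcal L(g_m)$ at distance $\le\frac12 C\rho^{m-1}$ from $\pi$, obtained by averaging the conditional bound over $g_1$), gives $\norm{\mathcal L(g_m,g_{2m},\dots,g_{M'm})-\pi^{\otimes M'}}_{\mathrm{TV}}\le \frac12 C M'\rho^{m-1}$. Because $Z_{km}=P(g_{km}\cdot\overline X)+\varepsilon_{km}$ is a fixed measurable map of the group elements and of an independent Gaussian sequence, the data-processing inequality transfers this to the patch laws, so that
\[
\abs{\mathbb E\,h(\{Z_{km}\}_{k=1}^{M'})-\mathbb E\,h(\{Y_k\}_{k=1}^{M'})}\ \le\ HC\,M'\rho^{m-1}.
\]
Dividing by $a(\sigma)/M'$, the discrepancy is $\le HC(M')^2\rho^{m-1}/a(\sigma)$, which with $m=\lfloor c\log M\rfloor$ and $M'\le M$ is $\lesssim M^{2+c\log\rho}\to0$ provided $c>c_0:=2/\log(1/\rho)$. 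Since $M'\to\infty$ as $M\to\infty$, the i.i.d.\ term $\mathbb E\,h(\{Y_k\}_{k=1}^{M'})/(a(\sigma)/M')\to1$ by Definition~\ref{def:convergenceRate1D}, and adding the two pieces gives the claimed limit.

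For part~(ii), I would begin by extracting structure from the hypothesis: for $\widehat F_{\mathsf{MRA}}(M)=\frac1M\sum_kF(Y_k)$ one has $\mathbb E\norm{\widehat F_{\mathsf{MRA}}(M)-\mathcal F(X)}^2=\norm{\mu_F-\mathcal F(X)}^2+v_F/M$ with $\mu_F=\mathbb E_\pi[F(Y)]$ and $v_F=\mathbb E_\pi\norm{F(Y)-\mu_F}^2$; for the convergence-rate relation to hold it is forced that $\mu_F=\mathcal F(X)$ and $v_F=a(\sigma)$. Turning to the MTD model, set $\overline F(g):=\mathbb E_\varepsilon[F(P(g\cdot\overline X)+\varepsilon)]$, so $\norm{\overline F}_\infty\le\norm F_\infty$ by Jensen; the bias of $\frac1M\sum_kF(Z_k)$ is $\frac1M\sum_{k=1}^M(\mathbb E[\overline F(g_k)]-\mu_F)$, and since $\norm{\mathcal L(g_k)-\pi}_{\mathrm{TV}}\le\frac12 C\rho^{k-1}$ each term is $O(\rho^{k-1})$; summing the geometric series bounds the bias in norm by $O(1/M)$, hence the squared bias by $O(1/M^2)$. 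For the variance, the noise being independent across patches gives $\Var\big(\tfrac1M\sum_kF(Z_k)\big)=\frac1{M^2}\big(\sum_j\mathbb E\norm{F(Z_j)-\mathbb E F(Z_j)}^2+\sum_{j\neq k}\Cov(\overline F(g_j),\overline F(g_k))\big)$; the diagonal terms are $\le4\norm F_\infty^2$, and for $j\neq k$ a tower-property computation together with~\eqref{exponential mixing MC thm} (comparing $\mathcal L(g_k\mid g_j=g')$ with the marginal $\mathcal L(g_k)$ through $\pi$) yields $\abs{\Cov(\overline F(g_j),\overline F(g_k))}\le 4C\norm F_\infty^2\rho^{\abs{j-k}}$. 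Summing over the $O(M)$ pairs at each lag gives $\Var\le C'\norm F_\infty^2/M$ with $C'$ depending on $\lambda,L$; adding the squared bias, dividing by $a(\sigma)/M$, using $a(\sigma)\ge\tau$, and letting $M\to\infty$ yields the bound with $K$ of the stated form (with an extra dimensional factor coming from the norm conversions).

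The hard part is the quantitative balancing in part~(i): because the target rate $a(\sigma)/M'$ itself vanishes, the total-variation gap between the MTD and i.i.d.\ patch laws must vanish \emph{strictly faster} than $1/M'$, which is precisely why $\Theta(\log M)$ thinning is needed and why, in general, one can only promise a logarithmic---rather than constant---loss. Establishing the chain-rule total-variation bound cleanly, and confirming that the thinned chain inherits~\eqref{exponential mixing MC thm} at the enlarged gap, are the delicate points. In part~(ii) the analogous technical core is the geometric decay of the covariances; the only subtlety there is the non-stationary start of the chain, which is absorbed by routing the comparison of $\mathcal L(g_k\mid g_j=g')$ and $\mathcal L(g_k)$ through $\pi$, costing one more factor of the same geometric rate.
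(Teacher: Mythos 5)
Your proposal is correct and follows essentially the same route as the paper: part (i) rests on a telescoping total-variation comparison between the law of the thinned group elements and the product law $\pi^{\otimes M'}$ (the paper's $A_j$ recursion), transferred to the patch laws by conditioning/data processing and yielding the same threshold $c_0=2/\gamma$, while part (ii) uses the bias--variance decomposition with geometric decay of the marginal bias and of the covariances $\Cov\bigl(\overline{F}(g_j),\overline{F}(g_k)\bigr)$, exactly as in the paper. The only cosmetic difference is that in (ii) you bound the dependent-data variance directly instead of comparing it to the i.i.d.\ variance as the paper does, which does not change the form of the constant $K$.
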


Theorem~\ref{thm:sampleComplexity1DMTD} establishes that statistical estimation from the dependent MTD model (Markovian MRA) is asymptotically equivalent to estimation under the i.i.d.\ MRA model, up to a (possibly) logarithmic overhead. The result covers two regimes, distinguished by the structure of the estimator:
\begin{enumerate}
    \item \emph{General estimators via subsampling:} In the general case where $\widehat{F}$ is not necessarily an empirical average, statistical guarantees can still be obtained by subsampling the MTD patches at logarithmic intervals. Specifically, selecting every $m = \lfloor c \log M \rfloor$-th patch ensures sufficient decorrelation due to the exponential mixing of the underlying Markov chain. The resulting subsampled sequence of size $M' \asymp M / \log M$ behaves approximately as an i.i.d.\ sample, allowing the estimator to achieve the same convergence rate (up to constants) as in the i.i.d.\ MRA model. This approach requires only boundedness and mixing, without structural assumptions on $\widehat{F}$.
    \item \emph{Empirical average estimators without subsampling:} When the estimator is an empirical mean $\widehat{F}({Z_k}) = \frac{1}{M} \sum F(Z_k)$, one can retain all $M$ patches without subsampling. This result implies that empirical average estimators can fully exploit the dependent data, achieving the same sample complexity (up to a constant factor) as if the observations were independent.
\end{enumerate}

The distinction between the two regimes is one of \emph{efficiency}. For general estimators, subsampling at logarithmic intervals is necessary for our theorem to apply: without subsampling, we cannot guarantee the desired statistical behavior, even though the estimator itself is well-defined. This restriction means that only a fraction $\sim 1/\log M$ of the available data is effectively used. By contrast, empirical average estimators can leverage all $M$ patches without any loss, achieving the same sample complexity as in the i.i.d.\ setting. Consequently, Theorem~\ref{thm:sampleComplexity1DMTD}(ii) provides strictly stronger guarantees whenever such estimators are applicable.

We emphasize that the convergence rate is defined relative to the stationary distribution of the group elements ${g_k}$ in the MTD model. In the one-dimensional case, this stationary distribution is analytically characterized in terms of the density parameter $\lambda$ and the patch length $L$ (see Theorem~\ref{thm:MTD-1D}). In higher dimensions, the spatial interactions are governed by more complex models, such as the hard-core process in 2D that we propose in sections \ref{sec:2DhomogeneousMTD} and \ref{sec:mainResults2D}, and understanding the induced distribution becomes substantially more involved. Nevertheless, the general equivalence principle established in Theorem~\ref{thm:sampleComplexity1DMTD} ensures that, under exponential mixing conditions, estimators that achieve a given statistical accuracy in the i.i.d.\ model will exhibit the same asymptotic rate when applied to MTD data, even in the presence of these dependencies.

\subsection{Implication to the method of moments and the MTD sample complexity} \label{subsec:implicationsToMoM}

A central application of Theorem~\ref{thm:sampleComplexity1DMTD}(ii) is for the method of moments, as described in Section~\ref{sec:method-of-moments}. Recall that in the i.i.d.\ setting, the $n$-th order moment of the random variable $Y \in \mathbb{R}^L$ defined in~\eqref{eqn:patch}, is defined by \eqref{eqn:MoMdefinition}. This moment can be empirically estimated using $M$ i.i.d.\ observations $\{Y_k\}_{k=1}^{M}$ via:
\begin{align}
    \widehat{T}_{\mathrm{MRA}}^{(n)}(\{Y_k\}_{k=1}^{M}) := \frac{1}{M} \sum_{k=1}^M (Y_k)^{\otimes n}.
\end{align}
Similarly, we define the $n$-th order empirical moment computed from the $M$ patches $\{Z_k\}_{k = 1}^{M}$ extracted from the MTD observation $Z$ (similarly to Theorem~\ref{thm:sampleComplexity1DMTD}(ii)), as follows:
\begin{align}
    \widehat{T}_{\mathrm{MTD}}^{(n)}(\{Z_k\}_{k=1}^{M}) := \frac{1}{M} \sum_{k=1}^M (Z_k)^{\otimes n}. \label{eqn:mtdempiricalMoMDef}
\end{align}

We now formalize the consequence of Theorem~\ref{thm:sampleComplexity1DMTD}(ii) for the method of moments.

\begin{corollary}[Method of moments for MTD]
\label{cor:MoM-MTD}
Let $n \in \mathbb{N}$, and denote by $T^{(n)}_{\mathrm{MRA}} = \mathbb{E}[Y^{\otimes n}]$ the $n$-th order population moment of the i.i.d.\ MRA model, where $Y$ is as defined in~\eqref{eqn:patch}. Let $\widehat{T}_{\mathrm{MTD}}^{(n)}$ be the empirical $n$-th order moment computed from $M$ consecutive patches $\{Z_k\}_{k=1}^{M}$ extracted from the MTD model, as defined in~\eqref{eqn:mtdempiricalMoMDef}.
Assume that the conditions of Theorem~\ref{thm:sampleComplexity1DMTD}(ii) hold, and that the moment map $(\cdot)^{\otimes n}: \mathbb{R}^L \to \mathbb{R}^{L^{\otimes n}}$ is bounded on the support of the observations.

Then:
\begin{enumerate}
    \item There exists a constant $C > 0$ such that
    \[
        \limsup_{M \to \infty} \frac{\mathbb{E} \left[\left\| \widehat{T}_{\mathrm{MTD}}^{(n)}(\{Z_k\}_{k = 1}^{M}) - T^{(n)}_{\mathrm{MRA}} \right\|^2 \right]}{a(\sigma)/M} \leq C,
    \]
    where $a(\sigma)$ is the convergence rate from Definition~\ref{def:convergenceRate1D}, and $C$ depends on $n$, $\|(T^{(n)}_{\mathrm{MRA}})^{\otimes n}\|_\infty$, and the mixing properties of the underlying Markov chain.
    \item Furthermore, suppose that $n_{\min}$ is the minimal moment order required for orbit recovery in the i.i.d.\ MRA model~\eqref{eqn:generalMRA1D}, in the sense that the orbit of $\overline{X}$ under the group action is uniquely determined by $\{T^{(n)}_{\mathrm{MRA}}\}_{n=0}^{n_{\min}}$. Then, in the low-SNR regime $\sigma^2 \to \infty$, the sample complexity of the method of moments applied to MTD data satisfies:
    \[
        M = \omega\left( \sigma^{2n_{\min}} \right),
    \]
    matching the asymptotic sample complexity of the corresponding i.i.d.\ MRA model up to constants.
\end{enumerate}
\end{corollary}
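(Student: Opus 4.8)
The plan is to observe that the MTD empirical moment $\widehat{M}^n_{\mathrm{MTD}}(\{Z_k\}_{k=1}^{M}) = \frac1M\sum_{k=1}^{M}(Z_k)^{\otimes n}$ is precisely an empirical-average estimator of the kind handled by Theorem~\ref{thm:sampleComplexity1DMTD}(ii), so that part (i) is essentially an instantiation of that theorem, and part (ii) then follows by feeding the resulting rate into the moment-inversion step that is standard in the i.i.d.\ MRA literature. Concretely, for part (i) I would apply Theorem~\ref{thm:sampleComplexity1DMTD}(ii) with $F(\cdot) = (\cdot)^{\otimes n}$ and feature map $\mathcal{F}(X) = \mathbb{E}[Y^{\otimes n}] = M^n_{\mathrm{MRA}}$, where $Y$ is the patch variable in~\eqref{eqn:patch}. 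The hypotheses to verify are: $F$ is bounded and continuous, which is exactly the standing assumption that $(\cdot)^{\otimes n}$ is bounded on the support of the observations; $\mathcal{F}$ is bounded and continuous, since $M^n_{\mathrm{MRA}}$ is a degree-$n$ polynomial (with $\sigma$-dependent coefficients) in the entries of $X$; and $\widehat{F}_{\mathsf{MRA}}(M) = \frac1M\sum_{k=1}^{M}(Y_k)^{\otimes n}$ has a convergence rate in the sense of Definition~\ref{def:convergenceRate1D}. The last point is immediate: for i.i.d.\ observations, $\mathbb{E}\big[\|\widehat{F}_{\mathsf{MRA}}(M)-\mathcal{F}(X)\|^2\big] = \tfrac1M\operatorname{tr}\Cov\big((Y_1)^{\otimes n}\big)$ exactly for every $M$, so the rate is $a(\sigma) = \operatorname{tr}\Cov\big((Y_1)^{\otimes n}\big)$, which is strictly positive whenever $\sigma>0$ (the additive Gaussian noise makes $(Y_1)^{\otimes n}$ nondegenerate), hence bounded below by some $\tau>0$. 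Theorem~\ref{thm:sampleComplexity1DMTD}(ii) then gives part (i) with $C=K$.

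For part (ii) I would combine part (i) with two facts about the i.i.d.\ MRA model. First, by the analysis of~\cite{abbe2018estimation}, in the low-SNR regime the convergence rate of the empirical $n$-th moment scales as $a_n(\sigma)\asymp\sigma^{2n}$, the dominant contribution being the $n$-th moment tensor of the pure Gaussian noise; hence among the orders $0,1,\dots,n_{\min}$ the order $n_{\min}$ dominates. Second, the assumption that the orbit of $\overline{X}$ is uniquely determined by $\{M^n_{\mathrm{MRA}}\}_{n=0}^{n_{\min}}$ means exactly that the moment-to-orbit map admits a (continuous) inverse on the relevant moment set; composing this inverse with the empirical moments recovers the orbit of $\overline{X}$, and the zero-padding structure of $\overline{X}$ then pins down $X$ itself without any residual group ambiguity (Remark~\ref{rem:recoveringX}). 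Applying this inverse to $\{\widehat{M}^n_{\mathrm{MTD}}\}_{n=0}^{n_{\min}}$---for which part (i), used either termwise or via the concatenated feature map $X\mapsto\big((\cdot)^{\otimes 0},\dots,(\cdot)^{\otimes n_{\min}}\big)$, gives MSE of order $\sigma^{2n_{\min}}/M$---shows that the resulting MTD estimator is consistent whenever $M/\sigma^{2n_{\min}}\to\infty$. This is exactly the i.i.d.\ threshold $M=\omega(\sigma^{2n_{\min}})$, up to constants; the translation into the number of occurrences $q$ is then just $q\asymp\lambda M$.

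The conceptual work is already carried by Theorem~\ref{thm:sampleComplexity1DMTD}(ii), so the remaining effort sits in two routine-but-delicate points. The first is the boundedness of the tensor-power feature maps: $(\cdot)^{\otimes n}$ is genuinely unbounded on $\mathbb{R}^L$, so one must invoke the standing hypothesis of Corollary~\ref{cor:MoM-MTD} (e.g.\ a bounded-signal assumption together with a clipped moment estimator) that makes $F$ and $\mathcal{F}$ bounded on the relevant support. The second, and the place I expect to spend the most effort, is making the moment-inversion argument of part (ii) fully precise: turning ``uniquely determined by moments up to $n_{\min}$'' into an inverse map with enough regularity (at least continuity, ideally local Lipschitz control) to transport the MSE rate from the moments to the orbit, and confirming that the dominant scaling of the relevant convergence rate is indeed $\sigma^{2n_{\min}}$, so that the MTD threshold genuinely coincides with the i.i.d.\ one rather than merely being bounded by it. I would discharge this by citing the established i.i.d.\ MRA analysis as a black box and feeding in the bound from part (i), rather than re-deriving the i.i.d.\ sample complexity.
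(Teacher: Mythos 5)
Your proposal matches the paper's treatment: the corollary is obtained exactly by instantiating Theorem~\ref{thm:sampleComplexity1DMTD}(ii) with $F(\cdot)=(\cdot)^{\otimes n}$ and $\mathcal{F}(X)=\mathbb{E}[Y^{\otimes n}]$ (using the stated boundedness hypothesis to satisfy the theorem's conditions), and part (ii) is discharged, as you do, by citing the i.i.d.\ MRA moment-inversion results of \cite{abbe2018estimation,bendory2017bispectrum,bandeira2023estimation} as a black box and plugging in the rate from part (i). Your additional verifications (exact bias-variance identity for the i.i.d.\ empirical moment and the lower bound $a(\sigma)\geq\tau$) are consistent with, and slightly more explicit than, the paper's own argument.
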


Corollary~\ref{cor:MoM-MTD} shows that the method of moments retains its asymptotic optimality when applied to dependent MTD data. Specifically, the empirical moment estimator $\widehat{T}^{(n)}_{\mathrm{MTD}}$ converges to the true i.i.d.\ moment $T^{(n)}_{\mathrm{MRA}} = \mathbb{E}[Y^{\otimes n}]$ with the same convergence rate as in the i.i.d.\ MRA model. This follows directly from Theorem~\ref{thm:sampleComplexity1DMTD}(ii), which guarantees that empirical average estimators behave similarly under Markovian and i.i.d.\ sampling.
The practical consequence is significant: for signals identifiable from moments up to order $n$, it suffices to estimate those $n$ moments using the MTD data. Thus, the method of moments applied to the MTD patches achieves the same sample complexity scaling as in the i.i.d.\ model, namely, $M = \omega(\sigma^{2n_{\min}})$ in the low-SNR regime.
We stress that in this section we focus on the number of samples required to recover the moments from empirical data. Previous results (see~\cite{abbe2018estimation,bendory2017bispectrum,bandeira2023estimation}) show how to recover the signal $X$ itself from sufficiently accurate estimators of the moments, and therefore our sample complexity results apply to estimation of $X$ itself by plugin.

\section{The two-dimensional MTD model: Preliminaries}
\label{sec:2DhomogeneousMTD}

In the two-dimensional MTD model, the observation is a large noisy image containing multiple copies of $X \in \mathbb{R}^{L \times L}$ (Figure~\ref{fig:2}(a)). A key step in extending the MTD framework to two dimensions is to specify the placement model for these signal occurrences. Our requirements are twofold: (i) copies of $X$ must not overlap, and (ii) sufficiently distant regions should behave nearly independently, so that local statistics do not induce long-range correlations. Among the various models that satisfy these conditions, the \emph{hard-core model} provides a natural and parsimonious choice. It enforces only the exclusion constraint, that is preventing overlap, while otherwise leaving placements unconstrained. This minimal assumption guarantees that the induced random field exhibits strong mixing, with correlations between distant patches decaying rapidly, thereby allowing us to treat well-separated patches as nearly independent. Such exponential mixing is the structural property required for our sample-complexity analysis. 

Originating in statistical physics, the hard-core model describes systems of finite-size, non-interpenetrating objects, where any configuration placing two objects closer than a prescribed exclusion distance has zero probability. This abstraction has broad relevance across domains, including: (i) finite-size particles in statistical mechanics~\cite{hansen2013theory}, (ii) non-overlapping targets in imaging~\cite{moller2003statistical,baddeley2016spatial}, and (iii) exclusion processes in networks and ecology~\cite{haenggi2013stochastic,wiegand2013handbook}; see also~\cite{georgii2011,weitz2006}.
While alternative spatial processes, such as Markov random fields or interacting particle systems, can also exhibit mixing, the hard-core model remains analytically tractable, conceptually simple, and widely studied.

\begin{figure}[t!]
    \centering
    \includegraphics[width=1.0\linewidth]{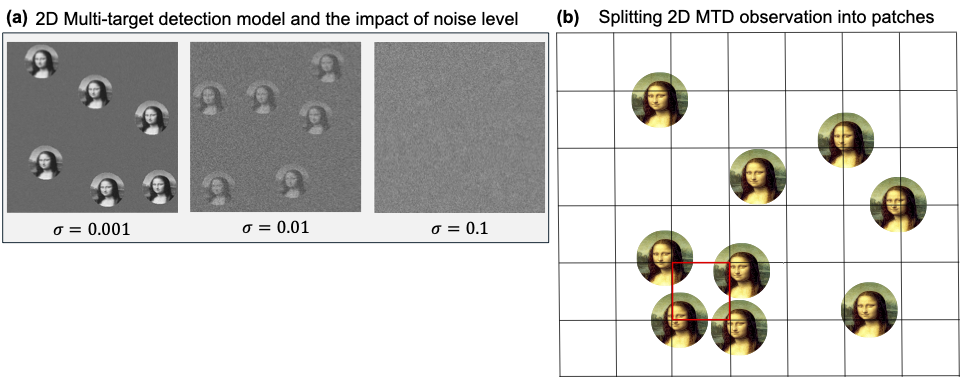}
    \caption{\textbf{The MTD model in the 2D setting.} 
    \textbf{(a)} The 2D MTD model consists of multiple instances of an image (e.g., the Mona Lisa image) embedded at random locations within a noisy 2D observation. In the low-noise (high-SNR) regime, these instances can be reliably detected and localized. However, in the high-noise regime, the signal becomes indistinguishable from the background, making accurate localization infeasible.
    \textbf{(b)} To overcome this challenge, the observation is partitioned into non-overlapping patches. These patches form a structured variant of the MRA model, where neighboring patches exhibit statistical dependencies. We model these dependencies using a hard-core interaction, ensuring that overlapping signal occurrences are avoided. Each patch may intersect up to four distinct signal instances at its corners (appearing cropped in the patch), as illustrated in the red square.}    
    \label{fig:2}
\end{figure}

\subsection{Model formulation} 
\label{sec:2D-homogeneous-MTD-formulation}

Let $X\in\mathbb{R}^{L\times L}$ be the target image  and let $Z\in\mathbb{R}^{LM\times LM}$ be the observed image. We model $Z$ as
\begin{align}\label{eq:2d-mtd}
    Z \;=\; \sum_{i=1}^{q} \bigl(S(t_i)\ast X\bigr) \;+\; \varepsilon,
\end{align}
where $\varepsilon\in\mathbb{R}^{LM\times LM}$ has i.i.d.\ $\mathcal{N}(0,\sigma^2)$ entries and $q\in\mathbb{N}$ is the number of appearances of $X$. The placement parameters $t_i$ take values in
$\{0,1,\ldots,L(M-1)\}^2$, so that an $L\times L$ copy of $X$ placed at $t_i$ lies entirely within the image. For $t\in\{0,\ldots,LM-1\}^2$, the indicator image (Kronecker delta at $t$), $S(t)\in\{0,1\}^{LM\times LM}$ is defined by $S(t)[\xi] = \mathbf{1}\{\xi=t\}$ for $\xi\in\{0,\ldots,LM-1\}^2$, and the convolution in~\eqref{eq:2d-mtd} implements a zero-padded placement (no wrap-around):
\begin{align}\label{eq:2d-translation}
    \bigl(S(t)\ast X\bigr)[\xi_1,\xi_2]
    \;=\;
\begin{cases}
X\bigl[(\xi_1,\xi_2)-t\bigr], & \text{if } (\xi_1,\xi_2)-t \in \{0,\ldots,L-1\}^2,\\[2pt]
0, & \text{otherwise}.
\end{cases}
\end{align}
Thus, each $t_i$ specifies the top-left anchor pixel of a copy of $X$ inserted into $Z$.
 
\subsection{Hard-core model}
\label{sec:hard-core-model}
The hard-core model consists of three components: (i) a graph $\mathcal{G}_M=(V_M,E_M)$, whose vertices $V_M$ represent potential placement sites for signal occurrences (i.e., the locations $\{t_i\}_{i=1}^q$) and whose edges $E_M$ encode conflicts (i.e., pairs of sites that cannot be simultaneously occupied); (ii) a collection of \emph{admissible configurations}, defined as subsets of $V_M$ that respect the non-overlap constraint; and (iii) a probability measure over admissible configurations, parameterized by an activity $\lambda>0$ that controls the expected density of occupied sites. In other words, in our setting, the placement parameters $\{t_i\}_{i=1}^q$ in~\eqref{eq:2d-mtd} are modeled by a hard-core process ensuring that no two signal copies overlap.

In our setting, the vertices of the graph are given by
\[
    V_M = \{0,1,\ldots,L(M-1)\}^2,
\]
corresponding to all possible anchor positions of $L\times L$ signal copies within the image. The edges are defined as
\[
    (t_1,t_2)\in E_M \quad \Longleftrightarrow \quad \|t_1-t_2\|_\infty \leq L-1,
\]
where, for $t_i=(t_i^{(1)},t_i^{(2)})$, we set
\[
    \|t_1-t_2\|_\infty = \max\big(|t_1^{(1)}-t_2^{(1)}|,\;|t_1^{(2)}-t_2^{(2)}|\big).
\]
Thus, two vertices are adjacent whenever their associated $L\times L$ patches overlap.  

A configuration is a subset $\eta\subseteq V_M$. It is \emph{admissible} if no two occupied vertices are adjacent, i.e., if $(t_j,t_k)\notin E_M$ for all $j\neq k$. Equivalently, admissibility requires that any two occupied sites be separated by at least $L$ pixels in either the horizontal or vertical direction, ensuring that no two copies of the signal overlap. We will formalize this property next.

\begin{definition} [Admissible configuration]
A configuration $\eta=\{t_i\}_{i=1}^q \subseteq V_M$ is \emph{admissible} if it spans no edges in $\mathcal{G}_M$, i.e.
\begin{align}\label{eq:hc-admissibility}
    (t_j,t_k)\notin E_M, \qquad \forall j\neq k.
\end{align}
Equivalently, if the placement parameters in the 2D MTD model $\eta=\{t_i\}_{i=1}^q\subseteq V_M$ denotes the occupied set, an admissible configuration requires 
\begin{align}
    \|t_j-t_k\|_\infty \ge L, \qquad \text{for all} \quad j\neq k.
\end{align}
\end{definition}

Finally, the hard-core process is defined as a probability measure supported only on admissible configurations. For $\lambda > 0$, the probability of observing $\eta \subseteq V_M$ is  
\begin{align} \label{eqn:partition-function}
    \mu_\lambda(\eta) \;=\; \frac{1}{Z_\lambda}\,\mathbf{1}\{\eta \text{ admissible}\}\,\lambda^{|\eta|}, 
    \qquad 
    Z_\lambda \;=\; \sum_{\tau\subseteq V_M}\mathbf{1}\{\tau \text{ admissible}\}\,\lambda^{|\tau|}.
\end{align}
Here, $\lambda$ (the activity or fugacity parameter) controls the density of occupied sites: larger values favor denser configurations. In this formulation, all admissible subsets of the same cardinality receive equal weight, proportional to $\lambda^{|\eta|}$, while inadmissible ones have probability zero. In our setting, the graph $\mathcal{G}_M$ is chosen so that admissibility corresponds precisely to excluding overlapping $L \times L$ signal copies in the image. We next summarize this formally in Definition~\ref{def:hardCoreModelGraph}.

\begin{definition} [The MTD hard-core model] \label{def:hardCoreModelGraph}
      For $L\in \N$ fixed, and any $M\in \N$, let $\mathcal{G}_M = \p{V_M,E_M}$ be the MTD graph with set of vertices
     \begin{equation}
        \label{graph vertex}
        V_M := \{ 0, 1, \ldots, L(M-1)  \}^2,
    \end{equation}
    and set of edges $E_M \subset V_M \times V_M$, such that for every $(x_1, y_1) \neq (x_2, y_2)$, the following is satisfied,
        \begin{equation}
            \label{graph edges}
            \left( (x_1, y_1) ,\,  (x_2, y_2) \right) \in E_M \ \text{whenever} \ \max \{ |x_1 - x_2|, \, |y_1-y_2|  \} \leq L-1.
        \end{equation} 
    For any subset $\eta = \{ t_i \}_{i=1}^q$ of $V_M$, we say that a vertex $v\in V_M$ is occupied whenever $v=t_i$ for some $i = 1, \ldots, q$.

    For any $\lambda > 0$, the hard-core model over the graph $\mathcal{G}_M$ with parameter $\lambda$ is a probability measure over the subsets of $V_M$, such that, for any $q\in \{ 0 ,\ldots, |V_M|\}$, any subset $\eta = \{ t_i \}_{i=1}^q$ with $q$ elements has probability
    $$
    \mu_\lambda \left( \eta \right) = \frac{1}{Z_\lambda} \lambda^q \mathds{1}\{ \eta \ \text{is admissible} \},
    $$
    where $Z_\lambda$ is defined in \eqref{eqn:partition-function}, and the admissibility of $\eta$ is given by the condition \eqref{eq:hc-admissibility}. 
\end{definition}

The admissibility condition \eqref{eq:hc-admissibility} ensures the non-overlap of $L\times L$ signal footprints.
Analogously to Assumption~\ref{assumption MTD} in 1D, we assume signal placements $\{t_j\}_{j=1}^q$ are drawn from a hard-core model. 

\begin{assumption}[Hard-core model for distribution of placements]
\label{assumption hard-core}
There exists $\lambda>0$ such that the placement parameters $\{t_i\}_{i=1}^q$ in~\eqref{eq:2d-mtd} are distributed according to the hard-core model on $\mathcal{G}_M$ with activity $\lambda$ from Definition \ref{def:hardCoreModelGraph}.
\end{assumption}

At low activity $\lambda$, short-range repulsion yields weak long-range dependence and, under suitable conditions, exponential mixing~\cite{dobrushin1968problem,georgii2011}, which we formalize next.

\subsection{Random fields and exponential mixing}
\label{sec:exponential-mixing}

The reduction of the 2D MTD problem to an MRA model follows the same idea as in the 1D case (see Section~\ref{sec:1dimentionalMTDresults}): we partition the measurement $Z$ into non-overlapping patches $Z_k$. In two dimensions, it is convenient to index patches by their spatial location in the original image. For $M\in\mathbb{N}$, define
\begin{align}
\mathcal{I}_M \;:=\; \{1,\ldots,M\}^2 \;=\; \bigl\{\,k=(k_1,k_2):\ k_1,k_2\in\{1,\ldots,M\}\,\bigr\},
\end{align}
so that a total of $M^2$ disjoint $L\times L$ patches are extracted. Because the index set is two-dimensional, the collection $\{Z_k\}_{k\in\mathcal{I}_M}$ cannot be modeled as a hidden Markov process. Instead, we treat the group elements $\{g_k\}_{k\in\mathcal{I}_M}$ acting on the target signal as a \emph{random field} on $\mathcal{I}_M$.

\begin{definition}[Random field and admissibility] \label{def:random-field}
Let $G$ be a finite set and $\mathcal{I}\subset\mathbb{Z}^2$. A $G$-valued random field indexed by $\mathcal{I}$ is a random variable $g=\{g_k\}_{k\in\mathcal{I}}$ which takes values in $ G^{\mathcal{I}}$. For any index subset $S\subseteq\mathcal{I}$, we define the restriction of $g$ to $S$ as $g_{|S}=\{g_k\}_{k\in S}$, which takes values in $G^S$. We say that a configuration $\Psi\in G^S$ is \emph{admissible} if $\mathbb{P}(g_{|S}=\Psi)>0$.
\end{definition}

\begin{definition}[Exponential mixing]
\label{def:exponential-mixing}
Let $\mathcal{I}\subset \Z^2$ be a two-dimensional index set, and define the complement of the index set as $\mathcal{I}^c := \Z^2\setminus \mathcal{I}$. A $G$-valued random field $\{g_k\}_{k\in\mathcal{I}}$ satisfies \emph{exponential mixing} if there exist $c_1,\gamma>0$ and a probability distribution $\pi$ on $G$ such that for every $k\in\mathcal{I}$, $S\subseteq\mathcal{I}$, and admissible $\Psi\in G^S$,
\begin{align} \label{eq:mixing-property}
\max_{\varphi\in G}\big|\mathbb{P}(g_k=\varphi\mid g_{|S}=\Psi)-\pi(\varphi)\big|
    \;\leq\; c_1\exp\!\Big(-\gamma\cdot\mathrm{dist}(k,S\cup\mathcal{I}^c)\Big),
\end{align}
where, for $d(k,k')=\displaystyle\max_{j=1,2}|k_j-k'_j|$, we define $\mathrm{dist}(k,A)=\displaystyle\min_{a\in A}d(k,a)$. 
\end{definition}

Intuitively, exponential mixing ensures that the statistical influence of distant sites decays exponentially, so well-separated patches behave nearly independently.
The distance to the complement of the index set in the right hand side of \eqref{eq:mixing-property} ensures that the statistical influence of the measurement boundary also decays exponentially as the patch is away from the edges of the image.

\section{The two-dimensional MTD model: Main results}
\label{sec:mainResults2D}

Analogously to Section~\ref{sec:mainResults1D}, we present the main results for the two-dimensional hard-core MTD model introduced in the previous section. In Section~\ref{sec:sec:2dimentionalMTDhardCoreForming}, we show that an MTD measurement can be partitioned into $L\times L$ patches, each of which has the form of a two-dimensional MRA observation (see Figure~\ref{fig:2}). Although these patches are not i.i.d., they are generated by an underlying random field that exhibits exponential mixing, governed by the distribution placement of the hard-core model  (Assumption~\ref{assumption hard-core}). In Section~\ref{sec:convergenceTo2DMRA}, we prove that this dependent MRA model converges to the classical i.i.d.\ MRA model; consequently, the sample complexity of the induced 2D hard-core model can be analyzed by comparison with the i.i.d.\ MRA framework.

\subsection{From MTD to a non-i.i.d.\ MRA model with exponential mixing} \label{sec:sec:2dimentionalMTDhardCoreForming}

As in the 1D case, we partition the measurement $Z\in\mathbb{R}^{LM\times LM}$ (see \eqref{eq:2d-mtd}) into $M^{2}$ non-overlapping $L\times L$ patches. Let $\mathcal{I}_M\triangleq\{1,\ldots,M\}^2$ and, for $k=(k_1,k_2)\in\mathcal{I}_M$, define
\begin{equation}
    \label{eq:2d-patches}
    Z_{k}[\xi]\;=\;Z\bigl[(k_1-1)L+\xi_1,\ (k_2-1)L+\xi_2\bigr],
    \qquad \forall\,\xi=(\xi_1,\xi_2)\in\{0,\ldots,L-1\}^2.
\end{equation}

Under Assumption~\ref{assumption hard-core} on the locations $\{t_i\}_{i=1}^q$ in \eqref{eq:2d-mtd} (which forbids overlap between signal occurrences), each patch $Z_k\in\mathbb{R}^{L\times L}$ can contain portions of at most four instances of the signal $X\in\mathbb{R}^{L\times L}$.

To model these contributions, we introduce the zero–padded signal
\begin{equation}
\label{X tilde 2D}
\tilde{X}\;\triangleq\;
\begin{bmatrix}
X & 0_{L\times L}\\[2pt]
0_{L\times L} & 0_{L\times L}
\end{bmatrix}\in\mathbb{R}^{2L\times 2L},
\end{equation}
and define the (componentwise) circular–shift action of $\mathbb{Z}_{2L}\times\mathbb{Z}_{2L}$ on $\mathbb{R}^{2L\times 2L}$ by
\begin{equation}
\label{group action 2d simple}
(\alpha,\beta)\cdot\tilde{X}\,[\xi_1,\xi_2]
\;=\;\tilde{X}\bigl[(\xi_1-\alpha,\ \xi_2-\beta)\!\!\!\!\!\pmod{2L}\,\bigr],
\qquad \forall\,(\xi_1,\xi_2)\in\{0,\ldots,2L-1\}^2,
\end{equation}
for any $(\alpha, \beta)\in \Z_{2L}\times \Z_{2L}$.

Since up to four shifted copies may contribute to a single patch (see Figure~\ref{fig:2}), set
\begin{equation}
\label{eqn:barXdef 2D}
\overline{X}\;\triangleq\;(\tilde{X},\tilde{X},\tilde{X},\tilde{X})\in(\mathbb{R}^{2L\times 2L})^{4},
\end{equation}
and let $G\triangleq(\mathbb{Z}_{2L}\times\mathbb{Z}_{2L})^{4}$ act componentwise on $(\mathbb{R}^{2L\times 2L})^{4}$:
\begin{align}
    \label{eq:group-action-2d-double}
    g\cdot(\tilde{X}_0,\tilde{X}_1,\tilde{X}_2,\tilde{X}_3)
    &=\bigl(g^{(1)}\cdot\tilde{X}_0,\ g^{(2)}\cdot\tilde{X}_1,\ g^{(3)}\cdot\tilde{X}_2,\ g^{(4)}\cdot\tilde{X}_3\bigr),
\end{align}
for $g=(g^{(1)},g^{(2)},g^{(3)},g^{(4)})\in G$.

Finally, define the linear operator $P:(\mathbb{R}^{2L\times 2L})^{4}\to\mathbb{R}^{L\times L}$ by
\begin{equation}
\label{P def 2D}
P(\tilde{X}_0,\tilde{X}_1,\tilde{X}_2,\tilde{X}_3)
\;=\;P_{00}\tilde{X}_0\;+\;P_{01}\tilde{X}_1\;+\;P_{10}\tilde{X}_2\;+\;P_{11}\tilde{X}_3,
\end{equation}
where $P_{00},P_{01},P_{10},P_{11}:\mathbb{R}^{2L\times 2L}\to\mathbb{R}^{L\times L}$ extract, respectively, the top–left, top–right, bottom–left, and bottom–right $L\times L$ blocks. Namely, for
\[
\tilde{X}=\begin{bmatrix}A & B\\ C & D\end{bmatrix},\quad A,B,C,D\in\mathbb{R}^{L\times L},
\]
we have $P_{00}\tilde{X}=A$, $P_{01}\tilde{X}=B$, $P_{10}\tilde{X}=C$, and $P_{11}\tilde{X}=D$.

We are now in a position to introduce the MRA model that describes the patches $Z_k$, defined in \eqref{eq:2d-patches}, extracted from the MTD measurement $Z\in \R^{LM\times LM}$ in \eqref{eq:2d-mtd}. 

\begin{definition} [The induced hard-core two-dimensional MRA model]
\label{def:induced MRA 2D}
    Let $X\in \R^{L\times L}$ be the target signal, and let $\overline{X}\in (\R^{2L\times 2L})^4$ be given by \eqref{eqn:barXdef 2D}.
    Let us consider the index set $\mathcal{I}_M:= \{ 1, \ldots, M\}^2$, and the finite abelian group $G := (\Z_{2L} \times \Z_{2L})^4$ acting on $ (\R^{2L\times 2L})^4$ as described in \eqref{eq:group-action-2d-double}. Let $P: (\R^{2L\times 2L})^4 \to \R^{L\times L}$ be the linear operator defined in \eqref{P def 2D}.
    We consider the problem of recovering $\overline{X}$ from noisy group-transformed and projected observations of $\overline{X}$ with the form
    \begin{equation}
    \label{eq:general-mra}
    Z_k = P (g_k\cdot \overline{X})
        + \varepsilon_k, \qquad k\in \mathcal{I}_M,
    \end{equation}
    where $\{g_k\}_{k\in \mathcal{I}_M}$ is a random field over $G$ indexed by $\mathcal{I}_M$, and $\{ \varepsilon_k\}_{k\in \mathcal{I}_M}$ is i.i.d.\ Gaussian noise.
\end{definition}

The next result proves that the patch sequence ${Z_k}$ defined in \eqref{eq:2d-patches} admits the MRA representation of Definition~\ref{def:induced MRA 2D}, with latent group elements forming an exponentially mixing random field. The construction of this random field relies on the assumption that the placement of the signal occurrences $\{t_i\}_{i=1}^q$ is governed by a hard-core model on a 2D graph, where the vertices are the pixels of the image (Assumption \ref{assumption hard-core}). We refer to this model as the hard-core MRA model. The proof is given in Appendix~\ref{sec:proof-of-2d-converence}.

\begin{theorem}[Induced MRA model with exponential mixing]
\label{thm:2d-main-result}
    Let $L, M \in \mathbb{N}$ and let $X \in \mathbb{R}^{L \times L}$. Let the measurement $Z \in \mathbb{R}^{LM \times LM}$ be defined as in~\eqref{eq:2d-mtd}, with position parameters $\{t_i\}_{i=1}^q$ satisfying Assumption~\ref{assumption hard-core} for some $\lambda \in (0,1)$. Suppose that $Z$ is partitioned into non-overlapping patches $\{Z_k\}_{k \in \mathcal{I}_M}$ as in~\eqref{eq:2d-patches}.
    Then, the patch sequence $\{Z_k\}_{k \in \mathcal{I}_M}$ can be expressed in the form~\eqref{eq:general-mra}, as in Definition~\ref{def:induced MRA 2D}, where $\{g_k\}_{k \in \mathcal{I}_M}$ is a random field over the group $G$. Moreover, there exists a constant $\lambda_0 \in (0,1)$, depending only on $L$, such that if $\lambda \in (0, \lambda_0)$, then the random field $\{g_k\}_{k \in \mathcal{I}_M}$ satisfies the exponential mixing property (Definition~\ref{def:exponential-mixing}) for some probability distribution $\pi$ over $G$.
\end{theorem}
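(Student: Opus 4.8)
The plan is to prove the two assertions in turn. For the representation \eqref{eq:general-mra} I would mimic the proof of Theorem~\ref{thm:MTD-1D}, now keeping track of four corners rather than two. Fix a patch index $k=(k_1,k_2)$ and the $L\times L$ pixel block it occupies. A signal footprint anchored at $t=(a,b)$ meets this block iff $a$ and $b$ each lie in an integer interval of length $2L-1$; partitioning the resulting $(2L-1)\times(2L-1)$ window of admissible anchors into four sub-boxes of size at most $L\times L$, and using that the hard-core constraint $\|t_j-t_{k}\|_\infty\ge L$ forbids two occupied anchors within one sub-box, one sees that at most four copies of $X$ intersect the patch, occupying its top-left, top-right, bottom-left and bottom-right corners. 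For each slot I read off from the configuration $\eta$ the relative offset of the signal anchoring there and encode it as an element of $\mathbb Z_{2L}\times\mathbb Z_{2L}$, assigning in the empty case a fixed ``null'' element chosen so that the corresponding block extractor among $P_{00},P_{01},P_{10},P_{11}$ annihilates $g_k^{(j)}\cdot\tilde X$ (it shifts the nonzero block of $\tilde X$ out of the relevant $L\times L$ corner, the 2D analogue of the convention $i_k^{(1)}=L$ in \eqref{eqn:patch}). The identity $Z_k=P(g_k\cdot\overline X)+\varepsilon_k$ then follows by matching indices in \eqref{eq:2d-translation}, \eqref{group action 2d simple} and \eqref{P def 2D}. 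The same construction gives the locality property used below: $g_k=\phi_k(\eta_{|W_k})$, where $W_k\subset V_M$ is the window of anchors whose footprint can intersect patch $k$ (a box of at most $(2L-1)^2$ vertices), and for $k$ away from the image edge $\phi_k$ equals a fixed function composed with a translation of the index.

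For the mixing statement, the key structural fact is that the conflict graph $\mathcal G_M$ has maximum degree at most $(2L-1)^2-1$, a bound depending only on $L$ and not on $M$. I would invoke strong spatial mixing of the hard-core Gibbs measure in the low-fugacity regime: there is $\lambda_0=\lambda_0(L)>0$ such that for $\lambda\in(0,\lambda_0)$ the finite-volume measures $\mu_\lambda$ satisfy, uniformly in $M$ and in boundary conditions, $\|\mu_\lambda(\eta_{|W}\in\cdot\mid \eta_{|\partial\Lambda}=b)-\mu_\lambda(\eta_{|W}\in\cdot\mid \eta_{|\partial\Lambda}=b')\|_{\mathrm{TV}}\le c\,|W|\,e^{-\gamma\,\mathrm{dist}(W,\,D(b,b'))}$ for $W\subset\Lambda$ and $D(b,b')$ the disagreement set, and consequently there is a translation-invariant infinite-volume measure $\mu_\lambda^\infty$ on $\mathbb Z^2$ with the same decay. (This is classical: check Dobrushin's uniqueness condition for the hard-core interaction with the above degree bound, or build a disagreement-percolation coupling and verify subcriticality; either way $\lambda_0$ and $\gamma$ depend only on $L$.) I then set $\pi(\varphi):=\mu_\lambda^\infty(\phi_0(\eta)=\varphi)$, which by translation invariance is the bulk marginal of every $g_k$.

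Given $k$, $S\subseteq\mathcal I_M$, and admissible $\Psi\in G^S$, write $d=\mathrm{dist}(k,S\cup\mathcal I_M^c)$. If $d$ is below a constant depending only on $L$, the asserted bound holds trivially after enlarging the constant, so assume $d$ is large; note this forces $k$ into the image bulk. Choose a box $A$ with $W_k\subseteq A$, $\mathrm{dist}(W_k,\partial A)\asymp dL/2$, and $A\cup\partial A\subseteq V_M$ disjoint from every window $W_{k'}$, $k'\in S$; this is possible because $d$ controls the index separation of $k$ from both $S$ and the image edge, while each window has size $O(L)$. Since $\mu_\lambda$ is a Markov random field for $\mathcal G_M$, conditionally on $\eta_{|\partial A}$ the restriction $\eta_{|A}$ (which determines $g_k$) is independent of $\eta$ outside $A\cup\partial A$ (which determines $g_{|S}$), so
\[
\mathbb P(g_k=\varphi\mid g_{|S}=\Psi)=\sum_{b}\mathbb P(g_k=\varphi\mid\eta_{|\partial A}=b)\,\mathbb P(\eta_{|\partial A}=b\mid g_{|S}=\Psi).
\]
By strong spatial mixing on the window $W_k$, which lies at distance $\asymp dL/2$ from $\partial A$, together with translation invariance, each factor $\mathbb P(g_k=\varphi\mid\eta_{|\partial A}=b)$ is within $c\,(2L-1)^2e^{-\gamma\,dL/2}$ of $\mu_\lambda^\infty(\phi_k(\eta_{|W_k})=\varphi)=\pi(\varphi)$; averaging over $b$ yields $|\mathbb P(g_k=\varphi\mid g_{|S}=\Psi)-\pi(\varphi)|\le c\,(2L-1)^2e^{-\gamma\,dL/2}$, which is the claim with a rate proportional to $L$ and an adjusted constant.

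I expect the decisive step to be establishing the strong spatial mixing input on this particular family of graphs with constants depending only on $L$: one must carry out a Dobrushin-uniqueness computation for the hard-core specification of maximum degree $(2L-1)^2-1$, or set up and analyze a disagreement-percolation coupling on the king-type conflict graph, and then handle the passage from the finite-volume measures to the infinite-volume measure near the image boundary --- which is precisely what the $\mathrm{dist}(k,\mathcal I_M^c)$ term in Definition~\ref{def:exponential-mixing} absorbs. The other point requiring care, though not deep, is the Markov reduction: $g_{|S}$ is a deterministic coarsening of $\eta$ on the windows $\{W_{k'}\}_{k'\in S}$, so the box $A$ must be chosen so that all of these windows lie strictly outside $A\cup\partial A$ before conditional independence is invoked.
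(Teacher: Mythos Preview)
Your plan is correct and matches the paper's proof in overall structure: encode each patch via the four neighboring anchor offsets, read $g_k$ off a bounded window of the hard-core configuration, invoke strong spatial mixing of the hard-core model for small $\lambda$ to obtain a unique shift-invariant infinite-volume Gibbs measure, and define $\pi$ as the pushforward of its window marginal. The only substantive difference is the source and deployment of the SSM input. The paper does not carry out a Dobrushin or disagreement-percolation computation but simply cites Weitz's tree-threshold result (Corollary~2.6 of \cite{weitz2006}) as a black box; and rather than your buffer-box/Markov-decoupling step, it rewrites the finite-volume hard-core law on $\mathcal G_M$ as the infinite-volume Gibbs measure conditioned on $\{\overline\Theta_{V_M^c}=0\}$, so that both $\mathbb P(g_k=\varphi\mid g_{|S}=\Psi)$ and $\pi(\varphi)$ become conditional/marginal probabilities of the \emph{same} measure $\mu$ on $V_S\cup V_M^c$, and SSM applies directly to their difference. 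Your route via an intermediate annulus $A$ and the spatial Markov property is equally valid and perhaps more transparent about where the $\mathrm{dist}(k,\mathcal I_M^c)$ term enters; the paper's route is shorter once the ``condition on the complement'' identity is written down. (Incidentally, your degree bound $(2L-1)^2-1$ is the correct one for the edge set \eqref{graph edges}; the paper states $(2L+1)^2$, which is a harmless slip since only dependence on $L$ matters.)
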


In contrast to the one-dimensional Markov case (cf.\ \eqref{eqn:stationaryDistrbution}), the distribution $\pi$ does not admit a simple closed form in 2D. Instead, it is constructed in the proof of Theorem~\ref{thm:2d-main-result} as follows:
\begin{enumerate}
    \item \emph{Local marginal and encoding.}
    In Step 2 of the proof of Theorem~\ref{thm:2d-main-result}, in Section \ref{sec:proof-of-2d-converence}, we express the group elements $\{g_k\}_{k\in \mathcal{I}_M}$ associated to each patch as a function of the marginal of the hard-core process on a $2L\times 2L$ neighborhood of the corresponding patch.
    \item \emph{Existence of a Gibbs measure in the infinite graph $\Z^2$.} In Step 3 we prove that, by Corollary~2.6 of~\cite{weitz2006} (see also Lemma \ref{lem: strong mixing hard-core} in Section \ref{sec:proof-of-2d-converence}), if $\lambda$ is sufficiently small (with a threshold depending only on the graph connectivity, determined by $L$), the hard–core process on the finite graph $\mathcal{G}_M$ (see Definition~\ref{def:hardCoreModelGraph}) satisfies a strong spatial mixing property, which implies the existence of a unique Gibbs measure $\mu$ on the infinite graph $\mathbb{Z}^2$. The hard-core model can equivalently be expressed as the restriction of the Gibbs measure to the finite set of vertices $V_M$, conditionally on having no particles in the complement $\Z^2\setminus V_M$.
    \item \emph{Construction of $\pi$.} In Step 4, we combine the two previous points to construct the distribution $\pi$ as a function of the marginal of the Gibbs measure $\mu$ in a $2L\times 2L$ finite graph. Let $V_0\triangleq\{-L,\ldots,L-1\}^2$ and let $\mu_0$ be the marginal of $\mu$ on $V_0$ (so $|V_0|=(2L)^2$). The distribution $\pi$ is defined as the pushforward of $\mu_0$ under the encoding map $\Phi:\{0,1\}^{V_0}\to G$, from (i), that associates each admissible configuration $\eta$ with the corresponding group element in $G$. 
\end{enumerate}
Under this construction, we can prove that the random field $\{g_k\}_{k\in \mathcal{I}_M}$ satisfies the exponential mixing property in Definition \ref{def:exponential-mixing} with respect to $\pi$.

As a noticeable difference with respect to the 1D case, in the 2D case we require the activity parameter $\lambda$ to be smaller than a constant $\lambda_0$, which depends on the degree of the graph defining the hard core model.
In \cite[Corollary~2.6]{weitz2006}), the threshold $\lambda_0$ is given explicitly as $\lambda_0 = \frac{b^b}{(b-1)^{b+1}}$, where $b+1$ is the degree of the graph (the maximum number of neighbors of each node), and one can deduce the simpler lower estimate $\lambda_0 \geq \frac{1}{b-1}$.
In view of Definition~\ref{def:hardCoreModelGraph}, the degree of the graph is $(2L-1)^2-1$, hence
$$
\lambda_0 \geq \dfrac{1}{(2L - 1)^2 - 2}.
$$
The result in  \cite[Corollary~2.6]{weitz2006}) applies to general graphs, and we are not able to determine whether this condition is sharp for the specific type of graph that we are considering here. The necessity of the smallness condition on the activity parameter $\lambda$ to ensure exponential mixing is left as an open problem.

\subsection{Convergence to i.i.d.\ MRA and sample complexity} \label{sec:convergenceTo2DMRA}

This section is the two–dimensional analogue of Section~\ref{sec:convergenceToMRA1D}.
We show that the induced 2D MRA model of Definition~\ref{def:induced MRA 2D}, obtained by extracting patches $\{Z_k\}_{k\in\mathcal{I}_M}$ from an MTD measurement $Z$, matches the standard i.i.d.\ MRA reconstruction problem with patches $\{Y_k\}_{k\in\mathcal{I}_M}$ of the form~\eqref{eq:general-mra}, where the group elements $\{\tilde g_k\}_{k\in\mathcal{I}_M}$ are i.i.d.\ from the distribution $\pi$ specified in Theorem~\ref{thm:2d-main-result}.

Analogously to the 1D setting, we reuse the feature map, estimators, and convergence-rate notion for the 2D model.
Let $\mathcal{F}:\mathbb{R}^{L\times L}\to\mathbb{R}^d$ be the feature map from Section~\ref{sec:MoMandTargetMaps}, adjusted to a domain of 2D images with size $L \times L$. For any estimator function $\widehat{F}$ acting on collections of patches, define (analogously to Definition~\ref{def:markovianAndMRAmodelsFeatures}),
\begin{equation}
    \label{eqn:2d-mra-and-mtd-est}
    \widehat{F}_{\mathsf{MRA}}(M)\;\triangleq\;\widehat{F}\!\left[\{Y_k\}_{k\in\mathcal{I}_M}\right],
    \qquad
    \widehat{F}_{\mathsf{MTD}}(M,m)\;\triangleq\;\widehat{F}\!\left[\{Z_{km}\}_{k\in\mathcal{I}_{\lfloor M/m\rfloor}}\right],
\end{equation}
where $\{Y_k\}$ are MRA observations and $\{Z_k\}$ are 2D MTD patches sampled every $m$ sites in each coordinate; both aim to estimate $\mathcal{F}(X)$.
Due to the strong spatial mixing of the underlying hard-core model (see Lemma \ref{lem: strong mixing hard-core}) for the spatial placement of signal occurrences, subsampling with $m$ large enough ensures enough decorrelation between the selected patches.

The convergence-rate notion mirrors the 1D case (see Definition \ref{def:convergenceRate1D}) with the natural replacement of the sample size by the number of 2D patches, i.e., we say that $\widehat{F}_{\mathsf{MRA}} (M)$ has convergence rate $a(\sigma)$ if
\begin{equation}
    \frac{\mathbb{E}\bigl[\|\widehat{F}_{\mathsf{MRA}}(M)-\mathcal{F}(X)\|^2\bigr]}{a(\sigma)/M^2}
    \;\xrightarrow[M\to\infty]{}\;1.
    \label{eqn:2d-convergence-rate}
\end{equation}

The next theorem (proved in Appendix~\ref{sec:proof-of-sample-complexity}) establishes the connection, in terms of MSE and sample complexity, between the induced MRA model from Theorem \ref{thm:2d-main-result} and the associated i.i.d.\ MRA model.

\begin{theorem} [Convergence to two-dimensional MRA model]
    \label{thm:2d-general-result}
    Under the assumptions of Theorem \ref{thm:2d-main-result}, 
    \begin{enumerate}
        \item let $\{ Z_k \}_{k \in \mathcal{I}_M}$ be the patches extracted from $Z$ as in \eqref{eq:2d-patches}, which, by virtue of Theorem \ref{thm:2d-main-result}, can be written as in \eqref{eq:general-mra}, with $\{ g_k \}_{k\in \mathcal{I}_M}$ a random field over $G$ satisfying the exponential mixing property from Definition \ref{def:exponential-mixing};
        \item let $\{ Y_k \}_{k \in \mathcal{I}_M}$ be measurements of the form \eqref{eq:general-mra}, with $\{ \tilde{g}_k \}_{k\in \mathcal{I}_M}$ i.i.d.\ from the probability distribution $\pi$ over $G$ from Theorem \ref{thm:2d-main-result}.
    \end{enumerate}
    Let $\mathcal{F}: \R^{L\times L}\to \R^d$ be a bounded continuous feature map, and let $\widehat{F}(\cdot)$ be a bounded continuous estimator map such that the estimator $\widehat{F}_{\mathsf{MRA}} (M)$ defined in \eqref{eqn:2d-mra-and-mtd-est} has convergence rate $a(\sigma)$, i.e., \eqref{eqn:2d-convergence-rate} holds for some function $a: (0,\infty) \to (0,\infty)$.
    Then, the following statements hold:
\begin{enumerate}
\item There exists $c_0>0$ such that, for any $m=\lfloor c\log(M) \rfloor$ with $c>c_0$, and taking $M' = \lfloor M/m\rfloor$, we have
$$
    \frac{\mathbb{E} \left[ \left\|
\widehat{F} \left[ \{Z_{km}\}_{k\in\mathcal{I}_{M'}} \right] - \mathcal{F} (X)
    \right\|^2 \right]}{ a(\sigma)/(M')^2} \longrightarrow 1, \quad \text{as} \ M'\to \infty.
    $$
In particular, since $(M')^2\asymp\frac{M^2}{(c\log(M))^2}$, the sample complexity of dependent data matches the sample complexity in the i.i.d.\ case up to a logarithmic factor squared.
\item Suppose $\widehat{F}(\cdot)$ is of the form $\widehat{F} \left[ \{{Y}_{k}\}_{k\in\mathcal{I}_{M}} \right] = \frac{1}{M^2} \sum_{k\in \mathcal{I}_M} F ({Y}_k)$ for some bounded continuous function $F:\R^{L\times L}\to \R^d$, and assume $a(\sigma)\geq \tau$ for some $\tau>0$. Then, 
\[ \limsup_{M\to \infty} \frac{ \mathbb{E} \left[\left\| \widehat{F}\left[ \{Z_{k}\}_{k\in\mathcal{I}_{M}} \right] - \mathcal{F} (X) \right\|^2\right]}{a(\sigma)/M^2}\leq K,\] 
for a constant $K$ that depends only on $\lambda$, $L$, $\norm{F}_\infty,$ $\tau$ and the dimension of $X$.
\end{enumerate}    
\end{theorem}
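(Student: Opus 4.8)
The plan is to follow the strategy of the proof of Theorem~\ref{thm:sampleComplexity1DMTD}, with the Markov-chain mixing of Theorem~\ref{thm:MTD-1D} replaced by the random-field exponential mixing of Theorem~\ref{thm:2d-main-result} (Definition~\ref{def:exponential-mixing}) and the sample count $M'$ replaced by $(M')^2$. In both parts the i.i.d.\ Gaussian noise of the two models is coupled to be identical, so that the law of a patch $Z_k$ and that of an observation $Y_k$ differ only through the law of the latent group element; in particular $Z_k=Y_k$ on the event that the two group elements agree.

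\emph{Part (i).} Fix $m=\lfloor c\log M\rfloor$ and write $N=(M')^2$. After coupling the noise, the total variation distance $d_{\mathrm{TV}}$ between $\mathcal{L}(\{Z_{km}\}_{k\in\mathcal{I}_{M'}})$ and $\mathcal{L}(\{Y_k\}_{k\in\mathcal{I}_{M'}})$ is at most $d_{\mathrm{TV}}(\mathcal{L}(\{g_{km}\}_{k\in\mathcal{I}_{M'}}),\pi^{\otimes N})$. I would bound the latter by the telescoping inequality: ordering the subsampled sites as $k^{(1)}m,\dots,k^{(N)}m$ and setting $S_j=\{k^{(1)}m,\dots,k^{(j)}m\}$, it is at most $\sum_j\max_{\Psi}\tfrac{1}{2}\sum_{\varphi\in G}|\mathbb{P}(g_{k^{(j)}m}=\varphi\mid g_{|S_{j-1}}=\Psi)-\pi(\varphi)|$, which by Definition~\ref{def:exponential-mixing} is $\le\tfrac{1}{2}|G|c_1\sum_j\exp(-\gamma\,\mathrm{dist}(k^{(j)}m,S_{j-1}\cup\mathcal{I}_M^c))$. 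Distinct subsampled sites are at $\ell^\infty$-distance at least $m$, so $\mathrm{dist}(k^{(j)}m,S_{j-1})\ge m$; the distance to $\mathcal{I}_M^c$ is also at least $m$ for every subsampled site once the outermost layer of the $M'\times M'$ subsampling grid is discarded, which changes $M'$ by an additive constant and hence $(M')^2$ by a $1+o(1)$ factor. The bound is therefore $\le\tfrac{1}{2}|G|c_1\,N\,e^{-\gamma m}\le C\,M^{2-\gamma c}$, which is $o((M')^{-2})$ as soon as $\gamma c>4$, so $c_0=4/\gamma$ works. Boundedness of $\mathcal{F}$ and $\widehat{F}$ then gives $|\mathbb{E}\|\widehat{F}[\{Z_{km}\}]-\mathcal{F}(X)\|^2-\mathbb{E}\|\widehat{F}[\{Y_k\}_{k\in\mathcal{I}_{M'}}]-\mathcal{F}(X)\|^2|\le(\|\widehat{F}\|_\infty+\|\mathcal{F}\|_\infty)^2 d_{\mathrm{TV}}=o(a(\sigma)/(M')^2)$, and combining with the convergence rate~\eqref{eqn:2d-convergence-rate} of $\widehat{F}_{\mathsf{MRA}}(M')$ yields the claimed limit.

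\emph{Part (ii).} Set $V=\widehat{F}[\{Z_k\}_{k\in\mathcal{I}_M}]=\frac{1}{M^2}\sum_k F(Z_k)$ and use the bias--variance split $\mathbb{E}\|V-\mathcal{F}(X)\|^2=\|\mathbb{E} V-\mathcal{F}(X)\|^2+\operatorname{tr}\Cov(V)$. The i.i.d.\ estimator $\frac{1}{M^2}\sum_k F(Y_k)$ has mean-square error $\|\mathbb{E}[F(Y_1)]-\mathcal{F}(X)\|^2+M^{-2}\operatorname{tr}\Cov(F(Y_1))$; since this over $a(\sigma)/M^2$ tends to $1$, the squared-bias term must vanish, i.e.\ $\mathbb{E}[F(Y_1)]=\mathcal{F}(X)$. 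Hence $\mathbb{E} V-\mathcal{F}(X)=\frac{1}{M^2}\sum_k(\mathbb{E}[F(Z_k)]-\mathbb{E}[F(Y_1)])$, and Definition~\ref{def:exponential-mixing} with $S=\emptyset$ gives $\|\mathbb{E}[F(Z_k)]-\mathbb{E}[F(Y_1)]\|\le\|F\|_\infty|G|c_1 e^{-\gamma\,\mathrm{dist}(k,\mathcal{I}_M^c)}$; summing over $\mathcal{I}_M$ and using that only $O(M)$ sites lie at each $\ell^\infty$-distance from $\mathcal{I}_M^c$ against the summable factor $e^{-\gamma\ell}$, one gets $\|\mathbb{E} V-\mathcal{F}(X)\|=O(1/M)$, so the squared bias is $O(1/M^2)$. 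For the variance, writing $F=(F_1,\dots,F_d)$ and conditioning on $(g_k,g_{k'})$ (the noise being independent) gives $\Cov(F_a(Z_k),F_a(Z_{k'}))=\Cov(h_a(g_k),h_a(g_{k'}))$ with $h_a(g)=\mathbb{E}_\varepsilon[F_a(P(g\cdot\overline{X})+\varepsilon)]$ and $\|h_a\|_\infty\le\|F\|_\infty$; rewriting this covariance as $\Cov(\mathbb{E}[h_a(g_k)\mid g_{k'}],h_a(g_{k'}))$ and bounding the oscillation of $g'\mapsto\mathbb{E}[h_a(g_k)\mid g_{k'}=g']$ via Definition~\ref{def:exponential-mixing} with $S=\{k'\}$ (and symmetrically with $S=\{k\}$) yields $|\Cov(h_a(g_k),h_a(g_{k'}))|\le 2|G|c_1\|F\|_\infty^2\exp\big(-\gamma\min(d(k,k'),\max(\mathrm{dist}(k,\mathcal{I}_M^c),\mathrm{dist}(k',\mathcal{I}_M^c)))\big)$. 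Summing over $a$ and over $k,k'$ and invoking the elementary $\ell^\infty$-shell bounds $\sum_{k'}e^{-\gamma d(k,k')}=O(1)$ and $\sum_{k'\in\mathcal{I}_M}\mathrm{dist}(k',\mathcal{I}_M^c)\,e^{-\gamma\,\mathrm{dist}(k',\mathcal{I}_M^c)}=O(M)$, the resulting double sum is $O(M^2)$, so $\operatorname{tr}\Cov(V)=O(1/M^2)$. Therefore $\mathbb{E}\|V-\mathcal{F}(X)\|^2=O(1/M^2)$, and dividing by $a(\sigma)/M^2\ge\tau/M^2$ gives $\limsup_M\mathbb{E}\|V-\mathcal{F}(X)\|^2/(a(\sigma)/M^2)\le K$ with $K$ depending only on $\lambda,L,\|F\|_\infty,\tau$ and the dimension of $X$.

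\emph{Main obstacle.} Relative to the one-dimensional case, the genuinely new difficulty is the boundary of the index set: in 1D the Markov chain has only a single ``burn-in'' site, whereas the 2D subsampling grid has $\Theta(M')$ sites within distance $m$ of $\mathcal{I}_M^c$ at which the latent marginal is not close to $\pi$, and the covariance-decay bound above also degrades there. The two observations that make the argument go through are that discarding the outermost $O(1)$ layers of subsampled sites does not affect the asymptotic count in part (i), and that boundary effects contribute only $O(M)$ (rather than $O(M^2)$) to the relevant sums in part (ii) because $\ell\,e^{-\gamma\ell}$ is summable; the remaining care is in matching the normalization $(M')^2$ (resp.\ $M^2$) exactly rather than merely up to a $1+o(1)$ factor.
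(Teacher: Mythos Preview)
Your proposal is correct and follows the same strategy as the paper's own proof: a telescoping total-variation bound on the joint law of the subsampled group elements for part~(i) (arriving at the same threshold $c_0=4/\gamma$), and a bias--variance decomposition with mixing-based covariance control for part~(ii). Your covariance estimate in~(ii), obtained by first conditioning on the latent group elements and using the symmetric form $\Cov(h(g_k),h(g_{k'}))=\Cov(\mathbb{E}[h(g_k)\mid g_{k'}],h(g_{k'}))$, carries $\max\bigl(\dist(k,\mathcal{I}_M^c),\dist(k',\mathcal{I}_M^c)\bigr)$ in the exponent rather than the paper's $\dist(\{k,k'\},\mathcal{I}_M^c)=\min\bigl(\dist(k,\mathcal{I}_M^c),\dist(k',\mathcal{I}_M^c)\bigr)$; this sharper form is exactly what makes the double sum $O(M^2)$, whereas the paper's claimed uniform-in-$i_0$ bound $\sum_{j}\exp\!\bigl(-\gamma\min(d(i_0,j),\dist(\{i_0,j\},\mathcal{I}_M^c))\bigr)\le K_2$ fails when $i_0$ is adjacent to $\mathcal{I}_M^c$, so your version is in fact the more careful one.
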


Theorem~\ref{thm:2d-general-result} extends all one–dimensional results to the two–dimensional MTD setting. Under the hard–core process assumption, the $L\times L$ patches $\{Z_k\}_{k\in\mathcal{I}_M}$ admit the induced 2D MRA representation with latent group elements $\{g_k\}$ forming an exponentially–mixing random field. Consequently, any estimator that attains MSE convergence rate $a(\sigma)$ in the i.i.d.\ 2D MRA model achieves the same rate on MTD data up to a squared logarithmic overhead arising from decorrelation: choosing a grid spacing $m\simeq C\log M$ in \emph{each} coordinate yields an effective sample size $M'^2\asymp M^2/(\log M)^2$, so the gap to i.i.d.\ is $(\log M)^2$.

As in 1D, statement (ii) in Theorem~\ref{thm:2d-general-result} shows that empirical–average estimators (when applicable) need no subsampling and thus exploit all $M^2$ patches. Their MSE continues to scale as $a(\sigma)/M^2$ under exponential mixing, matching the i.i.d.\ rate up to constants.
The same implication holds for the method of moments: empirical $n$th–order moments computed from the 2D MTD patches converge to their i.i.d.\ counterparts at rate $a(\sigma)/M^2$. If $n_{\min}$ is the minimal moment order needed for orbit recovery in the i.i.d.\ model, then in the low-SNR regime the required \emph{number of patches} $N=M^2$ satisfies $N=\omega(\sigma^{2n_{\min}})$, identical scaling to the i.i.d.\ case.

\section{Numerical experiments}
\label{sec:empirical_simulation}

In Section~\ref{sec:convergenceToMRA1D}, we established a precise statistical connection between the MTD model, its Markovian MRA formulation, and the corresponding induced i.i.d.\ MRA model. In particular, Theorem~\ref{thm:sampleComplexity1DMTD} and Corollary~\ref{cor:MoM-MTD} show that, for empirical-average estimators such as empirical moments, the sample complexity in the MTD/Markovian MRA setting matches that of the induced i.i.d.\ model up to constant factors. Moreover, this connection relies on the convergence of the latent Markov chain to its stationary distribution, which governs the induced i.i.d.\ model and underlies the moment-based transfer of sample-complexity bounds.

The numerical experiments in this section serve three purposes. First, we empirically validate the stationary distribution and quantify the convergence of the latent Markov chain to stationarity. Second, we provide evidence that moments up to order $3$ suffice for recovery in the induced i.i.d.\ model, whereas second-order information alone is insufficient in the regimes tested here, thereby supporting the conclusion that $n_{\min}=3$. Third, we compare the observable moments of the Markovian and induced i.i.d.\ models under additive noise, illustrating the predicted scaling laws.

\subsection{Convergence to the stationary distribution}

We next validate the explicit stationary distribution $\pi$ in~\eqref{eqn:stationaryDistrbution} empirically, and quantify how quickly the Markov chain $\{g_k\}_{k=1}^M$ approaches stationarity as the number of patches increases. 
For an observation composed of $M$ patches, we form the empirical distribution
\begin{align}
    \widehat{\pi}_M(g) \triangleq\frac{1}{M}\sum_{k=1}^{M}\mathbf{1}\{g_k=g\},
    \qquad g\in \{0,\ldots,L-1\}\times\{0,\ldots,L\},
    \label{eq:pi_hat_def}
\end{align}
where, in the simulations, the latent states $g_k=(g_k^{(1)},g_k^{(2)})$ are computed from the ground-truth signal start times $\{t_i\}$ used to generate the MTD measurement (see the construction in Theorem~\ref{thm:MTD-1D} and the discussion in Section~\ref{sec:splitingIntoPatches}). 
We quantify the discrepancy between $\widehat{\pi}_M$ and the theoretical stationary distribution $\pi$ using the total variation distance
\begin{align}
    \mathrm{TV}\!\left(\widehat{\pi}_M,\pi\right)
    \;\triangleq\;
    \frac{1}{2}\sum_{g}\left|\widehat{\pi}_M(g)-\pi(g)\right|.
    \label{eq:TV_def}
\end{align}
To connect convergence of $\widehat{\pi}_M$ to the method-of-moments viewpoint (Section~\ref{sec:method-of-moments}), we also track convergence at the level of induced \emph{noiseless} moments. Specifically, we form the empirical moment estimator $\widehat{T}_{\mathrm{MTD}}^{(2)} \triangleq \frac{1}{M}\sum_{k=1}^{M} \big(Z_k\big)^{\otimes 2}$, as defined in~\eqref{eqn:mtdempiricalMoMDef}, for the observations $\{Z_k\}_{k=1}^M$ defined in~\eqref{patches-1d} (without the noise term), and compare them to the corresponding noiseless population moments $T^{(2)}_{\overline{X},\pi}$ defined in~\eqref{eqn:MoMdefinition}, which are evaluated exactly by summation over $\supp(\pi)$.
For each $(\lambda,M)$, we repeat the simulation over multiple independent Monte-Carlo trials and report the empirical means of $\mathrm{TV}(\widehat{\pi}_M,\pi)$ and of the MSE of $\mathbb{E}\big\|\widehat{T}_{\mathrm{MTD}}^{(2)}-T^{(2)}_{\overline{X},\pi}\big\|_F^2$.

Figure~\ref{fig:3} summarizes the results.
Panel~(a) plots the Monte-Carlo average of $\mathrm{TV}(\widehat{\pi}_M,\pi)$ as a function of $M$, for several values of $\lambda$.
Across all tested $\lambda$ values, the curves exhibit an approximately linear decay with slope close to $-1/2$ on log--log axes, indicating the scaling
\begin{align}
    \mathbb{E}\,\mathrm{TV}\!\left(\widehat{\pi}_M,\pi\right)
    \;\asymp\;
    \frac{C(\lambda,L)}{\sqrt{M}},
    \label{eq:TV_scaling_empirical}
\end{align}
for a constant $C(\lambda,L)$ depending on the mixing properties of the chain (and therefore on $\lambda$ and $L$).
Panel~(b) report the corresponding moment estimation error and exhibit a $M^{-1}$ scaling for the moment MSEs:
\begin{align}
    \mathbb{E}\Big\|\widehat{T}_{\mathrm{MTD}}^{(2)}-T^{(2)}_{\overline{X},\pi}\Big\|_F^2 \;\asymp\; \frac{C_2(\lambda,L,X)}{M},    \label{eq:moment_MSE_scaling_empirical}
\end{align}
with constants depending on $(\lambda,L)$ (and on the fixed signal $X$). In addition, it can be seen that a smaller $\lambda$ enhances the convergence to the stationary distribution, as discussed in Section~\ref{sec:1dimentionalMTDresults}.
These empirical observations are consistent with the exponential mixing bound in~\eqref{exponential mixing MC thm}: as $M$ grows, the effective number of approximately independent samples increases linearly with $M$, yielding  $M^{-1/2}$ fluctuations for TV distance and the corresponding $M^{-1}$ scaling for MSE of the moment estimation.

\begin{figure}[t]
    \centering
    \includegraphics[width=0.85\linewidth]{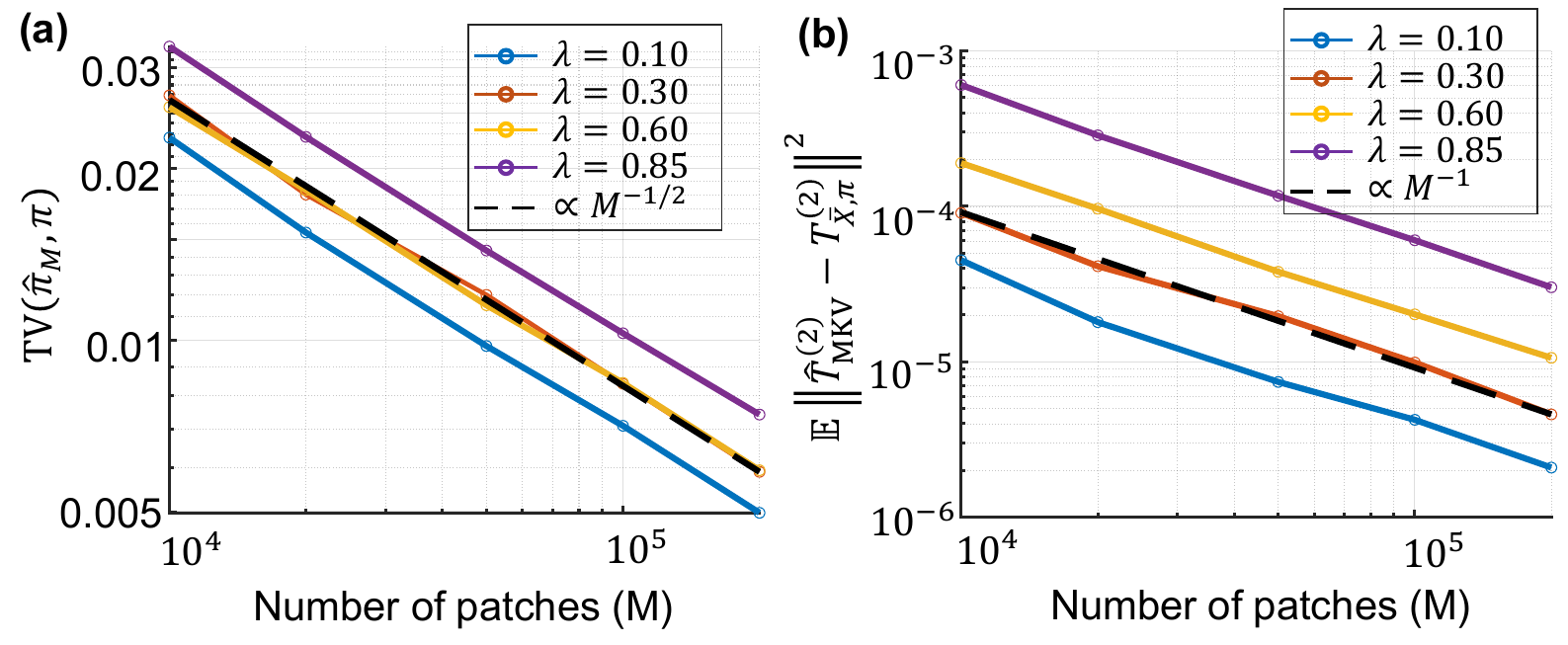}   \caption{\textbf{Empirical convergence to stationarity and induced moment scaling.}
    \textbf{(a)} For each $\lambda$, we simulate the Markov chain $\{g_k\}$ induced by the 1D MTD patching procedure (Theorem~\ref{thm:MTD-1D}), estimate its stationary distribution by $\widehat{\pi}_M$ in~\eqref{eq:pi_hat_def}, and report the Monte-Carlo mean of the total variation distance $\mathrm{TV}(\widehat{\pi}_M,\pi)$ in~\eqref{eq:TV_def}.  \textbf{(b)} Monte-Carlo means of  $\mathbb{E}\|\widehat{T}^{(2)}-T^{(2)}_{\overline{X},\pi}\|_F^2$, where $T^{(2)}_{\overline{X},\pi}$ denotes the noiseless population moments and $\widehat{T}_{\mathrm{MTD}}^{(2)}$ the corresponding empirical noiseless estimators. The dashed reference lines indicate slopes $-1/2$ in (a) and $-1$ in (b), consistent with the observed $M^{-1/2}$ scaling for TV and $M^{-1}$ scaling for moment MSE.}
    \label{fig:3}
\end{figure}

\subsection{Empirical identification of the minimal moment order in the induced i.i.d.\ MRA model}
\label{subsec:empirical_setup}

This subsection provides empirical evidence regarding the moment cut-off order $n_{\min}$ of the induced i.i.d.\ MRA model associated with the MTD construction. In particular, our goal is to examine whether recovery is already possible from moments up to order $3$, or whether higher-order moment information may be necessary.

To isolate identifiability from additive noise and finite-sample effects, we work in the noise-free population setting and evaluate the noiseless group-transformed moments~\eqref{eqn:MoMdefinition}. Specifically, let $X_\star\in\R^{L}$ be the ground-truth signal, and let $\overline{X}_\star=(\tilde X_\star,\tilde X_\star)$ denote its lifted representation, where $\tilde X_\star=(X_\star,0_L)$. For a candidate $X\in\R^L$ (with $\overline{X}$ defined analogously), we denote by $T^{(2)}_{\overline{X},\pi}$ and $T^{(3)}_{\overline{X},\pi}$ the corresponding induced noiseless moments, as defined in~\eqref{eqn:MoMdefinition}.

In our experiments, we generate a ground-truth signal $X_\star$, with i.i.d.\ Gaussian entries, compute the target second-order and third-order moments $\big(T^{(2)}_{\overline{X}_\star,\pi},\,T^{(3)}_{\overline{X}_\star,\pi}\big)$ via~\eqref{eqn:MoMdefinition}, and then attempt recovery by moment matching using either $T^{(2)}_{\overline{X}_\star,\pi}$ alone or the pair $\big(T^{(2)}_{\overline{X}_\star,\pi},\,T^{(3)}_{\overline{X}_\star,\pi}\big)$. More precisely, we estimate $X$ by solving the nonconvex moment-matching problem
\begin{align}
    \widehat{X} \;\in\;
    \arg\min_{X\in\R^{L}} \Bigg\{
    \frac{\big\|T^{(2)}_{\overline{X},\pi}-T^{(2)}_{\overline{X}_\star,\pi}\big\|_F^2}
    {\big\|T^{(2)}_{\overline{X}_\star,\pi}\big\|_F^2} \;+\; w_3\,    \frac{\big\|T^{(3)}_{\overline{X},\pi}-T^{(3)}_{\overline{X}_\star,\pi}\big\|_F^2}
    {\big\|T^{(3)}_{\overline{X}_\star,\pi}\big\|_F^2}
    \Bigg\},
    \label{eq:empirical_obj_m2_m3}
\end{align}
where $\|\cdot\|_F$ denotes the Frobenius norm. We consider both the case $w_3>0$ (specifically, $w_3=3$) and the second-moment-only variant $w_3=0$.

We minimize~\eqref{eq:empirical_obj_m2_m3} using \texttt{fminunc} of Matlab with multiple random restarts. Each restart is initialized with $X^{(0)}\sim\mathcal{N}(0,I_L)$ and then normalized to satisfy $\|X^{(0)}\|_2=1$, thereby fixing the global scale. We run $n_{\mathrm{restarts}}$ independent initializations and report the solution attaining the smallest objective value. Since the optimization problem is non-convex, this procedure does not provide a theoretical guarantee of convergence to the global optimum, even with multiple restarts. Nevertheless, similar non-convex moment-matching formulations, solved via local optimization with random initialization, are commonly used in closely related recovery problems; see, for example,~\cite{boumal2018heterogeneous,bendory2017bispectrum}. The reference implementation is provided in \href{https://github.com/AmnonBa/sample-complexity-mkv-mtd}{https://github.com/AmnonBa/sample-complexity-mkv-mtd}.

The results are shown in Figure~\ref{fig:4}. When matching moments up to order $3$ with $w_3=3$, the optimizer consistently returns an estimate $\widehat{X}$ that matches the target moments to numerical precision and achieves a small recovery error. By contrast, when matching only the second moment ($w_3=0$), the optimizer typically converges to alternative solutions that fit $T^{(2)}_{\overline{X}_\star,\pi}$ well but still deviate substantially from $X_\star$, consistent with non-identifiability from second-order information alone in this induced model. We observe the same qualitative behavior across a range of density parameters $\lambda$ and for multiple signal lengths $L$, using randomly generated ground-truth signals. This indicates that the phenomenon is robust and not tied to a specific choice of $\lambda$ or dimension. Overall, these experiments provide empirical evidence that moments up to order $3$ suffice for recovery in the induced i.i.d.\ model in the regimes tested here, thereby supporting the conclusion that $n_{\min}=3$.

Based on these empirical observations, we make the following conjecture.

\begin{conjecture}[Sample complexity of the induced i.i.d.\ MRA model]
\label{conj:sample_complexity_induced_MRA}
Consider the induced i.i.d.\ MRA model associated with the one-dimensional MTD construction, under the assumptions of Section~\ref{sec:mainResults1D}. Then, for a generic signal $X_\star$, the  moment cut-off order is  $n_{\min}=3$.
Consequently, in the low-SNR regime, recovery of $X_\star$ up to the natural inherent group action requires sample complexity $M=\omega(\sigma^{6})$.
\end{conjecture}

\begin{figure}[t]
    \centering
    \includegraphics[width=1.0\linewidth]{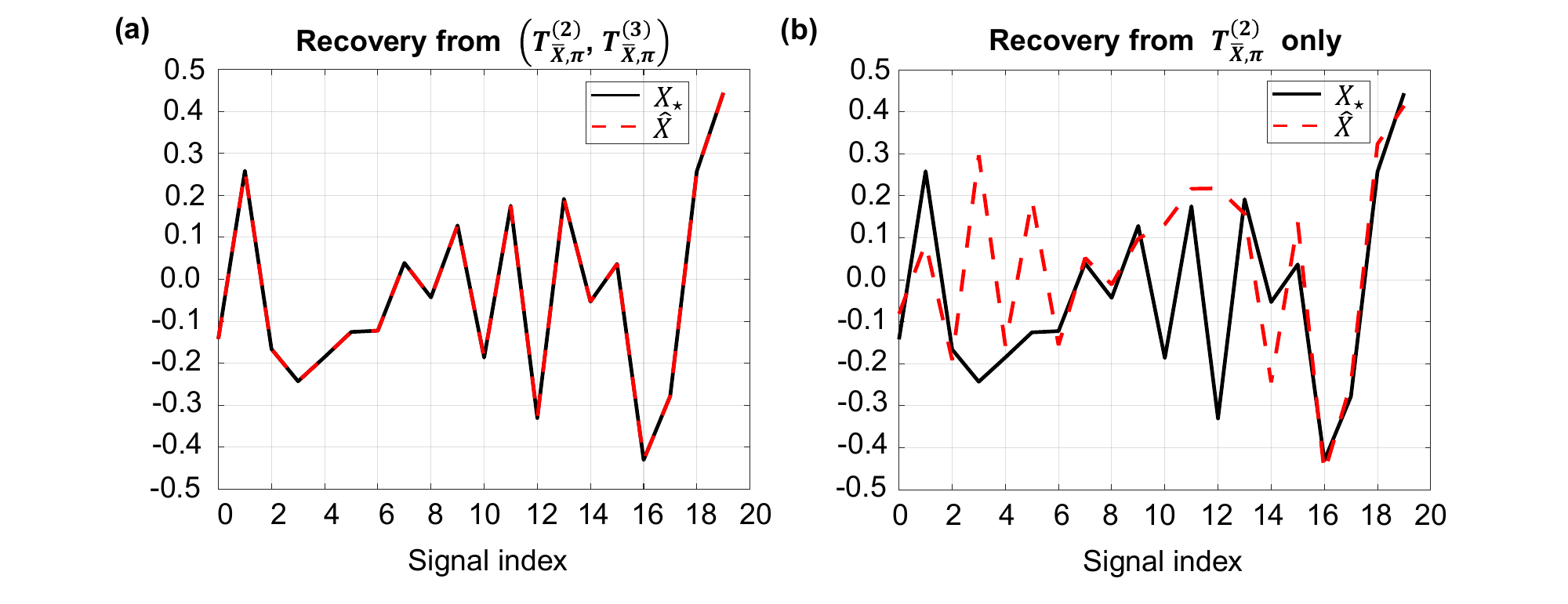}    \caption{\textbf{Recovery from second- and third-order moments versus second-order moments only.}
    Noise-free population experiment in the induced i.i.d.\ MRA model.
    The black curve shows the ground-truth signal $X_\star$, and the red dashed curve shows the recovered estimate $\widehat{X}$ after optimal cyclic alignment.
    \textbf{(a)} Using both $\big(T^{(2)}_{\overline{X}_,\pi},T^{(3)}_{\overline{X},\pi}\big)$ yields accurate recovery.
    \textbf{(b)} Using $T^{(2)}_{\overline{X},\pi}$ alone leads to an ambiguous reconstruction that can match $T^{(2)}_{\overline{X},\pi}$ well but deviates substantially from $X_\star$, illustrating the need for third-order information in this induced model. In this simulation, we use $\lambda=0.3$, although similar behavior is observed for other values of $\lambda$.}
    \label{fig:4}
\end{figure}

\subsection{Empirical Markovian-i.i.d.\ moment discrepancy under additive noise}
\label{subsec:empirical_markov_vs_iid_moments}

This subsection empirically illustrates the connection established in Section~\ref{sec:convergenceToMRA1D} by quantifying, at the level of observable moments, how close the dependent MTD patch process (equivalently, the Markovian induced MRA model of Theorem~\ref{thm:MTD-1D}) is to the induced i.i.d.\ MRA model with stationary distribution~$\pi$.
Whereas Theorem~\ref{thm:sampleComplexity1DMTD} guarantees i.i.d.-like variance scaling for empirical averages under Markovian sampling (up to constants determined by mixing), here we explicitly measure the finite-$M$ discrepancy between the two models, including its dependence on the noise level $\sigma$.

\subsubsection{Models and moment estimators.}
Fix $L$ and $\lambda$, and let $\pi$ be the stationary distribution in~\eqref{eqn:stationaryDistrbution}.
We generate $M$ consecutive MTD patches $\{Z_k\}_{k=1}^M$ under Assumption~\ref{assumption MTD} with additive noise $\varepsilon_k\sim\mathcal{N}(0,\sigma^2 I_L)$, and compute empirical moments $\widehat T^{(n)}_{\mathsf{MTD}}(M)$ for $n \in \{2,3\}$, as defined in~\eqref{eqn:mtdempiricalMoMDef}.
We then compare these to the population moments of the induced i.i.d.\ MRA model with $(g_k^{(1)},g_k^{(2)})\sim\pi$,
\[
    T^{(n)}_{\mathsf{MRA}}(\sigma)\;\triangleq\;\mathbb{E}\!\left[Y^{\otimes n}\right],\qquad n\in\{2,3\},
\]
which we evaluate in the population limit by exact summation over $\supp(\pi)$ together with the standard Gaussian corrections for additive noise.
We report the MSE between these moments by
\begin{align}
    \mathcal{E}_n(M,\sigma) \triangleq \mathbb{E}\!\left[\big\|\widehat T^{(n)}_{\mathsf{MTD}}(M)-T^{(n)}_{\mathsf{MRA}}(\sigma)\big\|_F^2\right],
    \qquad n\in\{2,3\}, \label{eq:cross_moment_error_def}
\end{align}
approximated via Monte-Carlo averaging over independent trials.

\subsubsection{Observed scaling laws.}
Figure~\ref{fig:5} reports $\mathcal{E}_2(M,\sigma)$ and $\mathcal{E}_3(M,\sigma)$ as functions of $M$ for several noise levels. Three expected phenomena are visible.
First, for fixed $\sigma$, the curves decay with slope close to $-1$ on log-log axes, consistent with the $1/M$ variance scaling for empirical averages, in agreement with Theorem~\ref{thm:sampleComplexity1DMTD}(ii). Second, the separation across noise levels matches the standard noise-dominated moment-estimation scaling,
\begin{align}
    \mathcal{E}_2(M,\sigma)\;\propto\;\frac{\sigma^{4}}{M},
    \qquad    \mathcal{E}_3(M,\sigma)\;\propto\;\frac{\sigma^{6}}{M},
\end{align}
as the noise increases, as discussed in Section~\ref{sec:method-of-moments}.
Third, at very low noise the curves exhibit a nonzero floor attributable to finite-$M$ convergence of the Markov chain $\{g_k\}$ to stationarity: since $T^{(n)}_{\mathsf{MRA}}(\sigma)$ is defined under the stationary law $\pi$, any residual deviation of the state distribution from $\pi$ induces a bias in the Markovian moments, consistent with the total-variation convergence behavior in Figure~\ref{fig:3}(a).

\begin{figure}[t]
    \centering
    \includegraphics[width=1.0\linewidth]{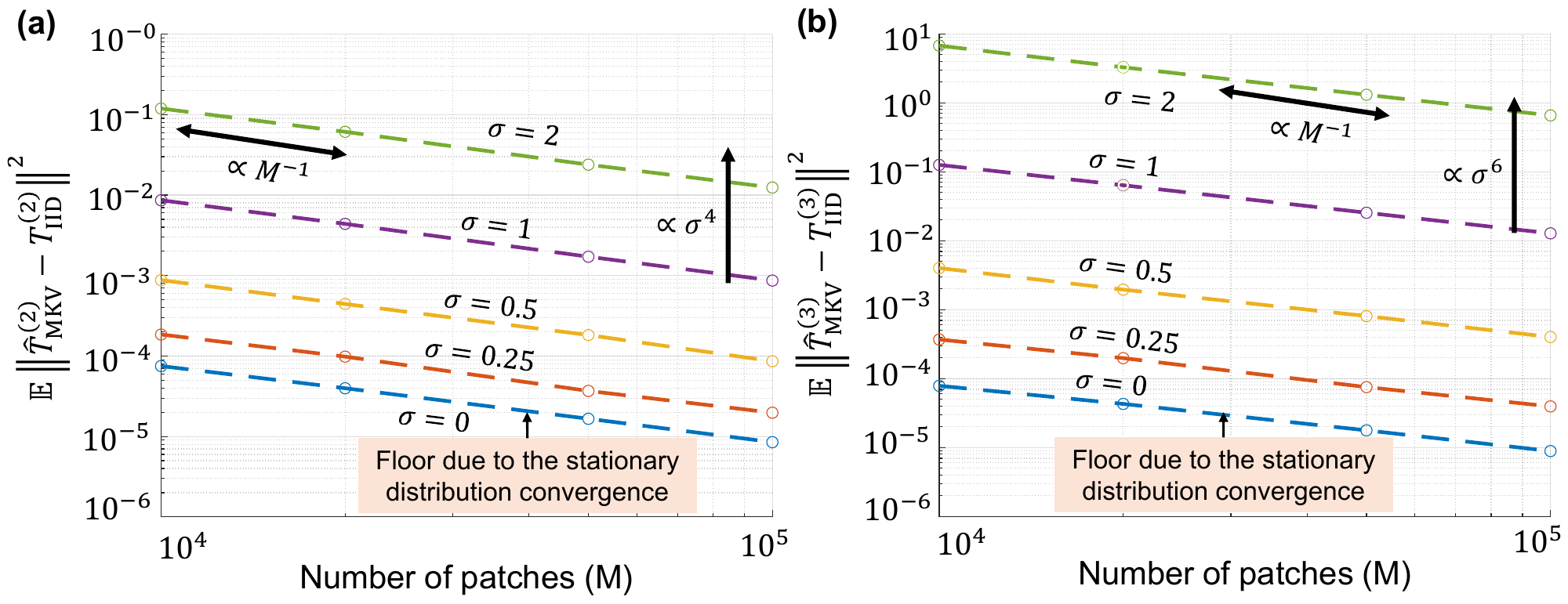}
    \caption{\textbf{Markovian-i.i.d.\ cross-model moment discrepancy under additive noise.}
    We compare the empirical moments computed from $M$ consecutive MTD patches (Markovian induced MRA), $\widehat{T}^{(n)}_{\mathsf{MTD}}(M)=\frac{1}{M}\sum_{k=1}^{M}(Z_k)^{\otimes n}$, to the population moments of the induced i.i.d.\ model under the stationary distribution $\pi$, $T^{(n)}_{\mathsf{MRA}}(\sigma)=\mathbb E[Y^{\otimes n}]$, and plot the Monte-Carlo estimates of $\mathcal{E}_n(M,\sigma)=\mathbb E\|\widehat T^{(n)}_{\mathsf{MTD}}(M)-T^{(n)}_{\mathsf{MRA}}(\sigma)\|_F^2$ for $n=2$ (panel a) and $n=3$ (panel b), across several noise levels $\sigma$.
    The dashed lines indicate the empirical $M^{-1}$ decay (variance of empirical averages), while the vertical separation between curves is consistent with the noise-dominated scaling $\mathcal{E}_2\propto \sigma^4/M$ and $\mathcal{E}_3\propto \sigma^6/M$.
    At very low noise, the curves exhibit a nonzero floor attributable to finite-$M$ convergence of the Markov chain state distribution to $\pi$ (cf.\ Figure~\ref{fig:3}(a)).}
    \label{fig:5}
\end{figure}

\section{Discussion and outlook} \label{sec:outlook}

\subsection{Comparison with maximum-likelihood approaches}
A natural alternative under the fully specified MTD model is maximum-likelihood estimation based on the marginal likelihood, obtained by integrating out the latent particle locations. While statistically appealing, this formulation is computationally challenging in our setting: even after marginalization, evaluating the likelihood requires summing over a combinatorially large collection of admissible latent configurations, which grows rapidly with the problem size. Existing related approaches therefore rely on approximate expectation-maximization / approximate marginal-likelihood formulations on simplified patch-level models rather than on exact maximum-likelihood~\cite{kreymer2022approximate, kreymer2023stochastic}. Such approximations can be effective computationally, but they still lead to nonconvex optimization problems and presently lack the kind of explicit low-SNR sample-complexity guarantees derived here. By contrast, our approach avoids direct likelihood optimization by reducing the problem to an induced MRA model, thereby enabling analysis through low-order moments and yielding explicit statistical guarantees.

\subsection{Sample complexity for the MTD model}

In this work, we establish a formal connection between the MTD model and the classical i.i.d.\ MRA problem. In the one-dimensional setting, the MTD model is cast as a Markovian variant of MRA, while in two dimensions, it is cast as  a hard-core interaction model. We show that the MSE in the Markovian or hard-core setting can be upper bounded by that of the corresponding i.i.d.\ MRA model. The resulting i.i.d.\ MRA model falls within the standard MRA paradigm, where the sample complexity is known to depend on the minimal moment that uniquely determines the orbit of the underlying signal.
While this minimal moment has been explicitly characterized for several standard MRA models, determining it for the i.i.d.\ model induced by our Markovian construction remains an open problem. In Section~\ref{sec:empirical_simulation}, however, we address this question empirically and provide evidence that moments up to order three suffice for recovery in the induced i.i.d.\ MRA model, resulting a sample complexity of $M = \omega (\sigma^6)$. A rigorous characterization of the moment cut-off for this induced model remains an interesting direction for future work, and would lead to a sharper and fully explicit sample-complexity bound for the MTD model.

\subsection{MTD with an algebraic structure}

A natural extension of the MTD model is to consider the case where the signal instances are related through an underlying algebraic structure. In this setting, referred to as the MTD model with algebraic structure, multiple transformed copies of a base signal are embedded at unknown locations within a long, noisy observation $Z$~\cite{bendory2019multi}. Specifically, let $g_i \in G$ be i.i.d.\ group elements sampled from a distribution $\rho$ over a compact group $G$, and define $X_i = g_i \cdot X$, where the group action $\cdot$ encodes the transformation applied to the signal. The observed signal is modeled as
\begin{equation}
Z = \sum_{i=0}^{N-1} S(t_i) * X_i + \varepsilon, \label{eqn:MTDmodelGeneral}
\end{equation}
where $*$ denotes linear convolution, $S(t_i)$ are unknown location indicators determining the positions of each transformed signal copy, and $\varepsilon \sim \mathcal{N}(0, \sigma^2 I_{LM \times LM})$ represents additive Gaussian noise. As in the homogeneous case (see\eqref{eqn:MTDmodelHomogenous}), the goal is to estimate the signal based on the noisy superposition of its transformed instances.
Due to the invariance of the statistical distribution of $Z$ under the action of $G$, i.e., $Z$ has the same distribution whether the underlying signal is $X$ or $g \cdot X$ for any $g \in G$, the task reduces to recovering the $G$-orbit of $X$, namely the set $\{g \cdot X \mid g \in G\}$.

Adapting the current framework to this more general algebraic setting introduces several conceptual challenges. First, the model may involve continuous rather than discrete signals, necessitating a reformulation of the Markov property and the hard-core interaction model in a continuous domain. Second, the composition of group actions under convolution does not necessarily yield a group structure, which complicates both modeling and analysis. One possible resolution is to restrict attention to a suitable subset of the product group, but this requires ensuring that the resulting structure remains compact.

\subsection{Connection to cryo-EM and cryo-ET}

A prominent example of the MTD model with algebraic structure arises in cryo-EM, which can be viewed as a specific instance of the model in~\eqref{eqn:MTDmodelGeneral} with the addition of a tomographic projection. In cryo-EM, the observed data $Z \in \mathbb{R}^{LM \times LM}$ consists of a 2D micrograph containing multiple occurrences of images $X_i \in \mathbb{R}^{L \times L}$, each given by
\[
X_i = P(g_i \cdot X),
\]
where $X$ is the underlying 3D volume to be estimated, $g_i$ are random elements of the three-dimensional special orthogonal group $G = \mathsf{SO}(3)$, and $P$ denotes a tomographic projection operator, which may also incorporate additional linear effects such as the microscope's point spread function and discretization.
A key challenge in this setting is that the reduction technique developed in this work, from the dependent (e.g., Markovian of hard-core) model to an i.i.d.\ MRA model, applies only after the group action and projection have been applied. That is, the observed cropped instances take the form
\[
P_2\left(g_i^{(2)} \cdot P_1\left(g_i^{(1)} \cdot X\right)\right),
\]
where $g_i^{(1)}$ and $g_i^{(2)}$ are distinct group actions and $P_1$, $P_2$ are projection operators. This nested composition of group actions and projections does not naturally fit into the classical MRA framework, and thus falls outside the scope of the current analysis.

While extending our methodology to the cryo-EM setting poses challenges, primarily due to the complexities introduced by projection imaging, a more tractable and closely related direction arises in cryo-electron tomography (cryo-ET), particularly in the context of the \emph{sub-tomogram averaging} problem. In this setting, tomographic projections are absent, and the observations more closely resemble the raw signal model studied in this work, making it a promising candidate for applying the techniques developed here.

\section*{Acknowledgments}
T.B. is supported by the BSF grant no. 2020159, by the NSF-BSF grant no. 2024791, and the ISF grant no. 1924/21. C.E.-Y. is supported by the Ram\'on y Cajal 2022 grant RYC2022-035966-I.

\bibliographystyle{abbrv}
\bibliography{mybibfile}

\begin{thebibliography}{10}

\bibitem{abbe2018multireference}
E.~Abbe, T.~Bendory, W.~Leeb, J.~M. Pereira, N.~Sharon, and A.~Singer.
\newblock Multireference alignment is easier with an aperiodic translation
  distribution.
\newblock {\em IEEE Transactions on Information Theory}, 65(6):3565--3584,
  2018.

\bibitem{abbe2018estimation}
E.~Abbe, J.~M. Pereira, and A.~Singer.
\newblock Estimation in the group action channel.
\newblock In {\em 2018 IEEE International Symposium on Information Theory
  (ISIT)}, pages 561--565. IEEE, 2018.

\bibitem{amir2025stability}
T.~Amir, T.~Bendory, N.~Dym, and D.~Edidin.
\newblock The stability of generalized phase retrieval problem over compact
  groups.
\newblock {\em Applied and Computational Harmonic Analysis}, page 101838, 2025.

\bibitem{baddeley2016spatial}
A.~Baddeley, E.~Rubak, and R.~Turner.
\newblock {\em Spatial point patterns: methodology and applications with R},
  volume~1.
\newblock CRC press Boca Raton, 2016.

\bibitem{bai2015cryo}
X.-C. Bai, G.~McMullan, and S.~H. Scheres.
\newblock How cryo-{EM} is revolutionizing structural biology.
\newblock {\em Trends in biochemical sciences}, 40(1):49--57, 2015.

\bibitem{balanov2025orbit}
A.~Balanov, T.~Bendory, and D.~Edidin.
\newblock Orbit recovery under the rigid motions group.
\newblock {\em arXiv preprint arXiv:2512.07405}, 2025.

\bibitem{balanov2025expectation}
A.~Balanov, W.~Huleihel, and T.~Bendory.
\newblock Expectation-maximization for multi-reference alignment: Two pitfalls
  and one remedy.
\newblock {\em arXiv preprint arXiv:2505.21435}, 2025.

\bibitem{balanov2025note}
A.~Balanov, S.~Kreymer, and T.~Bendory.
\newblock A note on the sample complexity of multi-target detection.
\newblock In {\em 15th International Conference on Sampling Theory and
  Applications}.

\bibitem{balanov2025structure}
A.~Balanov, A.~Zabatani, and T.~Bendory.
\newblock Structure from noise: Confirmation bias in particle picking in
  structural biology.
\newblock {\em arXiv preprint arXiv:2507.03951}, 2025.

\bibitem{bandeira2023estimation}
A.~S. Bandeira, B.~Blum-Smith, J.~Kileel, J.~Niles-Weed, A.~Perry, and A.~S.
  Wein.
\newblock Estimation under group actions: recovering orbits from invariants.
\newblock {\em Applied and Computational Harmonic Analysis}, 66:236--319, 2023.

\bibitem{bandeira2014multireference}
A.~S. Bandeira, M.~Charikar, A.~Singer, and A.~Zhu.
\newblock Multireference alignment using semidefinite programming.
\newblock In {\em Proceedings of the 5th conference on Innovations in
  theoretical computer science}, pages 459--470, 2014.

\bibitem{bendory2020single}
T.~Bendory, A.~Bartesaghi, and A.~Singer.
\newblock Single-particle cryo-electron microscopy: Mathematical theory,
  computational challenges, and opportunities.
\newblock {\em IEEE signal processing magazine}, 37(2):58--76, 2020.

\bibitem{bendory2019multi}
T.~Bendory, N.~Boumal, W.~Leeb, E.~Levin, and A.~Singer.
\newblock Multi-target detection with application to cryo-electron microscopy.
\newblock {\em Inverse Problems}, 35(10):104003, 2019.

\bibitem{bendory2023toward}
T.~Bendory, N.~Boumal, W.~Leeb, E.~Levin, and A.~Singer.
\newblock Toward single particle reconstruction without particle picking:
  Breaking the detection limit.
\newblock {\em SIAM Journal on Imaging Sciences}, 16(2):886--910, 2023.

\bibitem{bendory2017bispectrum}
T.~Bendory, N.~Boumal, C.~Ma, Z.~Zhao, and A.~Singer.
\newblock Bispectrum inversion with application to multireference alignment.
\newblock {\em IEEE Transactions on signal processing}, 66(4):1037--1050, 2017.

\bibitem{bendory2024transversality}
T.~Bendory, N.~Dym, D.~Edidin, and A.~Suresh.
\newblock A transversality theorem for semi-algebraic sets with application to
  signal recovery from the second moment and cryo-em.
\newblock {\em Foundations of Computational Mathematics}, pages 1--28, 2025.

\bibitem{bendory2024sample}
T.~Bendory and D.~Edidin.
\newblock The sample complexity of sparse multireference alignment and
  single-particle cryo-electron microscopy.
\newblock {\em SIAM Journal on Mathematics of Data Science}, 6(2):254--282,
  2024.

\bibitem{bendory2022dihedral}
T.~Bendory, D.~Edidin, W.~Leeb, and N.~Sharon.
\newblock Dihedral multi-reference alignment.
\newblock {\em IEEE Transactions on Information Theory}, 68(5):3489--3499,
  2022.

\bibitem{boumal2018heterogeneous}
N.~Boumal, T.~Bendory, R.~R. Lederman, and A.~Singer.
\newblock Heterogeneous multireference alignment: A single pass approach.
\newblock In {\em 2018 52nd Annual Conference on Information Sciences and
  Systems (CISS)}, pages 1--6. IEEE, 2018.

\bibitem{cappe2005inference}
O.~Capp{\'e}, E.~Moulines, and T.~Ryd{\'e}n.
\newblock {\em Inference in hidden Markov models}.
\newblock Springer, 2005.

\bibitem{chen2019complete}
M.~Chen, J.~M. Bell, X.~Shi, S.~Y. Sun, Z.~Wang, and S.~J. Ludtke.
\newblock A complete data processing workflow for cryo-et and subtomogram
  averaging.
\newblock {\em Nature methods}, 16(11):1161--1168, 2019.

\bibitem{cheng2018single}
Y.~Cheng.
\newblock Single-particle cryo-{EM}—how did it get here and where will it go.
\newblock {\em Science}, 361(6405):876--880, 2018.

\bibitem{dadon2024detection}
M.~Dadon, W.~Huleihel, and T.~Bendory.
\newblock Detection and recovery of hidden submatrices.
\newblock {\em IEEE Transactions on Signal and Information Processing over
  Networks}, 10:69--82, 2024.

\bibitem{dobrushin1968problem}
R.~L. Dobrushin.
\newblock The problem of uniqueness of a gibbsian random field and the problem
  of phase transitions.
\newblock {\em Functional analysis and its applications}, 2(4):302--312, 1968.

\bibitem{georgii2011}
H.-O. Georgii.
\newblock {\em Gibbs measures and phase transitions}, volume~9 of {\em De
  Gruyter Studies in Mathematics}.
\newblock Walter de Gruyter \& Co., Berlin, second edition, 2011.

\bibitem{glaeser2016good}
R.~M. Glaeser.
\newblock How good can cryo-{EM} become?
\newblock {\em Nature methods}, 13(1):28--32, 2016.

\bibitem{haenggi2013stochastic}
M.~Haenggi.
\newblock {\em Stochastic geometry for wireless networks}.
\newblock Cambridge University Press, 2013.

\bibitem{hansen2013theory}
J.-P. Hansen and I.~R. McDonald.
\newblock {\em Theory of simple liquids: with applications to soft matter}.
\newblock Academic press, 2013.

\bibitem{henderson1995potential}
R.~Henderson.
\newblock The potential and limitations of neutrons, electrons and {X}-rays for
  atomic resolution microscopy of unstained biological molecules.
\newblock {\em Quarterly reviews of biophysics}, 28(2):171--193, 1995.

\bibitem{henderson2013avoiding}
R.~Henderson.
\newblock Avoiding the pitfalls of single particle cryo-electron microscopy:
  Einstein from noise.
\newblock {\em Proceedings of the National Academy of Sciences},
  110(45):18037--18041, 2013.

\bibitem{kreymer2022approximate}
S.~Kreymer, A.~Singer, and T.~Bendory.
\newblock An approximate expectation-maximization for two-dimensional
  multi-target detection.
\newblock {\em IEEE signal processing letters}, 29:1087--1091, 2022.

\bibitem{kreymer2023stochastic}
S.~Kreymer, A.~Singer, and T.~Bendory.
\newblock A stochastic approximate expectation-maximization for structure
  determination directly from cryo-{EM} micrographs.
\newblock {\em arXiv preprint arXiv:2303.02157}, 2023.

\bibitem{levin2017markov}
D.~A. Levin and Y.~Peres.
\newblock {\em Markov chains and mixing times}, volume 107.
\newblock American Mathematical Soc., 2017.

\bibitem{meyn2012markov}
S.~P. Meyn and R.~L. Tweedie.
\newblock {\em Markov chains and stochastic stability}.
\newblock Springer Science \& Business Media, 2012.

\bibitem{moller2003statistical}
J.~Moller and R.~P. Waagepetersen.
\newblock {\em Statistical inference and simulation for spatial point
  processes}.
\newblock CRC press, 2003.

\bibitem{nogales2016development}
E.~Nogales.
\newblock The development of cryo-{EM} into a mainstream structural biology
  technique.
\newblock {\em Nature methods}, 13(1):24--27, 2016.

\bibitem{rosenthal1995convergence}
J.~S. Rosenthal.
\newblock Convergence rates for {M}arkov chains.
\newblock {\em Siam Review}, 37(3):387--405, 1995.

\bibitem{schaffer2019cryo}
M.~Schaffer, S.~Pfeffer, J.~Mahamid, S.~Kleindiek, T.~Laugks, S.~Albert, B.~D.
  Engel, A.~Rummel, A.~J. Smith, W.~Baumeister, et~al.
\newblock A cryo-{FIB} lift-out technique enables molecular-resolution
  cryo-{ET} within native caenorhabditis elegans tissue.
\newblock {\em Nature methods}, 16(8):757--762, 2019.

\bibitem{singer2020computational}
A.~Singer and F.~J. Sigworth.
\newblock Computational methods for single-particle electron cryomicroscopy.
\newblock {\em Annual review of biomedical data science}, 3:163--190, 2020.

\bibitem{toader2023methods}
B.~Toader, F.~J. Sigworth, and R.~R. Lederman.
\newblock Methods for cryo-{EM} single particle reconstruction of
  macromolecules having continuous heterogeneity.
\newblock {\em Journal of Molecular Biology}, 435(9):168020, 2023.

\bibitem{01771213-b0f2-3d96-9da6-34ca75b35228}
J.~van~den Berg.
\newblock A constructive mixing condition for 2-d gibbs measures with random
  interactions.
\newblock {\em The Annals of Probability}, 25(3):1316--1333, 1997.

\bibitem{weitz2006}
D.~Weitz.
\newblock Counting independent sets up to the tree threshold.
\newblock In {\em S{TOC}'06: {P}roceedings of the 38th {A}nnual {ACM}
  {S}ymposium on {T}heory of {C}omputing}, pages 140--149. ACM, New York, 2006.

\bibitem{wiegand2013handbook}
T.~Wiegand and K.~A. Moloney.
\newblock {\em Handbook of spatial point-pattern analysis in ecology}.
\newblock CRC press, 2013.

\bibitem{zhang2019advances}
P.~Zhang.
\newblock Advances in cryo-electron tomography and subtomogram averaging and
  classification.
\newblock {\em Current opinion in structural biology}, 58:249--258, 2019.

\end{thebibliography}

\begin{appendix}
\section{Proofs}
\subsection{Proof of Theorem \ref{thm:MTD-1D}} \label{sec:proof-of-thm-MTD-1D}
We prove the Theorem in three steps: 
\begin{enumerate}
    \item First we prove that every patch $Z_k$ can be written as
    $$
    Z_k = P \left( g_k\cdot \overline{X} \right) + \varepsilon_k
    $$
    for some $g_ k = (g_k^{(1)}, g_k^{(2)}) \in \Z_{2L}\times \Z_{2L}$ and $\varepsilon_k\in \R^L$. Recall that $\overline{X} = ( \tilde{X}, \,  \tilde{X})$, where $\tilde{X} = (X, 0_L)$.
    \item Then, we prove that Assumption \ref{assumption MTD} implies that the sequence $\{g_k\}_{k = 1}^{M}$ is a Markov Chain over $\Z_{2L}\times \Z_{2L}$.
    \item Finally, we conclude the proof by proving that the aforementioned Markov Chain has a positive absolute spectral gap, and deduce explicitly the convergence estimate to the stationary probability distribution. The explicit expression for the stationary distribution is obtained in Lemma \ref{lem: stationary distribution MC}.
\end{enumerate}

\textbf{Step 1:} Since the patches $\{Z_k\}_{k=1}^M$ into which we divide the measurement $Z$ have length $L$, any patch $Z_k$ can contain, at most, two parts of the signal $X$ (note that Assumption \ref{assumption MTD} implies that there cannot be two signals overlapping): one or both of a piece of $X$ remaining from the previous patch $Z_{k-1}$, and a piece of a signal starting in the current patch $Z_k$.

Recalling the projection operator $P: \R^{2L} \times \R^{2L} \to \R^L$ from \eqref{eqn:projecionOperatorDef}, and the group action of $\Z_{2L}\times \Z_{2L}$ on $\R^{2L} \times \R^{2L}$ defined in \eqref{eqn:cyclicGroupActionDef}, we can write
$$
P \left( g_k\cdot \overline{X} \right) = P \left( g_k^{(1)} \cdot \tilde{X}, \, g_k^{(2)} \cdot \tilde{X} \right) = P_0 \left( g_k^{(1)} \cdot \tilde{X} \right)
+ P_1 \left( g_k^{(2)} \cdot \tilde{X} \right).
$$
Recall that $P_0$ is the linear operator that extracts the last $L$ components of any vector in $\R^{2L}$, whereas $P_1$ extracts the first $L$ components.

If $Z_k$ contains the remainder of a signal which started in the previous patch, let's say at the $l$-th pixel for some $l \in \{1, \ldots, L-1\}$, then we set $g_k^{(1)} = l$, i.e.\ $g_k^{(1)} \cdot \tilde{X}$ is the signal $\tilde{X}$, shifted $l$ pixels to the right. In this way, $P_0 \left[ g_k^{(1)} \cdot \tilde{X} \right]$ contains the remainder of the signal that started in the previous patch. If, on the contrary, $Z_k$ has no signal remaining from the previous patch, then we take $g_k^{(1)}  = 0$. In this case, $P_0 \left[ g_k^{(1)} \cdot \tilde{X} \right] = P_0 \tilde{X} = 0_L$ is the vector of zeros.

If there is a new signal appearance starting in the patch $Z_k$, let's say at the $l$-th pixel for some $l \in \{ 0, \ldots, L-1 \}$, then we take $g_k^{(2)} = l$. In this way, $P_1 \left[ g_k^{(2)}\cdot  \tilde{X} \right]$ contains the first part of the signal (the first $L-l$ pixels), which appears in the patch $Z_k$.
In case there is no signal starting in the patch $Z_k$, we can take $g_k^{(2)} = L$, which yields $g_k^{(2)} \cdot \tilde{X} = [0_L, X]$, and therefore $P_1 \left[ g_k^{(2)} \cdot \tilde{X} \right] = 0_L$ is the vector of zeros.  

Adding up the terms $P_0 \left[ g_k^{(1)} \cdot \tilde{X} \right]$ and $P_1 \left[ g_k^{(2)} \cdot \tilde{X} \right]$ along with the vector of i.i.d.\ Gaussian noise $\varepsilon_k\in \R^L$, we can represent all the possible configurations for a patch $Z_k$. 

\textbf{Step 2:} The next step is to prove that the sequence $\{g_k\}_{k = 1}^{M} =  (g_k^{(1)}, g_k^{(2)})_{k = 1}^{M}$ is a Markov Chain over $\Z_{2L}\times \Z_{2L}$.

Let $\{ \omega_k \}_{k\geq 0}$ be the Markov Chain over $\{0, \ldots, L\}$ such that the first term $\omega_0$ is given by
\begin{equation}
\label{MC first term}
\omega_0 = 0,
\end{equation}
and the matrix of transition probabilities is
\begin{equation}
\label{matrix of prob transitions}
P = \begin{bmatrix}
\lambda & (1-\lambda)\lambda & (1-\lambda)^2\lambda &  \cdots & (1-\lambda)^{L-1}\lambda & (1-\lambda)^L \\
0 & \lambda & (1-\lambda)\lambda & \cdots & (1-\lambda)^{L-2}\lambda & (1-\lambda)^{L-1} \\
\vdots & \vdots & \vdots & \ddots & \vdots & \vdots \\
0 & 0 & 0 & \cdots & \lambda & (1-\lambda) \\
\lambda & (1-\lambda)\lambda & (1-\lambda)^2\lambda &  \ldots & (1-\lambda)^{L-1}\lambda & (1-\lambda)^L
\end{bmatrix}.
\end{equation}
We shall prove that each random variable $ (g_k^{(1)}, g_k^{(2)})$ in the sequence $\{ (g_k^{(1)}, g_k^{(2)}) \}_{k = 1}^{M}$ can be written as a deterministic function of $(\omega_{k-1}, \omega_k)$, for each $k\in \{ 1,2,\ldots, M\}$, extracted from the Markov Chain $\{\omega_k\}_{k\geq 0}$.

Let us start with the first term of the sequence. Since $Y_1$ contains no remainder from previous patches, we have $g_1^{(1)} = 0$.
By Assumption \ref{assumption MTD}, for any $l\in \{0, \ldots, L-1\}$, the probability of the first signal $X$ appearance starting in the $l$-th pixel of the measurement is $(1-\lambda)^l \lambda$,
and the probability of it starting in a pixel beyond the $(L-1)$-th is  $(1-\lambda)^L$. 
Therefore, $g_1^{(2)}$ is the random variable in $\{0, \ldots, L\}$ with probabilities given by the first row in the matrix \eqref{matrix of prob transitions}.
Hence, we take
\begin{equation}
\label{g1 def}
g_1 = \left( g_1^{(1)}, g_1^{(2)}\right) = (\omega_0, \omega_1),
\end{equation}
i.e., $g_1$ is the random variable in $\Z_{2L}\times \Z_{2L}$ given by the first two terms of the Markov chain $\{\omega_k\}_{k\geq 0}$.

For the second patch, $Y_2$, if $g_1^{(2)} = 0$ or $g_1^{(2)} = L$, then $Y_2$ contains no signal remainder from patch $Y_1$, and then we take $g_2^{(1)} = 0$. In this case, $g_2^{(2)}$ follows the same distribution as $g_1^{(2)}$, which corresponds to the first or the last rows in the matrix $P$, depending on whether $g_1^{(2)} = 0$ or $g_1^{(2)} = L$. In case $0 < g_1^{(2)} <L$, we should take $g_2^{(1)} = g_1^{(2)} = \omega_1$, since $P_0 [ \omega_1 \cdot \tilde{X} ]$ selects the second half of the shifted vector $\omega_1 \cdot \tilde{X}$.
Concerning $g_2^{(2)}$, in view of Assumption \ref{assumption MTD}, we need to take $g_2^{(2)} = \omega$, where $\omega$ follows a Geometric distribution starting from the first empty pixel after the signal remainder from the previous patch.
By the construction of the Markov Chain $\{\omega_k\}_{k\geq 1}$ (see the matrix of transition probabilities $P$ in \eqref{matrix of prob transitions}), we can take $g_2^{(2)} = \omega_2$.

In general, applying a recurrent argument, for any $k\geq 2$, we can take
\begin{equation}
g_k^{(1)} = \begin{cases}
    \omega_{k-1} & \text{if} \ \omega_{k-1} \in \{0, \ldots, L-1\} \\
    0 & \text{if} \ \omega_{k-1} = L,
\end{cases}
\quad \text{and} \quad
g_k^{(2)} = \omega_k.
\label{gk def}
\end{equation}
In this way, we have proved that, for any $k\geq 1$, the probability distribution $g_k = (g_k^{(1)}, g_k^{(2)})$ is given by a deterministic function of the random variables $\omega_k$ and $\omega_{k-1}$. Hence, since $\{\omega_k\}_{k\geq 1}$ is a Markov Chain, so is $\{ ( g_k^{(1)}, g_k^{(2)}) \}_{k\geq 1}$.

\textbf{Step 3:} Next, we prove that the Markov Chain $\{ \omega_k\}_{k\geq 0}$ defined in \eqref{MC first term}--\eqref{matrix of prob transitions} has a positive absolute spectral gap. For this, it is enough to verify that the Markov Chain is indecomposable and aperiodic (see \cite[Section 4]{rosenthal1995convergence}).
These two properties hold since, in view of the transition matrix $P$ in \eqref{matrix of prob transitions}, the state $\omega = L$ is accessible (with positive probability) from any other state, and from this state, the Markov Chain can transition (with positive probability) to any other state.

This argument proves the existence of a unique stationary distribution $\rho$ on $\{0, \ldots , L\}$, and that, if we call $\mu_k(\omega | \omega_m )$, with $m,k\geq 0$ the probability distribution of $\omega_{m+k}$ conditionally on $\omega_m$, then the probability distribution $\mu_k(\cdot | \omega_m)$ converges to $\rho$ as $k$ tends to infinity.
Using a coupling argument (see for instance \cite[Section 6]{rosenthal1995convergence}) and the uniform minorization condition from \cite[Section 6.2]{rosenthal1995convergence}, we can obtain explicitly the following convergence estimate: 
\begin{equation}
\| \mu_k(\cdot | \omega_m) - \rho\| = \dfrac{1}{2} \sum_{\omega = 0}^L | \mu_k (\omega | \omega_m) - \rho (\omega) | \leq (1 - \beta)^k, \qquad \text{for all} \ \omega_m\in \{ 0, \ldots , L \},
\label{convergence MC proof}
\end{equation}
where $\beta$ is given in terms of the matrix of probability transition $P$ in \eqref{matrix of prob transitions} by 
$$
\beta = \sum_{j = 0}^L \min_i P_{ij} = (1-\lambda)^{L-1} \lambda + (1-\lambda)^L = (1-\lambda)^{L-1}. 
$$
Using this convergence estimate for the Markov chain $\{\omega_k\}_{k\geq 0}$, we deduce a similar convergence result for the Markov chain $\{ (\omega_k, \omega_{k-1}) \}_{k\geq 1}\subset \{ 0, \ldots, L\}^2$, with a stationary distribution denoted by $\tilde{\rho}$.
Using the construction of $g_k$ in terms of $(\omega_k, \omega_{k-1})$ given in \eqref{g1 def} and \eqref{gk def}, we obtain the convergence estimate in  Theorem \ref{thm:MTD-1D}(iii) for the Markov chain $\{g_k\}_{k\geq 1}\subset G$. See Lemma \ref{lem: stationary distribution MC} for the explicit computations.
The stationary distribution of the Markov chain $\{g_k\}_{k\geq 1}$ is denoted by $\pi$, and is simply the push-forward measure of $\tilde\rho$ by the map in $\{ 0,\ldots, L \}^2\to G$ defined in \eqref{gk def}.
\hfill $\square$

\begin{lemma}
    \label{lem: stationary distribution MC}
    Let $\{ \omega_k \}_{k\geq 0} \subset \{ 0, 1, \ldots , L \}$ be the Markov chain given by $\omega_0 = 0$ and the matrix of probability transitions \eqref{matrix of prob transitions}, and let $\{g_k\}_{k\geq 1} = \{(g_k^{(1)}, g_k^{(2)})\}_{k\geq 1}\subset G=\Z_{2L}\times \Z_{2L}$ be given in terms of $\{\omega_k\}_{k\geq 1}$ by \eqref{g1 def} and \eqref{gk def}. Then $\{g_k\}_{k\geq 1}$ is a Markov chain over $G$ and, for any $1\leq k\leq m$ and $g'\in G$, there exists a constant $C>0,$ depending on $\lambda$ and $L$, such that
    $$
    \sum_{g\in G}\left| \operatorname{Pr} (g_m = g | g_k=g') - \pi (g) \right| \leq C \left( 1 - (1-\lambda)^{L-1} \right)^{m-k},
    $$
    where $\pi$ is the probability distribution over $G$ given, for any $g=(x,y)\in G$, by
    $$
    \pi (x,y) :=
    \begin{cases}
        \dfrac{(1-\lambda)^L}{1 + (L-1)\lambda} & \text{if} \ (x,y)
 = (0,L) \\
 \dfrac{\lambda (1-\lambda)^{L - x}}{1+(L-1)\lambda} & \text{if} \ 1 \leq x <L \ \text{and} \ y = L \\
 \dfrac{\lambda (1-\lambda)^{y}}{1+ (L-1)\lambda} & \text{if} \ x = 0 \ \text{and} \ 0\leq y < L \\
 \dfrac{\lambda^2 (1 - \lambda)^{y - x}}{1 + (L-1)\lambda} & \text{if} \ 1 \leq x \leq y < L \\
 0 & \text{else.}
 
 \end{cases}
    $$
\end{lemma}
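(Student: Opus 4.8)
The plan is to run everything through the explicit representation $g_k=(\phi(\omega_{k-1}),\omega_k)$ from \eqref{g1 def}--\eqref{gk def}, where $\phi\colon\{0,\dots,L\}\to\{0,\dots,L-1\}$ is the deterministic map with $\phi(i)=i$ for $i<L$ and $\phi(L)=0$, and $\{\omega_k\}_{k\ge 0}$ is the auxiliary chain on $\{0,\dots,L\}$ with $\omega_0=0$ and transition matrix \eqref{matrix of prob transitions}. For the Markov property I would observe that, since $\omega_0=0$ is deterministic and $\omega_j=g_j^{(2)}$ for $j\ge1$, the $\sigma$-algebra generated by $(g_1,\dots,g_k)$ equals that generated by $(\omega_1,\dots,\omega_k)$; as $g_{k+1}=(\phi(\omega_k),\omega_{k+1})$ and $\{\omega_k\}$ is Markov, $\mathbb{P}(g_{k+1}=\cdot\mid g_1,\dots,g_k)=\mathbb{P}(g_{k+1}=\cdot\mid\omega_k)$, which is a function of $\omega_k=g_k^{(2)}$ and hence $\sigma(g_k)$-measurable, so it equals $\mathbb{P}(g_{k+1}=\cdot\mid g_k)$; time-homogeneity is read off the same formula. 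In particular the one-step kernel of $\{g_k\}$ depends on the current state only through its second coordinate $\omega_k$, which itself runs as the irreducible aperiodic chain with matrix $P$ (Step~3 of the proof of Theorem~\ref{thm:MTD-1D}); this also yields existence and uniqueness of the stationary law of $\{g_k\}$.

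For the stationary distribution I would first compute the stationary law $\rho$ of $\{\omega_k\}$. Writing $s:=\rho(0)+\rho(L)$, the column-$0$ equation of $\rho=\rho P$ gives $\rho(0)=s\lambda$, and an induction on $j$ using the geometric pattern of the rows of $P$ gives $\rho(j)=s\lambda$ for all $0\le j\le L-1$, whence $\rho(L)=s(1-\lambda)$; the column-$L$ equation is then redundant, and normalization forces $s=1/(1+(L-1)\lambda)$. Since $g_k$ is a deterministic image of the consecutive pair $(\omega_{k-1},\omega_k)$, whose stationary joint law is $(i,j)\mapsto\rho(i)P(i,j)$, the stationary law of $\{g_k\}$ is the push-forward
\[
\pi(x,y)=\sum_{i:\,\phi(i)=x}\rho(i)\,P(i,y).
\]
Because rows $0$ and $L$ of $P$ coincide, this reads $\pi(0,y)=s\,P(0,y)$ and $\pi(x,y)=\rho(x)P(x,y)=s\lambda\,P(x,y)$ for $1\le x\le L-1$, with $\pi(x,y)=0$ whenever $x\ge L$ or $y>L$ since such states are never attained. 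Substituting $P(0,y)=(1-\lambda)^y\lambda$ and $P(x,y)=(1-\lambda)^{y-x}\lambda\,\mathbf{1}\{y\ge x\}$ for $y<L$, together with $P(0,L)=(1-\lambda)^L$ and $P(x,L)=(1-\lambda)^{L-x}$, produces exactly the five cases in the statement (the zero case also absorbing $y<x$, where $P(x,y)=0$).

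For the mixing estimate I would fix $1\le k\le m$ and $g'=(x',y')\in G$. Because the kernel of $\{g_k\}$ depends on the state only through its second coordinate, $\mathbb{P}(g_m=\cdot\mid g_k=g')$ depends only on $y'$ and coincides with $\mathbb{P}(g_m=\cdot\mid \omega_k=y')$. For $m\ge k+1$ the conditional law of $(\omega_{m-1},\omega_m)$ given $\omega_k=y'$ is $(i,j)\mapsto\mu_{m-1-k}(i\mid y')P(i,j)$, with stationary counterpart $(i,j)\mapsto\rho(i)P(i,j)$; summing out $j$, their total variation equals $\|\mu_{m-1-k}(\cdot\mid y')-\rho\|\le(1-\beta)^{m-1-k}$ with $\beta=(1-\lambda)^{L-1}$, by \eqref{convergence MC proof}. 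Since $g_m$ is a deterministic image of this pair and push-forward does not increase total variation,
\[
\sum_{g\in G}\bigl|\mathbb{P}(g_m=g\mid g_k=g')-\pi(g)\bigr|\;\le\;2(1-\beta)^{m-1-k}\;=\;\tfrac{2}{1-\beta}\,(1-\beta)^{m-k},
\]
and for $m=k$ the left-hand side is at most $2\le\tfrac{2}{1-\beta}$; hence the claim holds with $C=2/\bigl(1-(1-\lambda)^{L-1}\bigr)$.

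The routine parts are the Markov property and the mixing transfer; the step I expect to require the most care is the stationary-distribution computation, specifically getting the push-forward under $\phi$ right and matching all five cases, where the collapse $\phi(0)=\phi(L)=0$ means that only the sum $s=\rho(0)+\rho(L)$ --- not $\rho(0)$ and $\rho(L)$ separately --- appears in the cases with $x=0$.
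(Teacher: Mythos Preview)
Your proposal is correct and follows essentially the same approach as the paper: compute the stationary law $\rho$ of the auxiliary chain $\{\omega_k\}$, push forward the stationary pair law $(i,j)\mapsto\rho(i)P(i,j)$ through the map $(\omega_{k-1},\omega_k)\mapsto(\phi(\omega_{k-1}),\omega_k)$ to obtain $\pi$, and lift the Doeblin mixing bound~\eqref{convergence MC proof} from $\{\omega_k\}$ to $\{g_k\}$ via the pair chain. Two places where your treatment is slightly sharper than the paper's: your Markov argument via $\sigma(g_1,\dots,g_k)=\sigma(\omega_1,\dots,\omega_k)$ is cleaner than inferring it from the pair chain $(\omega_{k-1},\omega_k)$ (a deterministic image of a Markov chain need not be Markov in general, so the paper's route implicitly uses that $g_k^{(2)}=\omega_k$ recovers the driving state), and your total-variation computation yields the explicit constant $C=2/(1-(1-\lambda)^{L-1})$, whereas the paper's bound $\tfrac{L+1}{2}(1-\beta)^{m-1-k}$ is looser because it effectively bounds $\sum_j P(i,j)$ by $L+1$ instead of $1$.
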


\begin{proof}
    Proving that $\{g_k\}_{k\geq 1}$ is a Markov chain over $G$ was done in Step 2 of the proof of Theorem \ref{thm:MTD-1D}, using the choice of $g_k^{(1)}$ and $g_k^{(2)}$ in \eqref{g1 def} and \eqref{gk def}. Recall that each $g_k$ depends only on $(\omega_k, \omega_{k-1})$, and it is easy to prove that $(\omega_k, \omega_{k-1})_{k\geq 1}$ is a Markov chain over $\{ 0,1, \ldots , L\}^2$. 
    
    It is well known that the stationary distribution of the Markov chain $\{\omega_k\}_{k\geq 0}$, that we represent by the vector $\rho := (\rho (0), \ldots , \rho(L))\in [0,1]^{L+1}$, is the normalized left-eigenvector of the matrix of transition probabilities associated to the eigenvalue $1$. In this case, by solving the system of equations $\rho P = \rho$, with $P$ given in \eqref{matrix of prob transitions}, we obtain
    \begin{equation}
    \rho(x) := \begin{cases}
        \dfrac{\lambda}{ 1 + (L-1) \lambda} & \text{for} \ x = 0, 1, \ldots , L-1 \\
        \dfrac{1-\lambda}{ 1+ (L-1)\lambda} & \text{for} \ x = L.
    \end{cases}
    \label{stationary rho explicit}
    \end{equation}

    Let us now define the probability distribution $\tilde{\rho}$ over $\{ 0,1, \ldots ,L\}^2$ given by
    \begin{equation}
    \label{stationary rho tilde explicit}
    \tilde{\rho} (\omega_k,\omega_{k-1}) = P(\omega_{k-1}, \omega_k) \rho (\omega_{k-1}), \qquad \text{for each} \quad (\omega_k, \omega_{k-1})\in \{ 0,1, \ldots, L\}^2,
    \end{equation}
    where $\rho(x)$ is given by \eqref{stationary rho explicit}, and $P(x,y)$ is given by the element in the position $(x,y)$ in the matrix of probability transitions $P$ defined in \eqref{matrix of prob transitions}.

    Using the properties of Markov chains, for any $m>k$, we obtain
    \begin{eqnarray*}
        \text{Pr}\left[(\omega_m, \omega_{m-1}) | (\omega_k, \omega_{k-1}) \right] &=& \text{Pr}\left[ \omega_m | \omega_{m-1}, (\omega_k, \omega_{k-1}) \right] \text{Pr}\left[\omega_{m-1} | (\omega_k, \omega_{k-1})  \right] \\
        &=& \text{Pr}\left[ \omega_m | \omega_{m-1} \right] \text{Pr}\left[\omega_{m-1} | \omega_k  \right], \\
        &=&
        P(\omega_{m-1},\omega_{m}) \mu_{m-1-k} (\omega_{m-1} | \omega_k),
    \end{eqnarray*}
     for any $(\omega_m, \omega_{m-1}), (\omega_k,\omega_{k-1})\in \{ 0,1, \ldots L\}^2$.
     Here, we used the notation 
     $$
     \mu_{m-1-k} (\omega_{m-1} | \omega_k) = \text{Pr}\left[\omega_{m-1} | \omega_k  \right]
     $$
     introduced in Step 3 of the proof of Theorem \ref{thm:MTD-1D}. Hence, combining this with the definition of $\tilde{\rho}$ in \eqref{stationary rho tilde explicit}, and the convergence estimate in \eqref{convergence MC proof}, we obtain the following convergence estimate for the Markov chain $(\omega_k, \omega_{k-1})$:
     \begin{eqnarray*}
         \dfrac{1}{2} \sum_{\omega_m, \omega_{m-1}} \left| \text{Pr} [(\omega_m, \omega_{m-1}) | (\omega_k, \omega_{k-1})]  - \tilde{\rho} (\omega_m, \omega_{m-1}) \right| \\
         = \dfrac{1}{2} \sum_{\omega_m, \omega_{m-1}} P(\omega_{m-1}, \omega_{m}) \left| \mu_{m-1-k} (\omega_{m-1}| \omega_k) - \rho(\omega_{m-1}) \right| \\
         \leq  \dfrac{L+1}{2} (1-\beta)^{m-1-k}.
     \end{eqnarray*}
    
    Finally, using the construction of $g_k = (g_k^{(1)}, g_k^{(2)})$ in terms of $(\omega_k, \omega_{k-1})$ made in \eqref{g1 def} and \eqref{gk def}, we deduce the convergence result for the Markov chain $\{ g_k\}_{k\geq 1}$ with $\pi$ given by
    $$
    \pi (g^{(1)}, g^{(2)}) = \begin{cases}
        \tilde{\rho} (g^{(2)}, g^{(1)}) & \text{if} \ (g^{(1)}, g^{(2)}) \in \{ 1, \ldots , L-1\} \times \{ 0, \ldots, L \} \\
        \tilde{\rho} (g^{(2)}, 0) + \tilde{\rho} (g^{(2)},L)  & \text{if} \ g^{(1)} = 0 \ \text{and} \ g^{(2)}\in \{ 0, \ldots, L\} \\
        0 & \text{else.}
    \end{cases}
    $$
    The final expression of $\pi$, given in the statement of the Lemma is obtained by replacing the values of $\rho$ in \eqref{stationary rho explicit} and the terms in the matrix $P$ in \eqref{matrix of prob transitions} in the expression of $\tilde{\rho}$ in \eqref{stationary rho tilde explicit}.
\end{proof}

\subsection{Proof of Theorem \ref{thm:2d-main-result}}\label{sec:proof-of-2d-converence}

In this proof, we use a similar structure to the one used in the proof of Theorem \ref{thm:MTD-1D}:
\begin{enumerate}
    \item First we prove that for any $k\in \mathcal{I}_M:= \{1,\ldots , M\}^2$, the patch $Z_{k}$ extracted from $Z$ as in \eqref{eq:2d-patches} can be written as
    $$
    Z_{k} = P (g_{k} \cdot \overline{X} ) + \varepsilon_{k},
    $$
    for some $g_k = \left( g_{k}^{(1)}, g_{k}^{(2)} , g_{k}^{(3)} , g_{k}^{(4)}  \right)\in G = \left( \Z_{2L} \times \Z_{2L} \right)^4$ and $\varepsilon_{k}\in \R^{L\times L}$. Recall the definition of $\overline{X}\in (\R^{2L\times 2L})^4$ in \eqref{eqn:barXdef 2D}.
    \item Then, for each patch $Z_k$, we prove that the group element $g_{k}\in G$ associated to the $k$-th patch can be written in terms of the hard-core model from Assumption \ref{assumption hard-core}, restricted to a rectangular neighborhood of the patch, that we denote by $V_k\subset V_M$, where $V_M$ are the vertices of the graph $\mathcal{G}_M$ in Definition \ref{def:hardCoreModelGraph}. This construction defines $\{g_k\}_{k\in \mathcal{I}_M}$ as a random field over $G$, indexed by $\mathcal{I}_M$.
    \item Using known results for the hard-core model we have that, for $\lambda$ sufficiently small, the hard-core model from Assumption \ref{assumption hard-core} satisfies a property known as strong spatial mixing. This property implies that, as the number of patches $M^2$ tends to infinity, the hard-core model converges to a unique Gibbs measure on the two-dimensional integer lattice $\Z^2$.
    The strong spatial mixing also implies that, for patches $Z_k$ far from the measurement boundary, the hard-core model restricted to $V_k$ and the Gibbs measure restricted to $V_k$ are exponentially close as probability measures.
    \item Since the Gibbs measure in $\mathbb{Z}^2$ is shift invariant, the restriction to any rectangle $V_k \subset V_M$ is the same for every $k\in \mathcal{I}_M$. Then, we use this restriction of the Gibbs measure to define, as in Step (ii), a probability distribution $\pi$ over the group $G$.
    Using the strong spatial mixing from the previous step, we prove that the random field $\left\{ g_{k}  \right\}_{k\in \mathcal{I}_M}$ and the probability distribution $\pi$ satisfy the exponential mixing property in Definition \ref{def:exponential-mixing}.
\end{enumerate}

\begin{remark}
    Differently to the Markov chain $\{g_k\}_{k\geq 1}$ introduced in the proof of Theorem \ref{thm:MTD-1D} for the 1D case, the two-dimensional random field $\{g_k\}_{k\in \mathcal{I}_M}$ constructed in this proof does not have a stationary distribution per se. However, for our convergence results in Theorems~\ref{thm:sampleComplexity1DMTD} and \ref{thm:2d-general-result}, we only need that
    \begin{enumerate}[(a)]
        \item the probability distribution of $g_k$'s in patches $Z_k$ which are far from the measurement boundary is exponentially close to a certain probability distribution $\pi$ over $G$;
        \item the correlation between $g_k$'s associated to patches which are far apart from each other decays exponentially.
    \end{enumerate}
    These two properties, which we together called exponential mixing in Definition \ref{def:exponential-mixing}, are satisfied by Markov chains with positive absolute spectral gap, taking $\pi$ as the stationary distribution (note that the boundary of the index set $\{1, \ldots, M\}$ in the 1D case is $0$ and $M+1$). In the 2D case, the random field $\{ g_k\}_{k\in \mathcal{I}_M}$ and the probability distribution $\pi$ constructed in this proof also satisfy these two properties, with the pushforward of the Gibbs measure on a given patch (which \emph{is} stationary due to shift invariance) taking on the role of $\pi$.
\end{remark}

Let us now proceed with the proof of Theorem \ref{thm:2d-main-result}.

\textbf{Step 1:} Let us recall the definition of the linear operator $P: (\R^{2L\times 2L})^4 \to \R^{L\times L}$ in \eqref{P def 2D}, which is of the form
\begin{equation}
\label{projectoin operator 2D}
P(\tilde{X}_0, \tilde{X}_1, \tilde{X}_2, \tilde{X}_3) = 
P_{00} \tilde{X}_0 + P_{01} \tilde{X}_1 
+ P_{10} \tilde{X}_2 + P_{11} \tilde{X}_3,
\end{equation}
for any $\left(\tilde{X}_0, \tilde{X}_1, \tilde{X}_2, \tilde{X}_3\right)\in (\R^{2L\times 2L})^4$.

Since the patches $Z_k$ have size $L\times L$, in each of them there can be, at most, one pixel containing the pixel $X[0,0]$ of the signal $X\in \R^{L\times L}$. In other words, for any $k = (k_1, k_2)\in \mathcal{I}_M$, there is at most one element in $\{ t_i \}_{i=1}^q$ such that $t_i - kL\in \{ 0, \ldots , L-1  \}^2$.
Moreover, in view of the definition of $Z_k$ in \eqref{eq:2d-patches}, the signal in $Z_k$ might also contain a piece of $X$ for which the pixel $X[0,0]$ is contained in the patches $Y_{k_1, k_2-1}, Y_{k_1-1, k_2}$ or $Y_{k_1-1, k_2-1}$.

Recall the group action of $\Z_{2L}\times \Z_{2L}$ on $\R^{2L \times 2L}$ defined in \eqref{group action 2d simple}, which cyclically shifts the coordinates of $2L\times 2L$ matrices coordinate wise, and also the group action of $G = (\Z_{2L}\times \Z_{2L})^4$ on $(\R^{2L \times 2L})^4$, which applies the previous action component wise, as defined in \eqref{eq:group-action-2d-double}.
Next we choose, for each $k=(k_1, k_2)\in \mathcal{I}_M$, a group element $g_k = (g_k^{(1)}, g_k^{(2)}, g_k^{(3)}, g_k^{(4)})\in G$ such that
\begin{enumerate}[i)]
    \item $P_{00} \left[g_k^{(1)}\cdot \tilde{X}\right]$ contains the part of $X$ in $Z_k$ for which the pixel $X[0,0]$ is in the patch $Z_k$.
    \item $P_{01} \left[g_k^{(2)}\cdot \tilde{X}\right]$ contains the part of $X$ in $Z_k$ for which the pixel $X[0,0]$ is in the patch $Y_{k_1, k_2-1}$.
    \item $P_{10} \left[g_k^{(3)}\cdot \tilde{X}\right]$ contains the part of $X$ in $Z_k$ for which the pixel $X[0,0]$ is in the patch $Y_{k_1-1, k_2}$.
    \item $P_{11} \left[g_k^{(4)}\cdot \tilde{X}\right]$ contains the part of $X$ in $Z_k$ for which the pixel $X[0,0]$ is in the patch $Y_{k_1-1, k_2-1}$.
\end{enumerate}

For each $k\in \mathcal{I}_M$, we define the parameter $\omega_k\in \{ 0, 1, \ldots , L-1\}^2 \cup \{ (L,L) \}$, which takes a value in $\{ 0, 1, \ldots , L-1\}^2$ if the patch $Z_k$ contains the pixel $X[0,0]$ of some signal apperance $X$ in $Z$, and $\omega_k = (L,L)$ if $Z_k$ does not contain the pixel $X[0,0]$.
More precisely, we define $\omega_k$ as
\begin{equation}
\label{omega_k def}
\omega_k := 
\begin{cases}
    t & \text{if }
    \{ t_i - kL\}_{i=1}^q \cap \{0, \ldots, L-1  \}^2 = \{ t\}, \\
    (L,L) & \text{if }
     \{ t_i - kL\}_{i=1}^q \cap \{0, \ldots, L-1  \}^2 = \emptyset.
\end{cases}
\end{equation}
Note that, due to the hard-core model assumption on $\{t_i\}_{i=1}^q$, the intersection $ \{ t_i - kL\}_{i=1}^q \cap \{0, \ldots, L-1  \}^2$ can contain, at most, one element.

Next we prove that, for any $k\in \mathcal{I}_M$, the group elements $g_k = (g_k^{(1)}, g_k^{(2)}, g_k^{(3)}, g_k^{(4)})\in (\Z_{2L\times 2L})^4$ can be chosen, respectively, in terms of the parameters 
$$\omega_k, \quad \omega_{k_1, k_2-1}, \quad \omega_{k_1-1, k_2} \quad \text{and} \quad \omega_{k_1-1, k_2-1}, \quad \text{defined in \eqref{omega_k def}.}
$$
Of course, if $k=(k_1, k_2)$ is such that $k_1=0$ or $k_2=0$, the sub-indices in the above parameters may take negative values. In this case, the associated parameter $\omega$ is taken as $\omega = (L,L)$, or equivalently, we consider that patches $Y_{k_1, k_2}$ with a negative sub-index are empty. 

Let us recall the definition of $\tilde{X}$ in \eqref{X tilde 2D}, which is given by
$$
\tilde{X} := \begin{bmatrix}
    X & 0_{L\times L} \\
    0_{L\times L} & 0_{L\times L}
\end{bmatrix}.
$$
If the parameter $\omega_k\in  \{ 0, 1, \ldots , L-1\}^2$, then the patch $Z_k$ contains $X[0,0]$ in its $\omega_k$ pixel. Therefore, the part of the signal $X$ that appears in the patch $Z_k$ is precisely $P_{00} \left[ \omega_k\cdot \tilde{X} \right]$. If the patch $Z_k$ does not contain the pixel $X[0,0]$, then we can simply take $g_k^{(1)} = \omega_k = (L,L)$, so that $P_{00} \left[ g_k^{(1)}\cdot \tilde{X} \right] = 0_{L\times L}$.
In conclusion, for any $k\in \mathcal{I}_M$, we can take $g_k^{(1)} = \omega_k$.

If the parameter $\omega_{k_1, k_2-1} \in \{ 0,\ldots, L-1\}^2$, then the patch $Y_{k_1, k_2-1}$ contains $X[0,0]$ in its $\omega_{k_1, k_2-1}$ pixel. Therefore, the remainder of this signal appearance in the patch $Y_{k_1, k_2}$ is given by $P_{01} \left[\omega_{k_1, k_2-1} \cdot \tilde{X} \right]$. If the patch $Y_{k_1, k_2-1}$ does not contain the pixel $X[0,0]$, then we can again take $g_k^{(2)} = \omega_{k_1, k_2-1} = (L,L)$, so that $P_{01} \left[ g_k^{(2)}\cdot \tilde{X} \right] = 0_{L\times L}$.
Hence, we can take $g_k^{(2)} = \omega_{k_1, k_2-1}$.

By a similar reasoning, we can take $g_k^{(3)} = \omega_{k_1-1, k_2}$, which implies that $P_{10} \left[ g_k^{(3)}\cdot \tilde{X}\right]$ contains the part of the signal $X$, appearing in $Z_k$, for which the pixel $X[0,0]$ is contained in the patch $Y_{k_1-1, k_2}$; and $P_{10} \left[ g_k^{(3)}\cdot \tilde{X}\right] = 0_{L\times L}$ if the patch $Y_{k_1-1, k_2}$ does not contain the pixel $X[0,0]$, i.e.\ if $\omega_{k_1-1, k_2} = (L,L)$.

If the patch $Y_{k_1-1, k_2-1}$ contains the pixel $X[0,0]$, then $\omega_{k_1-1, k_2-1}\in \{ 0, 1, \ldots, L-1\}^2$, and the remainder in $Z_k$ of this signal appearance is given by $P_{11}\left[ \omega_{k_1-1, k_2-1} \cdot \tilde{X} \right]$. If $Y_{k_1-1, k_2-1}$ does not contain the pixel $X[0,0]$, i.e., if $\omega_{k_1-1, k_2-1} = (L,L)$, then we can take $g_k^{(4)} = (0,0)$. 

In conclusion, for each $k\in \mathcal{I}_M$, the group element $g_k = \left( g_k^{(1)}, g_k^{(2)}, g_k^{(3)}, g_k^{(4)}\right)\in (\Z_{2L}\times \Z_{2L})^4$ can be taken as
\begin{equation}
\label{g_k in terms of omega}
g_k^{(1)} = \omega_k,
\quad
g_k^{(2)} = \omega_{k_1, k_2-1},
\quad
g_k^{(3)} = \omega_{k_1-1, k_2},
\quad
g_k^{(4)} =
\begin{cases}
    \omega_{k-1} & \text{if} \ \omega_{k-1}\in \{ 0, \ldots, L-1\}^2 \\
    (0,0) & \text{if} \ \omega_{k-1} = (L,L),
\end{cases}
\end{equation}
where the parameters $\omega_k$ are defined, in terms of $\{ t_i\}_{i=1}^q$ by \eqref{omega_k def}.
In the definition of $g_k^{(4)}$ above, $k-1$ denotes $(k_1- 1, k_2 -1)$.

\textbf{Step 2:}
Let us recall the graph $\mathcal{G}_M = (V_M, E_M)$, with vertices $V_M\subset \Z^2$ given by \eqref{graph vertex} and edges $E_M\subset V_M \times V_M$ defined in \eqref{graph edges}.
Let us denote by $\Theta$ the random variable associated to the hard-core model on the graph $\mathcal{G}_M$ with activity parameter $\lambda\in (0,1)$; see Definition \ref{def:hardCoreModelGraph}. 
The hard-core model $\Theta$ is a random variable that takes values in the admissible subgraphs of the graph $\mathcal{G}_M$, i.e.\ subsets of $V_M$ which contain no adjacent vertices.
From now on, we represent subsets of $V_M$ by means of a boolean matrix of the form $\eta = \{\eta_v\ : \ v\in V_M\} \in\{ 0,1\}^{V_M}$.
We recall that, for any $\eta\in \{0,1\}^{V_M}$, we say that $\eta$ is admissible if and only if
$$
\eta_v \eta_{v'} = 0 \qquad \forall (v,v')\in E_M,
$$
i.e.\ any two vertices $v$ and $v'$ which are connected by an edge, do not take the value 1 simultaneously.
The probability function for the random variable $\Theta$ can therefore be written as
$$
\mathbb{P} (\Theta = \eta) =
\begin{cases}
C^{-1} \lambda^{|\eta|} & \text{if $\eta$ is admissible}, \\
0 & \text{else,}
\end{cases}
$$
where, for each $\eta\in \{ 0,1\}^{V_M}$, we write 
$|\eta| = \sum_{v\in V_M} \eta_v$ and $C$ is the normalizing constant
$$
C := \sum_{\substack{\eta \in \{0,1\}^{V_M} \\ \text{indep.}}} \lambda^{|\eta|}. 
$$

In view of Assumption \ref{assumption hard-core}, the parameters $\{ t_i\}_{i=1}^q$ in \eqref{eq:2d-mtd} can be written, in terms of the random variable, $\Theta$ as
$$
\{ t_i\}_{i=1}^q = \{ v\in V_M \ : \quad \Theta_v = 1 \}.
$$
By the definition of the parameter $\omega_k$ in \eqref{omega_k def} for each $k = (k_1, k_2)\in \mathcal{I}_M$, we can write $\omega_k$ in terms of the restriction of the hard-core model $\Theta$ to the rectangle
$$
kL + \{0, \ldots, L-1\}^2 = \left\{ (Lk_1 + \xi_1, \, Lk_2 + \xi_2) \ : \quad (\xi_1,\xi_2)\in \{ 0, \ldots, L-1 \}^2 \right\}.
$$
More precisely, we can write $\omega_k$ as
\begin{equation}
\label{omega k in terms of hard core}
\omega_k :=  \{ v\in kL + \{ 0, \ldots, L-1 \}^2 \ : \quad \Theta_v = 1 \},
\end{equation}
with $\omega_k := (L,L)$ if $\Theta_v = 0$ for all $v\in kL + \{ 0, \ldots, L-1 \}$.

Also, in view of \eqref{g_k in terms of omega},  for each $k = (k_1, k_2)\in \mathcal{I}_M$, the group element $g_k = \left( 
g_k^{(1)}, g_k^{(2)}, g_k^{(3)}, g_k^{(4)} \right)$ can be written as a deterministic function of the parameters $\omega_k, \quad \omega_{k_1, k_2-1}, \quad \omega_{k_1-1, k_2}$ and $\omega_{k_1-1, k_2-1}$.
Hence, for any $k\in \mathcal{I}_M$, we can choose $g_k\in G$ uniquely in terms of the restriction of the hard-core model $\Theta$ to the rectangle $V_k\in \Z^2$ given by
\begin{equation}
\label{V_k def}
V_k := kL + \{-L, \ldots, L-1\}^2 = \left\{ (Lk_1 + \xi_1, \, Lk_2 + \xi_2) \ : \quad (\xi_1,\xi_2)\in \{ -L, \ldots, L-1 \}^2 \right\}.
\end{equation}
For each $k\in \mathcal{I}_M$, let us denote by $\Theta_k$ the restriction of the random variable $\Theta$ to the rectangle $V_k$.
Since the choice of $g_k\in G$ only depends on the random variable $\Theta_k$, we can view $\{ g_k \}_{k\in \mathcal{I}_M}$ as a random field over $G$, indexed by $\mathcal{I}_M$.

\textbf{Step 3:} 
Next, we claim that the hard-core model $\Theta$ on the graph $\mathcal{G}_M = (V_M,E_M)$ has a unique Gibbs measure on the infinite graph $\Z^2$, with the same criterion for edge connectivity as in \eqref{graph edges}. This is a consequence of the following property, known as strong spatial mixing. 
\begin{lemma}[Corollary 2.6 of \cite{weitz2006}]
\label{lem: strong mixing hard-core}
In a general hard-core model $\Theta$, defined on a graph $(V,E)$, there exists a critical value $\lambda_c>0$, depending on the maximal degree of the graph, such that for any $0<\lambda<\lambda_c$ there exist constants $c,\gamma$ depending on $\lambda$ and the maximal degree such that for any vertex $v\in V$, any set of vertices $S_1\subset V$, any $\tau\in \braces{0,1}$ and any configurations $\eta,\eta'\in \braces{0,1}^{S_1}$,
\begin{equation} 
 \label{strong mixing property}
     \abs{\mathbb{P}(\Theta_{v}=\tau \mid \Theta_{S_1} = \eta)-\mathbb {P}(\Theta_{v}=\tau \mid \Theta_{S_1}=\eta')} \leq c \exp (-\gamma \tilde{d}(v,\eta,\eta')),
\end{equation}
where $\tilde{d}(v,\eta,\eta')=\min \{ d(v,w) : \eta_w \neq \eta'_w \}$ is the smallest graph distance from $v$ to a vertex (in $S_1$) on which $\eta$ and $\eta'$ differ.
\end{lemma}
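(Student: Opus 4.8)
This statement is Corollary 2.6 of \cite{weitz2006} and is imported as is; for completeness I indicate the strategy behind it. The plan is to reduce the conditional marginal at $v$ to a hard-core marginal on a tree, where the occupancy probabilities obey an explicit one-step recursion, and then to show that this recursion contracts once $\lambda$ lies below the tree-uniqueness threshold determined by the maximal degree of $(V,E)$. \textbf{Reduction to the self-avoiding walk tree.} Fixing $v$, the pinning set $S_1$, and $\eta,\eta'\in\braces{0,1}^{S_1}$, Weitz's identity states that the law of $\Theta_v$ conditioned on $\Theta_{S_1}=\eta$ equals the root-occupancy law of the finite self-avoiding walk tree $T=T_{\mathrm{SAW}}(V,v)$ rooted at $v$, where each leaf coming from a vertex of $S_1$ is forced to the value prescribed by $\eta$, and each leaf created by closing a cycle is forced (occupied or unoccupied) according to the order in which the self-avoiding walk would revisit that vertex. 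Two features matter: the maximal degree of $T$ is at most that of $(V,E)$, and the tree-distance from the root to a leaf inheriting its value from $w\in S_1$ is at least $d(v,w)$; hence switching from $\eta$ to $\eta'$ only perturbs leaf values at tree-distance $\ge\tilde d(v,\eta,\eta')$ from the root.

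\textbf{Tree recursion and contraction.} On any tree, writing $R_u:=\mathbb{P}(\Theta_u=1)/\mathbb{P}(\Theta_u=0)$ for the occupancy odds in the subtree hanging from $u$, a direct computation gives
\[
R_u \;=\; \lambda\prod_{w\in\mathrm{ch}(u)}\frac{1}{1+R_w},
\]
with $R_u\in\{0,\infty\}$ at a pinned leaf and $\mathbb{P}(\Theta_v=1)=R_{\mathrm{root}}/(1+R_{\mathrm{root}})$. One then picks a suitable monotone potential $\Phi$ with bounded derivative on the relevant range of odds so that, for every internal vertex with at most $\Delta-1$ children, the induced map $(\Phi(R_w))_w\mapsto\Phi(R_u)$ is Lipschitz with a constant $\kappa=\kappa(\Delta,\lambda)<1$ exactly when $\lambda<\lambda_c(\Delta)$, the hard-core uniqueness threshold of the infinite $\Delta$-regular tree. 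Composing this contraction across the $\ge\tilde d(v,\eta,\eta')$ levels separating the root from the perturbed leaves, and transferring back through $\Phi^{-1}$ and through $R\mapsto R/(1+R)$, yields \eqref{strong mixing property} with $\gamma=-\log\kappa>0$ and $c$ absorbing the derivative bounds for $\Phi$ and $\Phi^{-1}$; taking $\lambda_c$ to be the critical value for the maximal degree of $(V,E)$ gives the threshold in the statement.

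The main obstacle is the contraction estimate in the second step: one must choose the potential $\Phi$ correctly and bound the derivative of the composite message map uniformly in the boundary values, and it is precisely here that the tree threshold $\lambda_c(\Delta)$ enters. By contrast, the self-avoiding walk tree identity is combinatorially delicate but, once the cycle-closing convention is fixed, reduces to bookkeeping on conditional probabilities, and the tree recursion itself is a routine computation; so the real content is the contraction analysis, which is the technical core of \cite{weitz2006}.
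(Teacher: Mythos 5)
Your treatment matches the paper's: Lemma~\ref{lem: strong mixing hard-core} is imported directly as Corollary~2.6 of \cite{weitz2006} and is not proved in the text, so citing it is exactly what the paper does. Your supplementary sketch of Weitz's argument (self-avoiding-walk tree reduction, the odds recursion $R_u=\lambda\prod_w(1+R_w)^{-1}$, and the potential-function contraction below the tree uniqueness threshold for the maximal degree) is an accurate account of the external proof and raises no concerns.
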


Note that the sufficient condition on $\lambda$ for strong spatial mixing to hold depends on the maximal degree of the graph, which in our case is $(2L+1)^2$, due to the definition of the edges in \eqref{graph edges}, and since every vertex $v\in V_M$ is connected to all the vertices in the rectangle $v + \{ -L, \ldots, L \}^2$.

It is well known that strong spatial mixing implies the existence of a Gibbs measure $\mu$ defined on the infinite graph $\Z^2$ (see \cite{georgii2011} and \cite{01771213-b0f2-3d96-9da6-34ca75b35228} for general theory on Gibbs measures, and \cite{weitz2006} for the special case of hard-core models).
The Gibbs measure $\mu$ is a probability distribution on the infinite-dimensional state space $\{ 0,1\}^{\Z^2}$ and can be used to write the hard-core model on the finite graph $\mathcal{G}_M = (V_M,E_M)$, as the restriction of the Gibbs measure to $V_M$, conditionally on all the elements on the complement $V_M^c := \Z^2\setminus V_M$ being $0$.
If we denote by $\overline{\Theta}$ the random variable on $\{0,1\}^{\Z^2}$ associated to the Gibbs measure $\mu$, then for any $\eta\in \{0,1\}^{V_M}$, we can write the probability $\mathbb{P} [\Theta = \eta]$ as
$$
\mathbb{P} [\Theta = \eta] = \mathbb{P} \left[ \overline{\Theta} = \overline{\eta} \mid \overline{\Theta}_{V^c_M} = 0\right] = \dfrac{\mu (\overline{\eta})}{\mathbb{P} \left[ \overline{\Theta}_{V^c_M} = 0 \right]}, \qquad \text{where} \quad
\mathbb{P} \left[ \overline{\Theta}_{V^c_M} = 0 \right] = \sum_{\substack{\tau\in \{0,1\}^{\Z^2} \\ \tau_{V^c_M} = 0}} \mu (\tau).
$$
Here, $\overline{\eta}$ is the extension by zeros of $\eta$ from $V_M$ to $\Z^2$, i.e., $\overline{\eta}_v = \eta_v$ if $v\in V_M$ and $\overline{\eta}_v = 0$ if $v\in V^c_M$,
and for any $\tau\in \{0, 1\}^{\Z^2}$, $\tau_{V^c_M}$ denotes the restriction of $\tau$ to the set $V^c_M = \Z^2\setminus V_M$.

\textbf{Step 4:} In Step 2, we proved that the group element $g_k\in G$, for each $k\in \mathcal{I}_M$, is determined by restricting the hard-core model $\Theta$ on the graph $\mathcal{G}_M$, to the rectangle $V_k$ defined in \eqref{V_k def}.
The probability distribution $\pi$ on $G$ satisfying \eqref{eq:mixing-property} will be constructed by restricting the Gibbs measure $\mu$ on $\Z^2$, introduced in Step 3, to the rectangle $V_k$. Recall that the existence of this Gibbs measure is guaranteed by Lemma \ref{lem: strong mixing hard-core}, provided $\lambda\in (0,\lambda_0)$, where $\lambda_0$ is a constant depending only on $L$.

For each $k\in \mathcal{I}_M$, let us denote by $\mu_k$ the marginal of the Gibbs measure $\mu$ on the rectangle $V_k$ defined in \eqref{V_k def}. Note that the support of $\mu_k$ is $\{ 0,1\}^{V_k}$.
Based on the construction of $g_k$ in \eqref{g_k in terms of omega} and \eqref{omega k in terms of hard core}, which is in terms of the restriction of the hard-core model to the rectangle $V_k$, there is a probability distribution $\pi_k$ on $G$ associated to the measure $\mu_k$. More precisely, $\pi_k$ is the pushforward measure of $\mu_k$ under the encoding map from $\{0,1\}^{V_k}$ to $G$ constructed in \eqref{g_k in terms of omega} and \eqref{omega k in terms of hard core}. Due to the shift invariance of the Gibbs measure $\mu$ over $\Z^2$, one can easily verify that all the marginals $\mu_k$ are equal, and hence, there exists a probability distribution $\pi$ on $G$  such that $\pi_k = \pi$ for all $k\in \mathcal{I}_M$. 

Let us now prove that the random field $\{g_k\}_{k\in \mathcal{I}_M}$ and the probability distribution $\pi$ satisfy \eqref{eq:mixing-property}.
Let $k\in \mathcal{I}_M$, and let $S\subset \mathcal{I}_M$ be a subset of indices. 
Denoting $V_k$ the rectangle defined in \eqref{V_k def}, let us define $V_S = \bigcup_{k'\in S} V_{k'}$.
By the choice of $g$ in \eqref{g_k in terms of omega}, for any $\varphi\in G$, there is a unique associated configuration $\eta\in \{0,1\}^{V_k}$ of the hard-core model $\Theta$ restricted to $V_k$, and for each admissible configuration $\Psi \in G^{|S|}$, there is a unique admissible configuration $\tau \in \{0,1\}^{V_S}$ of the hard-core model $\Theta$ restricted to $V_S$.
Then, we can write the probability of $g_k = \varphi$ and $g_S = \Psi$ in terms of the hard-core model $\Theta$ on $\mathcal{G}_M = (V_M,E_M)$ as
$$
\mathbb{P} \left[ g_k = \varphi \right] = \mathbb{P} \left[ \Theta_{V_k} = \eta \right]
\qquad \text{and} \qquad
\mathbb{P} \left[ g_S = \Psi \right] = \mathbb{P} \left[ \Theta_{V_S} = \tau \right] .
$$
Using the relation between the hard-core model $\Theta$ and the random variable $\overline{\Theta}$ associated to the Gibbs measure $\mu$ on $\{ 0,1 \}^{\Z^2}$, we can write
\begin{equation}
\label{probability of A and B}
\mathbb{P} \left[ g_k = \varphi \right] = \mathbb{P} \left[ \overline{\Theta}_{V_k} = \eta \mid \overline{\Theta}_{V^c_M} = 0 \right]
\qquad \text{and} \qquad
\mathbb{P} \left[ g_S = \Psi \right] = \mathbb{P} \left[ \overline{\Theta}_{V_S} = \tau \mid \overline{\Theta}_{V^c_M} = 0 \right].
\end{equation}
Let us now prove that
\begin{equation}
\label{cond prob 1}
\mathbb{P} \left[ g_k = \varphi \mid g_S = \Psi \right] =
\mathbb{P} \left[ \overline{\Theta}_{V_k} = \eta \mid (\overline{\Theta}_{V_S}, \overline{\Theta}_{V^c_M}) =(\tau, 0) \right].
\end{equation}
We define the events $A := [g_k = \varphi]$, $B:= [g_S = \Psi]$. The intersection of both events is given by $A\cap B = \left[ (g_k, g_S ) = (\varphi, \Psi) \right]$.
Using the second identity in \eqref{probability of A and B}
and the  formula of conditional probabilities, we can write $\mathbb{P}[B]$ as
$$
\mathbb{P} [B] = 
\mathbb{P} \left[ \overline{\Theta}_{V_S} = \tau \mid \overline{\Theta}_{V^c_M} = 0 \right] =
\dfrac{\mathbb{P} \left[ (\overline{\Theta}_{V_S}, \overline{\Theta}_{V_M^c}) = (\tau, 0) \right]}{\mathbb{P} \left[ \overline{\Theta}_{V_M^c} = 0  \right]}.
$$
Similarly, the probability of the intersection $A\cap B$ is given by
\begin{eqnarray*}
\mathbb{P}[A\cap B] &=& \mathbb{P}[(g_k ,g_S) = (\varphi, \Psi) ] = \mathbb{P} \left[ (\overline{\Theta}_{V_k}, \overline{\Theta}_{V_S} ) = (\eta, \tau) \mid \overline{\Theta}_{V_M^c} = 0 \right] \\
&=& \dfrac{\mathbb{P} \left[ (\overline{\Theta}_{V_k} ,\overline{\Theta}_{V_S}, \overline{\Theta}_{V_M^c} ) = (\eta, \tau, 0) \right]}{ \mathbb{P} \left[ \overline{\Theta}_{V_M^c} = 0 \right]}.
\end{eqnarray*}
Using the formula for the conditional probability, we have
\begin{eqnarray*}
\mathbb{P} \left[ A \mid B \right] = \dfrac{\mathbb{P} [A\cap B]}{\mathbb{P} [B]} &=& 
\dfrac{\mathbb{P} \left[ (\overline{\Theta}_{V_k} ,\overline{\Theta}_{V_S}, \overline{\Theta}_{V_M^c} ) = (\eta, \tau, 0) \right]}{\mathbb{P} \left[ (\overline{\Theta}_{V_S}, \overline{\Theta}_{V_M^c}) = (\tau, 0) \right]} \\
&=&
\mathbb{P}\left[ \overline{\Theta}_{V_k} = \eta \mid (\overline{\Theta}_{V_S}, \overline{\Theta}_{V_M^c}) = (\tau, 0) \right],
\end{eqnarray*}
and \eqref{cond prob 1} is proven.

On the other hand, using the law of total probability, we have
\begin{equation}
\label{cond prob 2}
\pi (\varphi) = 
\mathbb{P} \left[ \overline{\Theta}_{V_k} = \eta \right] = 
\sum_{\substack{\tau_1 \in \{0,1\}^{V_S} \\
\tau_2\in \{0,1\}^{V^c_M}}} \mathbb{P} \left[ \overline{\Theta}_{V_k} = \eta \mid (\overline{\Theta}_{V_S} , \overline{\Theta}_{V^c_M}) = (\tau_1, \tau_2) \right] \mathbb{P} \left[ (\overline{\Theta}_{V_S} , \overline{\Theta}_{V^c_M} ) = (\tau_1,\tau_2) \right].
\end{equation}
Combining \eqref{cond prob 1} and \eqref{cond prob 2}, with the strong spatial mixing property from Lemma \ref{lem: strong mixing hard-core}, we obtain
\begin{align*}
   & \big| \mathbb{P} \left[ g_k = \varphi \mid g_S = \Psi  \right] - \pi (\varphi) \big|
    \leq  \\ 
   & \sum_{\substack{\tau_1 \in \{0,1\}^{V_S} \\
\tau_2\in \{0,1\}^{V^c_M}}} 
\Big|
\mathbb{P} \left[ \overline{\Theta}_{V_k} = \eta \mid \overline{\Theta}_{V_S, V^c_M} = (\tau,0) \right] 
- 
\mathbb{P} \left[ \overline{\Theta}_{V_k} = \eta \mid \overline{\Theta}_{V_S, V^c_M} = (\tau_1, \tau_2) \right] 
\Big| \, 
\mathbb{P} \left[ \overline{\Theta}_{V_S, V^c_M} = (\tau_1, \tau_2) \right] \\
& \leq c_0 \exp \left( - \gamma d( V_k, V_S\cup V^c_M ) \right),
\end{align*}
where $d(V_k, V_S\cup V^c_M)$ denotes the distance between the sets $V_k$ and $V_S\cup V^c_M$ in the 2-dimensional lattice $\Z^2$.
In view of the construction of the rectangles $V_M$ in \eqref{graph vertex} and $V_k$ in \eqref{V_k def}, we see that $d(k, S\cup \mathcal{I}^c_M) \leq L d(V_k, V_S\cup V^c_M)$, and property \eqref{eq:mixing-property} then follows with exponent $\gamma/L$.
\hfill $\square$

\subsection{Proof of Theorems~\ref{thm:sampleComplexity1DMTD} and \ref{thm:2d-general-result}}\label{sec:proof-of-sample-complexity}

The proof is almost identical for both theorems, and relies on the mixing properties of the group elements $g_k$ which appear in the induced MRA model for the patches $Z_k$. The main difference is the index set, which is $k\in \{ 1, \ldots , M\}$ for the one-dimensional case (Theorem \ref{thm:sampleComplexity1DMTD}) and $k\in \{ 1, \ldots, M \}^2$ in the two-dimensional case (Theorem \ref{thm:2d-general-result}). In both cases, we denote the index set by $\mathcal{I}_M$, and $|\mathcal{I}_M|$ denotes the cardinality of $\mathcal{I}_M$ (i.e., $|\mathcal{I}_M| = M$ for the 1D case, and $|\mathcal{I}_M| = M^2$ for the 2D case).

Note that, in the 1D case, the Markov chain $\{ g_k\}_{k=1}^{M}$ from Theorem \ref{thm:MTD-1D} can be seen as a random field  over $G$ indexed by $\mathcal{I}_M = \{1,2,\ldots M \} \subset \Z$; see Definition \ref{def:random-field}. Moreover, the Markovian property, together with property \eqref{exponential mixing MC thm} proved in Theorem \ref{thm:MTD-1D}, imply the following mixing property: 
for any $k\in \mathcal{I}_M = \{1, 2, \ldots ,M\}$, any $S \subset \mathcal{I}_M$ with $s < k$ for all $s\in S$, and any admissible $\Psi\in G^S$, 
\begin{equation}
    \label{mixing property 1D}
    | \mathbb{P} (g_k = \varphi \mid  g_{|S}=\Psi) - \pi (\varphi)| \leq C \exp \left(-\gamma (k - s_0)\right),
\end{equation}
where  $s_0 = \max (S) $ or $s_0 = 0$ if $S= \emptyset$. Here, we take $\gamma = - \log (1-(1-\lambda)^{L-1}) >0$ and $C>0$ the same constant as in Theorem \ref{thm:MTD-1D}.
The mixing property \eqref{mixing property 1D} can be seen as a one-dimensional version of the mixing property in Definition \ref{def:exponential-mixing} noting that, in this case, the states $\{g_k\}_{k=1}^M$ have a natural ordering.

In view of Theorems \ref{thm:MTD-1D} and \ref{thm:2d-main-result}, in both cases, the patches $\{ Z_k\}_{k\in \mathcal{I}_M}$ extracted from the MTD measurement $Z$, can be written as
\begin{equation}
    \label{MRA def proof}
    Z_k = P (g_k\cdot \overline{X}) + \varepsilon_k, \qquad \text{for}\ k\in \mathcal{I}_M,
\end{equation}
for some projection operator $P$ and some group elements $\{ g_k \}_{k\in \mathcal{I}_M}$ in $G$, acting on the target signal $\overline{X}$, and $\varepsilon_k$ being i.i.d.\ Gaussian noise with variance $\sigma^2>0$.
We also recall, from the statement of the theorem, the definition of the i.i.d.\ measurements $\{Y_k\}_{k\in \mathcal{I}_M}$, which have the same form as \eqref{MRA def proof}, but with group elements $\{ \tilde{g}_k \}_{k\in \mathcal{I}_M}$ being i.i.d.\ from a suitable probability distribution $\pi$ over $G$.

The proof of the Theorems is naturally divided in two parts: part (i), which addresses the case of general estimators, considering an $m$-separated subsampling of the patches; and part (ii), which deals with estimators based on an empirical average, and do not require subsampling.

\textbf{Part (i):} We carry the proof in three steps. First, we unify the notation concerning the patch subsampling in the 1D and the 2D settings. Then, we prove an $\ell^1$-estimate for the probability distribution of the group elements associated the subsampled patches, which is a direct consequence of the mixing properties proved in Theorems~\ref{thm:MTD-1D} and \ref{thm:2d-main-result}. Finally, we combine the $\ell^1$-estimate with the law of total expectation to conclude the proof.

\textit{Step 1: Preparations.} Let $c>0$ be a constant to be chosen later. For any $M\in \mathbb{N}$, choose $m = \lfloor c \log(M) \rfloor$ and $M' = \lfloor M/m \rfloor$.
We define the index set
$$
\mathcal{I}_M' := \{ km\ , : \quad k\in \mathcal{I}_{M'} \}.
$$
The cardinality of the index set is $|\mathcal{I}_{M}'| = M'$ in the 1D case, and $|\mathcal{I}_{M}'| = (M')^2$ in the 2D case.
Note that $\mathcal{I}_M' \subset \mathcal{I}_M$. 
Moreover, in the 1D case we have
\begin{equation}
\label{index separation proof 1D}
k - k' \geq m = \floor{c \log (M)},
\qquad \forall k, k'\in \mathcal{I}_M', \ \text{with $k>k',$}
\end{equation}
whereas in the 2D case we have
\begin{equation}
\label{index separation proof}
d(k,k') \geq m = \floor{c \log (M)},
\qquad \forall k, k'\in \mathcal{I}_M',
\end{equation}
$d(k,k') = \max_{j\in\{1,2\}} |k_j-k'_j|$.
Considering $\mathcal{I}_M'$ as a subset of $\mathcal{I}_M$, we also have that
$d(k, \mathcal{I}_{M}^c)>m$ for all $k\in \mathcal{I}_M'$, where $\mathcal{I}_{M}^c = \Z^j \setminus \mathcal{I}_M$, with $j=1$ in the 1D case, and $j=2$ in the 2D case.

We are therefore in position to apply the mixing property \eqref{mixing property 1D} in the 1D case, and \eqref{eq:mixing-property} in the 2D case for any $k\in \mathcal{I}_M'$ with $S = \mathcal{I}_M'\setminus \{k\}$.

Let us denote by $\pi_M$ the probability density function associated to the marginal of the random field $\{ g_k\}_{k\in \mathcal{I}_M}$ to the subset of indices $\mathcal{I}_M'$, and let $\tilde{\pi}_M$ be the probability density function of the marginal of the random field $\{ \tilde{g}_k\}_{k\in \mathcal{I}_M}$ to $\mathcal{I}_M'$.
Note that the latter is an i.i.d.\ sampling from $\pi$, whereas the former is extracted from a random field satisfying the mixing property \eqref{mixing property 1D} and \eqref{eq:mixing-property} respectively for the 1D and the 2D cases.

Now we choose, in the 2D case, an arbitrary ordering of the index set $\mathcal{I}_{M}'$, so that any $\Psi \in G^{\mathcal{I}_M'}$ can be written as an ordered list of the form $\Psi =(\psi_1, \psi_2, \ldots, \psi_{|\mathcal{I}_M'|})\in G^{|\mathcal{I}_M'|}$. In the 1D case, we choose the natural ordering of the index set $\mathcal{I}_M' := \{m, 2m, 3m, \ldots , mM' \}$.

\textit{Step 2: Mixing property and $\ell^1$-estimate.}
We claim that there exists $c_1>0$, independent of $M'$, such that the $\ell_1$ distance between $\tilde{\pi}_M$ and $\pi_M$ satisfies
\begin{equation}
\label{claim convergence}
\norm{{\pi}_M-\tilde{\pi}_M}_1\coloneqq \sum_{\Psi} \abs{\pi_M(\Psi)-\tilde{\pi}_M(\Psi)} \leq (M')^j |G| c_1 e^{-\gamma m},
\end{equation}
for any $\Psi\in G^{|\mathcal{I}_M'|}$, and where $j=1$ in the 1D case and $j=2$ in the 2D case.

Using the notation $\Psi_{1:j}$ to denote the vector $(\psi_1,\dots,\psi_j)\in G^j$, define
\[A_j = \sum_{\Psi_{1:j}} \abs{ \mathbb{P}({g}_{1:j}=\Psi_{1:j})-\mathbb{P}(\tilde{g}_{1:j}=\Psi_{1:j})},\] 
so that $\norm{\tilde{\pi}_M-\pi_M}_1=A_{|\mathcal{I}_M'|}$. Due to the mixing properties\footnote{\eqref{mixing property 1D} and \eqref{index separation proof 1D} in the 1D case; and \eqref{eq:mixing-property} and \eqref{index separation proof} in the 2D case.}, and the fact that $\mathbb{P}\left(\tilde{g}_1 = \Psi_1\right) = \pi (\Psi_1),$
we have that $A_1\leq |G| c_1 e^{-\gamma m}$, where $|G|$ is the cardinality of the finite group $G$.

Now, using that 
$$
\mathbb{P} \left( {g}_{1:j}=\Psi_{1:j} \right)
=
\mathbb{P} \left( {g}_{j}=\Psi_{j} \mid {g}_{1:j-1}=\Psi_{1:j-1}\right)  \mathbb{P} \left( {g}_{1:j-1}=\Psi_{1:j-1}\right),
$$ 
and that
$$
\mathbb{P} \left( \tilde{g}_{1:j}=\Psi_{1:j} \right)
=
\pi\left( \Psi_j \right)  \mathbb{P} \left( \tilde{g}_{1:j-1}=\Psi_{1:j-1}\right),
$$ 
for all $j$ and $\Psi_{1:j}$, we can decompose 
\begin{align*}
\abs{\mathbb{P}\left( g_{1:j}=\Psi_{1:j})-\mathbb{P}(\tilde{g}_{1:j}=\Psi_{1:j}\right) }  
\leq  & \mathbb{P}\left( g_{1:j-1}=\Psi_{1:j-1} \right) \abs{\mathbb{P} \left(g_{j}=\Psi_{j} \mid g_{1:j-1}=\Psi_{1:j-1}\right) - \pi(\Psi_j)} \\
& + \pi(\Psi_j) \abs{\mathbb{P}\left( g_{1:j-1}=\Psi_{1:j-1}\right)-\mathbb{P} \left( \tilde{g}_{1:j-1}=\Psi_{1:j-1}\right) } 
\end{align*}
Hence, we obtain
\begin{align*}
A_j  
\leq  & \sum_{\Psi_{1:j}} \mathbb{P}\left( g_{1:j-1}=\Psi_{1:j-1} \right) \abs{\mathbb{P} \left(g_{j}=\Psi_{j} \mid g_{1:j-1}=\Psi_{1:j-1}\right) - \pi(\Psi_j)} \\
& + \sum_{\Psi_{1:j}} \pi(\Psi_j) \abs{\mathbb{P}\left( g_{1:j-1}=\Psi_{1:j-1}\right)-\mathbb{P} \left( \tilde{g}_{1:j-1}=\Psi_{1:j-1}\right)} \\
= & \, I + II.
\end{align*}
Using again the mixing properties\footnote{\eqref{mixing property 1D} and \eqref{index separation proof 1D} in the 1D case; and \eqref{eq:mixing-property} and \eqref{index separation proof} in the 2D case.}, we can estimate $I$ as
\begin{align*}
    I \leq & c_1 e^{-\gamma m} \sum_{\Psi_{1:j}} \mathbb{P}\left( g_{1:j-1}=\Psi_{1:j-1} \right) = c_1 e^{-\gamma m} \sum_{\Psi_j} \underbrace{\sum_{\Psi_{1:j-1}} \mathbb{P}\left( g_{1:j-1}=\Psi_{1:j-1} \right)}_{=1} = |G| c_1 e^{-\gamma m}.
\end{align*}
For the term $II$, we can factor out $\pi (\Psi_j)$ to obtain
$$
II = \underbrace{\sum_{\Psi_j} \pi(\Psi_j)}_{=1} \sum_{\Psi_{1:j-1}} \abs{\mathbb{P}\left( g_{1:j-1}=\Psi_{1:j-1}\right)-\mathbb{P} \left( \tilde{g}_{1:j-1}=\Psi_{1:j-1}\right)} = A_{j-1}.
$$
We therefore obtain $A_j \leq |G| c_1 e^{-\gamma m} + A_{j-1}$, and hence, $\| \pi_M - \tilde{\pi}_M\|_1 = A_{|\mathcal{I}_M'|} \leq |\mathcal{I}_M'| |G| c_1 e^{-\gamma m}$. Finally, since the cardinality of the index set $|\mathcal{I}_M'|$ is $M'$ in the 1D case, and $(M')^2$ in the 2D case, the claim \eqref{claim convergence} follows.

\textit{Step 3: Conclusion.}
Now, for any bounded estimator $\widehat{F}(\cdot)$, and using the law of total expectation, we can write the MSE for the samples $\{ Z_k\}_{k\in \mathcal{I}_M'}$ and $\{ Y_k\}_{k\in \mathcal{I}_M'}$, respectively, as
$$
\mathbb{E} \left[ \left\| \widehat{F} (\{ Z_k\}_{k\in \mathcal{I}_M'}) - \mathcal{F}(X) \right\|^2 \right] 
= \sum_\Psi \mathbb{E} \left[
 \left\| \widehat{F} (\{ Z_k\}_{k\in \mathcal{I}_M'}) - \mathcal{F}(X) \right\|^2 \mid \{ g_k\}_{k\in \mathcal{I}_M'} = \Psi
\right] \pi_M (\Psi)
$$
and
$$
\mathbb{E} \left[
 \left\| \widehat{F} (\{ Y_k\}_{k\in \mathcal{I}_M'}) - \mathcal{F}(X) \right\|^2 \right] 
= \sum_\Psi \mathbb{E} \left[
 \left\| \widehat{F} (\{ Y_k\}_{k\in \mathcal{I}_M'}) - \mathcal{F}(X) \right\|^2 \mid \{ \tilde{g}_k\}_{k\in \mathcal{I}_M'} = \Psi
\right] \tilde{\pi}_M (\Psi).
$$
Since the difference between $\{ Z_k\}_{k\in \mathcal{I}_M'}$ and $\{ Y_k\}_{k\in \mathcal{I}_M'}$ is only the probability distribution for the group elements $\{ {g}_k \}_{k\in \mathcal{I}_M'}$, we have that the MSE, conditionally on  $\{ {g}_k \}_{k\in \mathcal{I}_M'}$, is the same for both cases.
In other words, for any admissible $\Psi\in G^{\mathcal{I}_M'}$, we have
\begin{equation*}
\mathbb{E} \left[ \left\| \widehat{F} (\{ Z_k\}_{k\in \mathcal{I}_M'}) - \mathcal{F}(X) \right\|^2 \mid \{ g_k\}_{k\in \mathcal{I}_M'} = \Psi
\right] 
=
\mathbb{E} \left[ \left\| \widehat{F} (\{ Y_k\}_{k\in \mathcal{I}_M'}) - \mathcal{F}(X) \right\|^2 \mid \{ \tilde{g}_k\}_{k\in \mathcal{I}_M'}  =\Psi
\right] .
\end{equation*}
Hence, using the boundedness of $\widehat{F}(\cdot)$ and the estimate \eqref{claim convergence}, we deduce the existence of a constant $C>0$, which is independent of $M$, such that
\begin{eqnarray*}
\left|
\mathbb{E} \left[
 \left\| \widehat{F} (\{ Z_k\}_{k\in \mathcal{I}_M'}) - \mathcal{F}(X) \right\|^2 \right]
-
\mathbb{E} \left[ \left\| \widehat{F} (\{ Y_k\}_{k\in \mathcal{I}_M'}) - \mathcal{F}(X) \right\|^2 \right]
\right| &\leq & C \sum_\Psi | \pi_M (\Psi) - \tilde{\pi}_M (\Psi) | \\
&\leq & C |G| c_1 (M')^j e^{-\gamma m}
\end{eqnarray*}
Finally, for the convergence rate $a(\sigma)$ from~\eqref{eqn:2d-convergence-rate}, we can compute
\begin{eqnarray*}
    \abs{
    \dfrac{\mathbb{E} \left[
    \left\| \widehat{F} (\{ Z_k\}_{k\in \mathcal{I}_M'}) - \mathcal{F}(X) \right\|^2 \right]}{a(\sigma) / (M')^j} - 1
    } & \leq & C |G| c_1 \dfrac{(M')^{2j}}{a(\sigma)} e^{-\gamma m}
 + \abs{
    \dfrac{\mathbb{E} \left[
 \left\| \widehat{F} (\{ Y_k\}_{k\in \mathcal{I}_M'}) - \mathcal{F}(X) \right\|^2 \right]}{a(\sigma)/(M')^j} - 1
    }.
\end{eqnarray*}
Since we chose $m=\floor{c\log M}$ and $M' = \lfloor M/m\rfloor$ for a constant $c$, we deduce that the first term in the right-hand side vanishes as $M'\to \infty$ (which implies $M\to \infty$) provided the constant $c$ is large enough (we need $c\geq c_0 = 2/\gamma$ for the 1D case and $c\geq c_0 = 4/\gamma$ for the 2D case).
The second term in the right-hand side of the above inequality also vanishes by assumption, since the estimator $\widehat{F}(\cdot)$ has convergence rate $a(\sigma)$ for the i.i.d.\ measurements $\{ Y_k\}_{k\in \mathcal{I}_M'}$ (see Definition \ref{def:convergenceRate1D}).
The first statement of the theorem then follows.

\textbf{Part (ii):}
Recall that ${Z}^{(M)}:= \{ {Z}_k \}_{k\in \mathcal{I}_M}$ are given by \eqref{MRA def proof}, where ${g}^{(M)}:=\{ {g}_k \}_{k\in \mathcal{I}_M}$ is a random field over $G$ satisfying the mixing property \eqref{eq:mixing-property}, and that ${Y}^{(M)}:= \{ Y_k \}_{k\in \mathcal{I}_M}$ are given by \eqref{MRA def proof}, where $\tilde{g}^{(M)}:=\{ \tilde{g}_k \}_{k\in \mathcal{I}_M}$ are i.i.d.\ from the probability distribution $\pi$ over $G$.
Note that, differently to Part (i), we do not consider here a subset of indices.

The proof consists in proving that
\begin{equation}
\label{limsup ineq proof}
\limsup_{M\to \infty} |\mathcal{I}_M| \left| \mse \left( \widehat{F}(Z^{(M)}) \right) - \mse \left( \widehat{F}(Y^{(M)}) \right)  \right| \leq \| F\|_\infty K_1 + K_2,
\end{equation}
for some constants $K_1,K_2>0$,
which, in view of the convergence rate of $\widehat{F} (Y^{(M)})$ and the assumption $a(\sigma) \geq \tau>0$, implies
$$
\limsup_{M\to \infty} \dfrac{\mse \left( \widehat{F}(Z^{(M)}) \right)}{\frac{a(\sigma)}{|\mathcal{I}_M|}} \leq \dfrac{\| F\|_\infty K_1 + K_2}{a(\sigma)} +  \limsup_{M\to \infty} \dfrac{\mse \left( \widehat{F}(Y^{(M)}) \right)}{\frac{a(\sigma)}{|\mathcal{I}_M|}} \leq \dfrac{\| F\|_\infty K_1 + K_2}{\tau} + 1,
$$
and the conclusion follows.

The proof of \eqref{limsup ineq proof} is done in four steps. First we use the bias-variance decomposition to reduce the proof of \eqref{limsup ineq proof} to that of estimating the bias and the variance of $\widehat{F}(Z^{(M)})$. In steps 2 and 3, we estimate the bias and the variance respectively, and in step 4, we assemble everything together to conclude the proof.

\textit{Step 1: Bias-variance decomposition.}
For the i.i.d.\ measurements ${Y}^{(M)}$, we have the standard bias-variance decomposition
\[
\mse \left( \widehat{F}({Y}^{(M)}) \right) = \mathbb{E} \left[\left\|\widehat{F}({Y}^{(M)})- \mathcal{F}(X)\right\|^2\right] = \bias(\widehat{F}({Y}^{(M)}))^2+\Var(\widehat{F}({Y}^{(M)})),\]
where the bias of the estimator is defined as $\bias(\widehat{F}(Y^{(M)}))=\norm{\mathbb{E} [\widehat{F}(Y^{(M)})]-\mathcal{F}(X)}$. 
Since $\widehat{F} ({Y}^{(M)})$ is an i.i.d.\ average, we have
$$
\bias \left(\widehat{F}({Y}^{(M)})\right)=\bias \left({F}({Y}_{1})\right)
\quad \text{and} \quad 
\Var\left( \widehat{F}({Y}^{(M)})\right) =\dfrac{1}{|\mathcal{I}_M|} \Var\left({F}({Y}_1)\right).
$$
We assumed that the MSE with i.i.d.\ data vanishes (for fixed $\sigma$) as $M\to \infty$, which implies that the bias is zero with i.i.d.\ data. This implies in particular that $\mathbb{E} \left[ F (Y_1) \right] = \mathcal{F}(X)$. 

Using the bias-variance decomposition for estimators $\widehat{F} (Z^{(M)})$ and $\widehat{F}(Y^{(M)})$, we obtain
\begin{align*}
    & \left| \mse \left( \widehat{F}(Z^{(M)}) \right) - \mse \left( \widehat{F}(Y^{(M)}) \right)  \right| = \\
    & \qquad \qquad   =\left| \bias(\widehat{F}({Z}^{(M)}))^2 - \bias(\widehat{F}({Y}^{(M)}))^2 + \Var(\widehat{F}({Z}^{(M)})) - \Var(\widehat{F}({Y}^{(M)})) \right| \\
     & \qquad \qquad \leq  \left| \bias(\widehat{F}({Z}^{(M)}))^2 - \bias(\widehat{F}({Y}^{(M)}))^2 \right| + \left| \Var(\widehat{F}({Z}^{(M)})) - \Var(\widehat{F}({Y}^{(M)})) \right| \\
     & \qquad \qquad = \bias(\widehat{F}({Z}^{(M)}))^2 + \left| \Var(\widehat{F}({Z}^{(M)})) - \Var(\widehat{F}({Y}^{(M)})) \right|
\end{align*}

\textit{Step 2: Bias estimate.}
Now, for the dependent data $Z^{(M)}$, we can estimate the bias of the estimator $\widehat{F} (Z^{(M)})$ as
\begin{align}
\bias \left( \widehat{F} (Z^{(M)}) \right) = \left\| \dfrac{1}{|\mathcal{I}_M|} \sum_{k\in \mathcal{I}_M}\left( \mathbb{E} \left[ F(Z_k) \right] - \mathcal{F} (X) \right) \right\| 
 \leq \dfrac{1}{|\mathcal{I}_M|} \sum_{k\in \mathcal{I}_M}\left\| \mathbb{E} \left[ F(Z_k) \right] - \mathbb{E} \left[ F(Y_1) \right]   \right\|.
 \label{bias estimate}
\end{align}
Since each $Z_k$ has the same probability as $Y_1$ up to the distribution of the group element $g_k$, we have
$$
\mathbb{E} \left[ F(Z_k) \mid g_k = \eta \right] = \mathbb{E} \left[ F(Y_1) \mid \tilde{g}_1 = \eta \right], \qquad \forall \eta \in G,
$$
and using the mixing property \eqref{eq:mixing-property}, we can write
\begin{align*}
\left\| \mathbb{E} \left[ F(Z_k) \right] - \mathbb{E} \left[ F(Y_1) \right]   \right\| & = 
\left\| \sum_{\eta \in G}  \mathbb{P} (g_k = \eta) \mathbb{E} \left[ F(Z_k) \mid g_k = \eta \right] - \sum_{\eta \in G} \pi (\eta) \mathbb{E} \left[ F(Y_1) \mid g_1 = \eta \right]  \right\| \\
& \leq  \sum_{\eta \in G} \left| \P (g_k = \eta) - \pi (\eta)  \right| \left\| \mathbb{E} \left[ F(Y_1) \mid \tilde{g}_1 = \eta \right] \right\| \\
& \leq C \exp \left( -\gamma \dist (\{ k \}, \mathcal{I}_M^c) \right),
\end{align*}
for some constant $C$. Due to the exponential decay of $\exp \left( -\gamma \dist (\{ k \}, \mathcal{I}_M^c) \right)$ as $M\to \infty$, we have 
$$
\lim_{M\to \infty} \sum_{k\in\mathcal{I}_M} C \exp \left( -\gamma \dist (\{ k \}, \mathcal{I}_M^c) \right) = K_1,
$$ 
for some $K_1>0$.
Hence, \eqref{bias estimate} implies
\begin{equation}
\label{limsup bias}
\limsup_{M\to \infty} |\mathcal{I}_M| \bias\left(\widehat{F}(Z^{(M)})\right) \leq  K_1.
\end{equation}
It remains to control the variance of $\widehat{F}(Z^{(M)})$.

\textit{Step 3: Variance estimate.}
We can write
\begin{align}
\left| \Var\left(\widehat{F}(Z^{(M)})\right) - \Var \left( \widehat{F}(Y^{(M)}) \right) \right| &
= \left| \dfrac{1}{|\mathcal{I}_M|^2}\sum_{(i,j)\in \mathcal{I}_M^2} \Cov\brackets[\Big]{F(Z_i),F(Z_j)} -  \Var \left( \widehat{F}(Y^{(M)}) \right) \right|
\nonumber \\
&\leq \underbrace{ \left| \dfrac{1}{|\mathcal{I}_M|^2} \sum_{i\in \mathcal{I}_M} \Var \left( F(Z_i) \right)  - \Var \left( \widehat{F}(Y^{(M)}) \right) \right|}_{I} \nonumber\\
& \quad + \underbrace{\dfrac{1}{|\mathcal{I}_M|^2}\sum_{\substack{(i,j)\in \mathcal{I}_M^2 \\i\neq j}} \left| \Cov\left( F(Z_i),F(Z_j) \right) \right|}_{II}.
\label{var X hat Y}
\end{align}
\textit{Step 3.1: Controlling term I.}
Since ${Y}^{(M)} = \{ {Y}_i \}_{1\leq i\leq M^d}$ are i.i.d., we have
\begin{align*}
\left| 
 \dfrac{1}{|\mathcal{I}_M|^2} \sum_{i\in \mathcal{I}_M} \Var \left( F(Z_i) \right) - \Var \left(\widehat{F}({Y}^{(M)}) \right)
\right|  = & 
 \left| 
 \dfrac{1}{|\mathcal{I}_M|^2} \sum_{i\in \mathcal{I}_M}\left( \Var \left( F(Z_i) \right) - \Var({F}({Y}_1)
\right) \right|\\
\leq & 
 \dfrac{1}{|\mathcal{I}_M|^2} \sum_{i\in \mathcal{I}_M}\left| \Var \left( F(Z_i) \right) - \Var({F}({Y}_{1}))
\right|
\end{align*}
Due to the mixing property \eqref{eq:mixing-property}, $|\Var (F(Z_i)) - \Var (F(Y_1))| \to 0$ as the distance of the index $i$ from the boundary of $\mathcal{I}_M$ goes to infinity, which implies that
$$
\lim_{M\to \infty}\dfrac{1}{|\mathcal{I}_M|} \sum_{i\in \mathcal{I}_M}\left| \Var \left( F(Z_i) \right) - \Var({F}({Y}_{1}))
\right| = 0.
$$
Hence,
\begin{align}
    \label{lim Var Y}
\lim_{M\to\infty} |\mathcal{I}_M| \left| 
 \dfrac{1}{|\mathcal{I}_M|^2} \sum_{i\in \mathcal{I}_M} \Var \left( F(Z_i) \right) - \Var \left(\widehat{F}({Y}^{(M)}) \right)
\right|= & 0.
\end{align}

\textit{Step 3.2: Controlling term II.}
Next, we prove that, for any $(i,j)\in \mathcal{I}_M^2$ with $i\neq j$, it holds that
\begin{equation}
\label{E(Z_i Z_j) estimate}
\left| \mathbb{E} \left[ F(Z_i) F(Z_j) \right] -  \mathbb{E} \left[ F(Y_1) \right]^2 \right| \leq 
C \exp \left(-\gamma \min \{ |i-j|, \,  \dist (\{ i,j\}, \mathcal{I}_M^c) \} \right).
\end{equation}
for some constant $C>0$, and for $\gamma$ as in Assumption~\ref{def:exponential-mixing}. 

First, we note that since $Y_i$ and $Y_j$ are i.i.d., we have $\mathbb{E}\left[ F(Y_i)F({Y_j})\right] = \mathbb{E} [F(Y_1)]^2$, for all $i\neq j$.
Moreover, since the distribution of $Z_i, Z_j$ and $Y_i, Y_j$ are equal up to the distribution of the associated group actions, we have that
$$
\mathbb{E} \left[ F(Z_i)F(Z_j) | g_i = \eta_i, g_j = \eta_j \right] =
\mathbb{E} \left[ F(Y_i)F(Y_j) | \tilde{g}_i = \eta_i, \tilde{g}_j = \eta_j \right],
\qquad \forall \eta_i, \eta_j\in G.
$$
Therefore, we have that
\begin{align}
\mathbb{E} \left[ F(Z_i) F(Z_j) \right] & = \mathbb{E} \left[ F(Z_i) F(Z_j) -  F(Y_i) F(Y_j) \right] +   \mathbb{E} \left[ F(Y_1) \right]^2  \nonumber  \\ 
&=\sum_{\eta_i, \eta_j} \big( \mathbb{P} \left[ g_i=\eta_i, g_j = \eta_j \right] - \pi (\eta_i) \pi (\eta_j) \big) \mathbb{E}\left[ F(Z_i)F(Z_j) | g_i = \eta_i, g_j = \eta_j \right] \nonumber \\
& \quad + \mathbb{E} \left[ F(Y_1) \right]^2 . \label{expectation ZiZj}
\end{align}
Now, we can use the formula for conditional probabilities and the mixing property \eqref{eq:mixing-property} to obtain
\begin{align*}
    \left| \P \left[ g_i = \eta_i , g_j = \eta_j \right] - \pi (\eta_i) \pi (\eta_j) \right| &= 
    \left|  \P \left[ g_i = \eta_i \mid g_j = \eta_j \right] \P \left[ g_j = \eta_j \right] - \pi (\eta_i) \pi (\eta_j) \right| \\
    &\leq \P (g_j = \eta_j) \left| \P \left[ g_i = \eta_i \mid g_j = \eta_j \right] - \pi (\eta_i) \right| \\
    & \quad + \pi (\eta_i) \left|  \P \left[ g_j = \eta_j \right] - \pi (\eta_j) \right| \\
    & \leq 2 c_1 \exp \left(-\gamma \min \{ |i-j|, \,  \dist (\{ i,j\}, \mathcal{I}_M^c) \} \right)
\end{align*}
Plugging this estimate in \eqref{expectation ZiZj}, we obtain \eqref{E(Z_i Z_j) estimate}.
Similarly, one can prove that
\begin{equation}
\label{E(Z_i) E(Z_j) estimate}
\left| \mathbb{E} \left[ F(Z_i)\right] \mathbb{E} \left[ F(Z_j) \right] -  \mathbb{E} \left[ F(Y_1) \right]^2 \right| \leq 
C \exp \left(-\gamma \min \{ |i-j|, \,  \dist (\{ i,j\}, \mathcal{I}_M^c) \} \right),
\end{equation}
for all $i\neq j$.

Using \eqref{E(Z_i Z_j) estimate} and \eqref{E(Z_i) E(Z_j) estimate}, we can estimate $\Cov \left( F(Z_i), F(Z_j)\right)$ as
\begin{align}
    \left| \Cov \left( F(Z_i), F(Z_j)\right) \right| & = \left| \mathbb{E} \left[ F(Z_i) F(Z_j)\right] - \mathbb{E}\left[ F(Z_i) \right] \mathbb{E} \left[ F(Z_j) \right] \right| \nonumber  \\
    &  \leq  \left| \mathbb{E} \left[ F(Z_i) F(Z_j) \right] -  \mathbb{E} \left[ F(Y_1) \right]^2 \right| \nonumber  \\
    & \quad + \left| \mathbb{E} \left[ F(Y_1)\right]^2 - \mathbb{E}\left[ F(Z_i) \right] \mathbb{E} \left[ F( Z_j) \right] \right| \nonumber  \\
    &\quad \leq 2 C \exp \left( - \gamma \min \{ d(i,j), \, \dist (\{i,j\}, \mathcal{I}_M^c) \} \right).   
    \label{covariance estimate 0}
\end{align}
Due to the exponential decay with respect to the distance, for every fixed $i_0\in \mathcal{I}_M$, we have
$$
\sum_{\substack{j\in \mathcal{I}_M \\ j\neq i_0}} 2 C \exp \left( - \gamma \min \{ d(i_0,j), \, \dist (\{i_0,j\}, \mathcal{I}_M^c) \} \right) \leq K_2,
$$
for some $K_2>0$ independent of $M$. Hence, from \eqref{covariance estimate 0} we obtain
\begin{equation}
    \label{covariance estimate} 
    \dfrac{1}{|\mathcal{I}_M|^2} \sum_{\substack{(i,j)\in \mathcal{I}_M^2 \\ i\neq j}} |\Cov (F(Z_i), F(Z_j)) | \leq \dfrac{1}{|\mathcal{I}_M|^2}\sum_{i\in \mathcal{I}_M} K_2 = \dfrac{K_2}{|\mathcal{I}_M|}.
\end{equation}

\textit{Step 4: Conclusion.}
Finally, combining \eqref{lim Var Y} and \eqref{covariance estimate} with \eqref{var X hat Y}, we obtain
$$
\limsup_{M\to \infty} |\mathcal{I}_M| \left| \Var \left( \widehat{F} (Z^{(M)}) \right) - \Var \left( \widehat{F}( Y^{(M)} ) \right) \right| \leq K_2.
$$
The fact that $\bias (\widehat{F}(Y^{(M)})) = 0$ together with \eqref{limsup bias} imply that
$$
\limsup_{M\to \infty} |\mathcal{I}_M| \left| \bias (\widehat{F}(Z^{(M)})) - \bias (\widehat{F}(Y^{(M)}))  \right| \leq K_1.
$$
Hence, using the bias variance decomposition \
$$
\mse \left( \widehat{F} (Z^{(M)}) \right) = \bias \left( \widehat{F} (Z^{(M)}) \right)^2 + \Var \left(   \widehat{F} (Z^{(M)}) \right),
$$
we obtain
$$
\limsup_{M\to \infty} |\mathcal{I}_M| \left| \mse \left( \widehat{F}(Z^{(M)}) \right) - \mse \left( \widehat{F}(Y^{(M)}) \right)  \right| \leq \| F\|_\infty K_1 + K_2,
$$
which, in view of the convergence rate of $\widehat{F} (Y^{(M)})$ and the assumption $a(\sigma) \geq \tau>0$, implies
$$
\limsup_{M\to \infty} \dfrac{\mse \left( \widehat{F}(Z^{(M)}) \right)}{\frac{a(\sigma)}{|\mathcal{I}_M|}} \leq \dfrac{\| F\|_\infty K_1 + K_2}{a(\sigma)} +  \limsup_{M\to \infty} \dfrac{\mse \left( \widehat{F}(Y^{(M)}) \right)}{\frac{a(\sigma)}{|\mathcal{I}_M|}} \leq \dfrac{\| F\|_\infty K_1 + K_2}{\tau} + 1,
$$
and the conclusion follows.
\hfill $\square$

\end{appendix}

\end{document}